\else \usepackage{latexsym}\fi
\newcommand{\bhline}[1]{\noalign{\hrule height #1}}
\newcommand\black{\ensuremath{\blacktriangleright}}
\newcommand\white{\ensuremath{\vartriangleright}}
\newif\ifamsfontsloaded
  \newcommand\whbl{\white\kern-.1em--\kern-.1em\black}
  \newcommand\blwh{\black\kern-.1em--\kern-.1em\white}
  \newcommand\blbl{\black\kern-.1em--\kern-.1em\black}
  \newcommand\whwh{\white\kern-.1em--\kern-.1em\white}
       \newcommand\whbl{\white\kern-.125em--\kern-.125em\black}%
       \newcommand\blwh{\black\kern-.125em--\kern-.125em\white}%
       \newcommand\blbl{\black\kern-.125em--\kern-.125em\black}%
       \newcommand\whwh{\white\kern-.125em--\kern-.125em\white}}
\newdimen\tbaselineshift
\newif\ifcommentson\commentsonfalse
\newcommand{\commentAL}[1]{\begin{center} \parbox{.9\textwidth}{\textbf{\textcolor{black}{Comment AL.}} \textcolor{red}{#1 }}\end{center}}
\newcommand{\commentCP}[1]{\begin{center} \parbox{.9\textwidth}{\textbf{\textcolor{black}{Comment CP.}} \textcolor{red}{#1 }}\end{center}}
\newcommand{\commentFB}[1]{\begin{center} \parbox{.9\textwidth}{\textbf{\textcolor{black}{Comment FB.}} \textcolor{red}{#1 }}\end{center}}
\newcommand{\commentYK}[1]{\begin{center} \parbox{.9\textwidth}{\textbf{\textcolor{black}{Comment YK.}} \textcolor{red}{#1} }\end{center}}
\newcommand{\replyAL}[1]{\begin{center} \parbox{.8\textwidth}{\textbf{Reply AL.} \textcolor{blue}{#1} }\end{center}}
\newcommand{\replyCP}[1]{\begin{center} \parbox{.8\textwidth}{\textbf{Reply CP.} \textcolor{blue}{#1} }\end{center}}
\newcommand{\replyFB}[1]{\begin{center} \parbox{.8\textwidth}{\textbf{Reply FB.} \textcolor{blue}{#1} }\end{center}}
\newcommand{\replyYK}[1]{\begin{center} \parbox{.8\textwidth}{\textbf{Reply YK.} \textcolor{blue}{#1} }\end{center}}
\newcommand{\commentA}[1]{\marginpar{\footnotesize \color{red} {\bf A:} \textsf{\scriptsize #1}}}
\newcommand{\commentC}[1]{\marginpar{\footnotesize \color{red} {\bf C:} \textsf{\scriptsize #1}}}
\newcommand{\commentF}[1]{\marginpar{\footnotesize \color{red} {\bf F:} \textsf{\scriptsize #1}}}
\newcommand{\commentY}[1]{\marginpar{\footnotesize \color{red} {\bf Y:} \textsf{\scriptsize #1}}}
\newcommand{\replyA}[1]{\marginpar{\footnotesize \color{red} {\bf A:} \textsf{\scriptsize #1}}}
\newcommand{\replyC}[1]{\marginpar{\footnotesize \color{red} {\bf C:} \textsf{\scriptsize #1}}}
\newcommand{\replyF}[1]{\marginpar{\footnotesize \color{red} {\bf F:} \textsf{\scriptsize #1}}}
\newcommand{\replyY}[1]{\marginpar{\footnotesize \color{red} {\bf Y:} \textsf{\scriptsize #1}}}
\newcommand{\commentAL}[1]{}
\newcommand{\commentCP}[1]{}
\newcommand{\commentFB}[1]{}
\newcommand{\commentYK}[1]{}
\newcommand{\replyAL}[1]{}
\newcommand{\replyCP}[1]{}
\newcommand{\replyFB}[1]{}
\newcommand{\replyYK}[1]{}
\newcommand{\commentA}[1]{}
\newcommand{\commentC}[1]{}
\newcommand{\commentF}[1]{}
\newcommand{\commentY}[1]{}
\newcommand{\replyA}[1]{}
\newcommand{\replyC}[1]{}
\newcommand{\replyF}[1]{}
\newcommand{\replyY}[1]{}
\newtheorem{theorem}{Theorem}[section]
\newtheorem{lemma}[theorem]{Lemma}
\newtheorem{proposition}[theorem]{Proposition}
\newtheorem{corollary}[theorem]{Corollary}
\newcommand{\toolname}{HyLeak\xspace}
\newcommand{\LeakWatch}[0]{\mbox{\textsf{LeakWatch}}}
\newcommand{\leakiEst}[0]{\mbox{\textsf{leakiEst}}}
\newcommand{\eqdef}{\ensuremath{\stackrel{\mathrm{def}}{=}}}
\newcommand{\Expect}[1]{\mathbb{E}\!\left[#1\right]}
\newcommand{\Variance}[1]{\mathbb{V}\!\left[#1\right]}
\newcommand{\set}[1]{\{#1\}}
\newcommand{\Cov}[1]{\mathit{Cov}\!\left[#1\right]}
\renewcommand{\O}[0]{\mathcal{O}}
\newcommand{\random}[0]{\textsf{random}}
\newcommand{\randombit}[0]{\textsf{randombit}}
\newcommand{\xor}[0]{\mathbin{\textsf{xor}}}
\newcommand{\Sys}{\mathcal{S}}
\newcommand{\TP}{\mathbf{P}}
\newcommand{\tr}{\mathit{tr}}
\newcommand{\I}{\mathcal{I}}
\newcommand{\J}{\mathcal{J}}
\newcommand{\II}{\mathcal{I}^{\star}}
\newcommand{\pe}{\mathit{pe}}
\newcommand{\X}{\mathcal{X}}
\newcommand{\Y}{\mathcal{Y}}
\newcommand{\Z}{\mathcal{Z}}
\newcommand{\Di}{\mathit{D_i}}
\newcommand{\Dxi}{\mathit{D_{\!Xi}}}
\newcommand{\Dyi}{\mathit{D_{\!Yi}}}
\newcommand{\Dhi}{\hat{\mathit{D_i}}}
\newcommand{\Rh}{\hat{R}}
\newcommand{\fxy}{\mathit{f_{xy}}}
\newcommand{\fy}{\mathit{f_{y}}}
\newcommand{\Ly}[1]{L_{#1\cdot y}}
\newcommand{\Mixy}{M_{ixy}}
\newcommand{\ph}{\hat{P}}
\newcommand{\Qxy}{q_{xy}}
\newcommand{\Ky}[1]{K_{\!#1\cdot y}}
\newcommand{\Kiy}{K_{\!i\cdot y}}
\newcommand{\Kioy}{\overline{K_{\!i\cdot y}}}
\newcommand{\Kjy}{K_{\!j\cdot y}}
\newcommand{\Kjoy}{\overline{K_{\!j\cdot y}}}
\newcommand{\BKxy}{\bm{K}_{\!xy}}
\newcommand{\BKoxy}{\overline{\BKxy}}
\newcommand{\BKy}{\bm{K}_{\!y}}
\newcommand{\BKoy}{\overline{\BKy}}
\newcommand{\Kxy}[1]{K_{\!#1xy}}
\newcommand{\Koxy}[1]{\overline{\Kxy{#1}}}
\newcommand{\Kixy}{\Kxy{i}}
\newcommand{\Kioxy}{\Koxy{i}}
\newcommand{\Kjxy}{\Kxy{j}}
\newcommand{\Kjoxy}{\Koxy{j}}
\newcommand{\Bxy}[1]{B_{#1xy}}
\newcommand{\Bixy}{\Bxy{i}}
\newcommand{\By}[1]{B_{#1\cdot y}}
\newcommand{\Biy}{\By{i}}
\newcommand{\Liy}{L_{i\cdot y}}
\newcommand{\Ljy}{L_{j\cdot y}}
\newcommand{\Lioy}{\overline{L_{i\cdot y}}}
\newcommand{\Ljoy}{\overline{L_{j\cdot y}}}
\newcommand{\D}{\mathcal{D}}
\newcommand{\B}{\mathcal{B}}
\lstdefinelanguage{quail}{
  morekeywords={assign, if, else, fi, return, observable, public, secret, random, int1, int3, randombit, then, goto, const, array, int32, while, do, od, for, in, of},
  morecomment=[l]//
}
\lstdefinestyle{inline}{
    mathescape=false,
    breaklines=true,
    keywordstyle=,            
    keywordstyle=[2],
    extendedchars=true,
    basicstyle=\ttfamily\small
}
\definecolor{pblue}{rgb}{0.13,0.13,1}
\definecolor{pgreen}{rgb}{0,0.5,0}
\definecolor{pred}{rgb}{0.9,0,0}
\definecolor{pgrey}{rgb}{0.46,0.45,0.48}
\tikzstyle{prosumer}=[fill=blue!30,draw=blue,thick]
\tikzstyle{aggregator}=[fill=orange!30,draw=orange]
\title[Formal Aspects of Computing]
      {Hybrid Statistical Estimation of Mutual Information and its Application to Information Flow
}
\author[Fabrizio Biondi, Yusuke Kawamoto, Axel Legay, Louis-Marie Traonouez]
    {Fabrizio Biondi$^1$, Yusuke Kawamoto$^2$, Axel Legay$^3$, Louis-Marie Traonouez$^3$\\
     $^1$CentraleSup\'elec Rennes, France\\
     $^2$AIST, Japan \\
     $^3$Inria, France}
\begin{document}
\label{firstpage}

\makecorrespond

\maketitle

\begin{abstract}
Analysis of a probabilistic system often requires to learn the joint probability distribution of its random variables.
The computation of the exact distribution is usually an exhaustive \emph{precise analysis} on all executions of the system. 
To avoid the high computational cost of such an exhaustive search, \emph{statistical analysis} has been studied to efficiently obtain approximate estimates by analyzing only a small but representative subset of the system's behavior.
In this paper we propose a \emph{hybrid statistical estimation method} that combines precise and statistical analyses to estimate mutual information, Shannon entropy, and conditional entropy, together with their confidence intervals.
We show how to combine the analyses on different components of a discrete system with different accuracy to obtain an estimate for the whole system.
The new method performs weighted statistical analysis with different sample sizes over different components and dynamically finds their optimal sample sizes.
Moreover, it can reduce sample sizes by using prior knowledge about systems and a new \emph{abstraction-then-sampling} technique based on qualitative analysis.
To apply the method to the source code of a system, we show how to decompose the code into components and to determine the analysis method for each component by overviewing the implementation of those techniques in the \toolname{} tool.
We demonstrate with case studies that the new method outperforms the state of the art in quantifying information leakage.
\end{abstract}

\begin{keywords}
Mutual information;
Statistical estimation;
Quantitative information flow;
Hybrid method;
Confidence interval;
Statistical model checking
\end{keywords}

\vspace{-1ex}

\section{Introduction}
\label{sec:intro}

In modeling and analyzing software and hardware systems, the statistical approach is often useful to evaluate quantitative aspects of the behaviors of the systems.
In particular, probabilistic systems with complicated internal structures can be approximately and efficiently modeled and analyzed.
For instance, statistical model checking has widely been used to verify quantitative properties of many kinds of probabilistic systems~\cite{DBLP:conf/rv/LegayDB10}.

The \emph{statistical analysis} of a probabilistic system is usually considered as a black-box testing approach in which the analyst does not require prior knowledge of the internal structure of the system.
The analyst runs the system many times and records the execution traces to construct an approximate model of the system.
Even when the formal specification or precise model of the system is not provided to the analyst, statistical analysis can be directly applied to the system if the analyst can execute the black-box implementation.
Due to this random sampling of the systems, statistical analysis provides only approximate estimates.
However, it can evaluate the precision and accuracy of the analysis for instance by providing the confidence intervals of the estimated values.

One of the important challenges in statistical analysis is to estimate entropy-based properties in probabilistic systems.
For example, statistical methods~\cite{DBLP:conf/tacas/ChatzikokolakisCG10,DBLP:conf/cav/ChothiaKN13,DBLP:conf/csfw/ChothiaKNP13,DBLP:conf/esorics/ChothiaKN14,DBLP:conf/isw/BorealeP14} have been studied for \emph{quantitative information flow analysis}
~\cite{DBLP:journals/tcs/ClarkHM01,DBLP:conf/ccs/KopfB07,DBLP:conf/popl/Malacaria07,DBLP:journals/iandc/ChatzikokolakisPP08},
which estimates an entropy-based property to quantify the leakage of confidential information in a system.
More specifically, the analysis estimates \emph{mutual information} or other properties between two random variables on the secrets and on the observable outputs in the system to measure the amount of information that is inferable about the secret by observing the output. 
The main technical difficulties in the estimation of entropy-based properties are:
\begin{enumerate}
\item to efficiently compute large matrices that represent probability distributions, and
\item to provide a statistical method for correcting the bias of the estimate and computing a confidence interval to evaluate the accuracy of the estimation.
\end{enumerate}

\begin{center}
\begin{figure*}
  \centering
  \subfloat[A probabilistic program composed of 3 components.]{\includegraphics[width=0.45\textwidth]{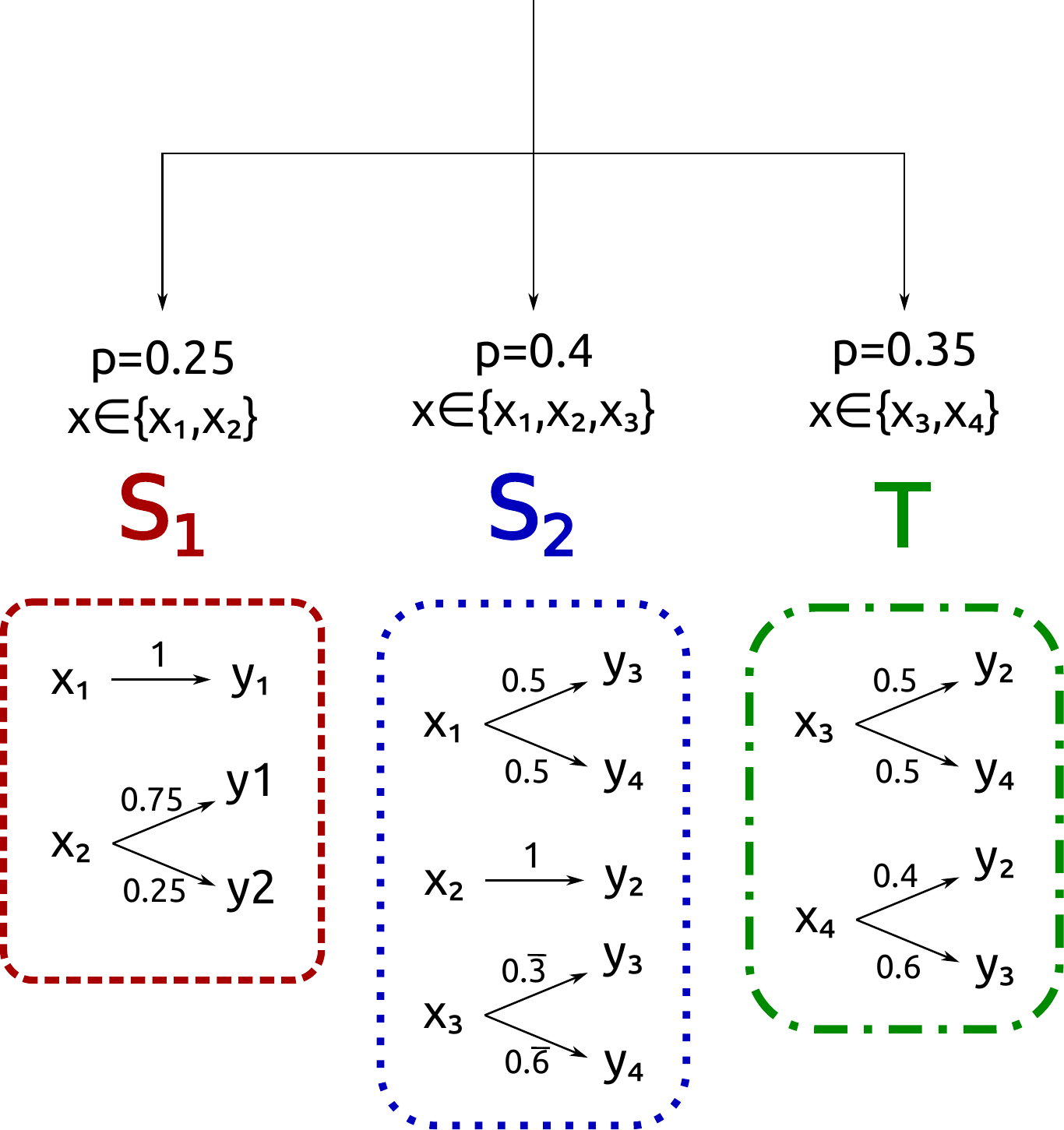}\label{fig:joint_flow}}\hfill
  \subfloat[Joint distribution $P_{XY}$ composed of 3 components.]{\includegraphics[width=0.45\textwidth]{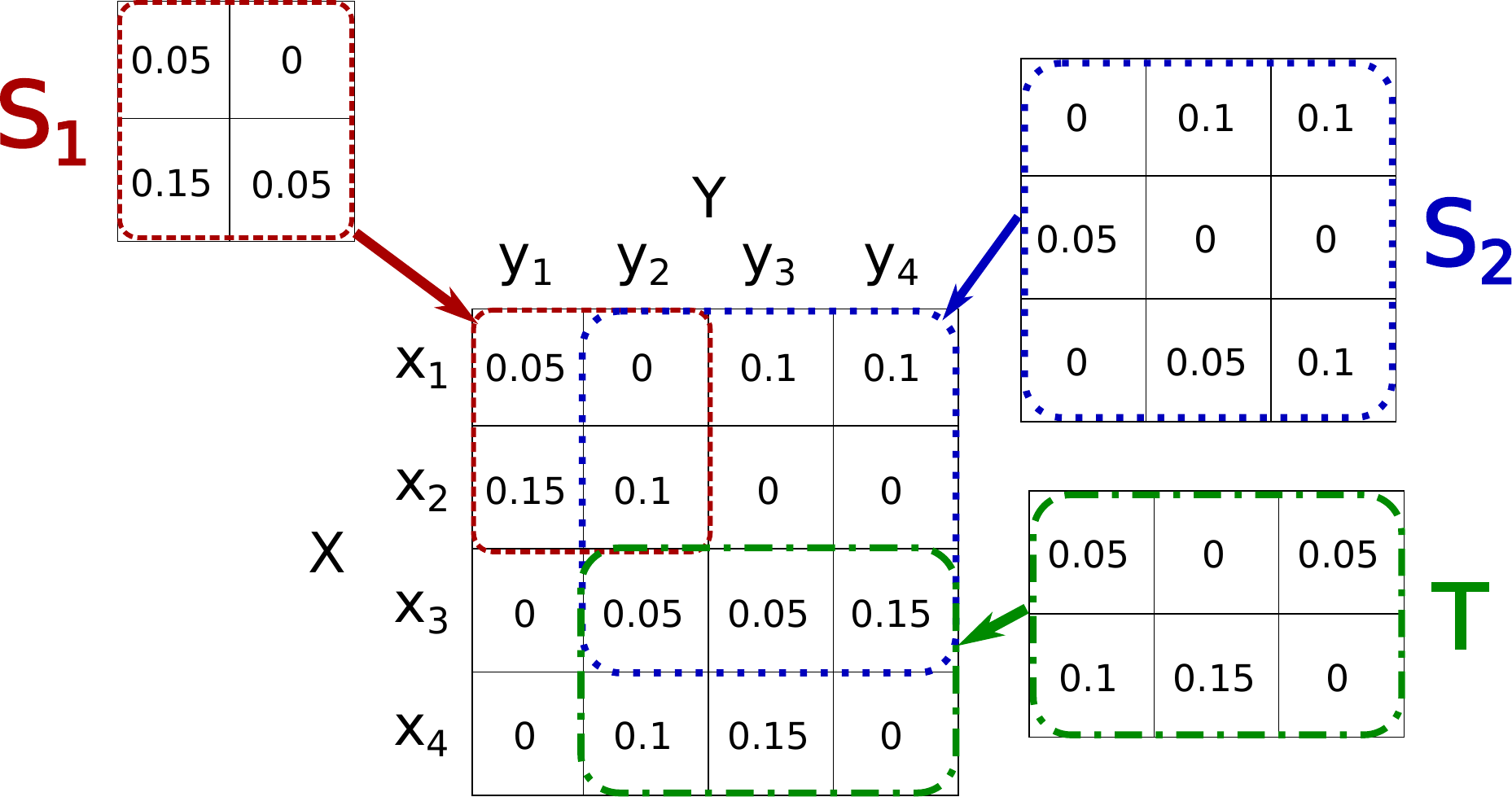}\label{fig:joint}}
  \caption{Decomposition of a probabilistic program with input $x\in\{x_1,x_2,x_3,x_4\}$ and output $y\in\{y_1,y_2,y_3,y_4\}$ into components $S_1$, $S_2$, and $T$ for compositional analysis.
These components have a probability of being visited of $0.25$, $0.4$, and $0.35$ respectively, accounting for both input distribution and internal randomness.
Each component can be visited on specific inputs, and probabilistically produces an output according to the input value. 
For instance, the component $S_2$ is chosen with the probability $0.4$ and given input $x_1$, it outputs $y_3$ or $y_4$ with the equal probability $0.5$.
The division into components can be used to compositionally produce a joint distribution $P_{XY}$ on the input-output behavior of the program.}  \label{fig:joint_prog}
\end{figure*}
\end{center}

To overcome these difficulties, we propose a method for statistically estimating mutual information, one of the most popular entropy-based properties,
in discrete systems.
The new method, called \emph{hybrid statistical estimation method}, integrates black-box statistical analysis and white-box \emph{precise analysis}, exploiting the advantages of both.
More specifically, this method employs some prior knowledge on the system and performs precise analysis (e.g., static analysis of the source code or specification) on some components of the system.
Since precise analysis computes the exact sub-probability distributions of the components, the hybrid method using precise analysis is more accurate than statistical analysis alone.

Moreover, the new method can combine multiple statistical analyses on different components of the system to improve the accuracy and efficiency of the estimation.
This is based on our new theoretical results that extend and generalize previous work~\cite{mod1,brill1,DBLP:conf/tacas/ChatzikokolakisCG10} on purely statistical estimation.
As far as we know this is the first work on a hybrid method for estimating entropy-based properties and their confidence intervals.
Note that our approach assumes that the system has discrete inputs and outputs and behaves deterministically or probabilistically, while it can also be applied to non-deterministic systems if the non-determinism has been resolved by schedulers.


To illustrate the method we propose, Fig.~\ref{fig:joint_prog} presents an example of a probabilistic program (Fig.~\ref{fig:joint_flow}) having input ranging over $X=\{x_1,x_2,x_3,x_4\}$ and output over $Y=\{y_1,y_2,y_3,y_4\}$, built up from three overlapping components $S_1$, $S_2$ and $T$, and the corresponding joint probability distribution $P_{XY}$ (Fig.~\ref{fig:joint}).
To estimate the full joint distribution $P_{XY}$, the analyst separately computes the joint sub-distribution for the component $T$ by precise analysis, estimates those for $S_1$ and $S_2$ by statistical analysis, and then combines these sub-distributions.
Since the statistical analysis is based on the random sampling of execution traces, the empirical sub-distributions for $S_1$ and $S_2$ are different from the true ones, while the sub-distribution for $T$ is exact.
From these approximate and precise sub-distributions, the proposed method can estimate the mutual information for the entire system and evaluate its accuracy by providing a confidence interval. 
Owing to the combination of different kinds of analyses (with possibly different parameters such as sample sizes),
the computation of the bias and confidence interval of the estimate is more complicated than the previous work on statistical analysis.

\subsection{Contributions}
The contributions of this paper are as follows:
\begin{itemize}
\item We propose a new method, called hybrid statistical estimation, that combines statistical and precise analysis on the estimation of mutual information (which can also be applied to Shannon entropy and conditional Shannon entropy).
Specifically, we show theoretical results on compositionally computing the bias and confidence interval of the estimate from multiple results obtained by statistical and precise analysis;
\item We present a weighted statistical analysis method with different sample sizes over different components and a method for adaptively optimizing the sample sizes by evaluating the accuracy and cost of the analysis;
\item We show how to reduce the sample sizes by using prior knowledge about systems, including an abstraction-then-sampling technique based on qualitative analysis.
In particular, we point out that the state-of-the-art statistical analysis tool \LeakWatch{}~\cite{DBLP:conf/esorics/ChothiaKN14} incorrectly computes the bias in estimation using the knowledge of the prior distribution, and explain how the proposed approach fixes this problem;
\item We show that the proposed method can be applied not only to composed systems but also to the source codes of a single system by decomposing it into components and determining the analysis method for each component;
\item We provide a practical implementation of the method in the \toolname tool~\cite{hyleak:www}, and show how the techniques in this paper can be applied to multiple benchmarks; 
\item We evaluate the quality of the estimation in this method, showing that the estimates are more accurate than statistical analysis alone for the same sample size, and that the new method outperforms the state-of-the-art statistical analysis tool \LeakWatch{};
\item We demonstrate the effectiveness of the hybrid method in case studies on the quantification of information leakage.
\end{itemize}

A preliminary version of this paper, without proofs, appeared in~\cite{DBLP:conf/fm/KawamotoBL16}. 
Also a preliminary version of the implementation description (Sections~\ref{sec:functioning} and~\ref{subsec:result-randomwalk}), without details, appeared in the tool paper describing \toolname~\cite{Biondi:2017:ATVA}.
In this paper we add the estimation of Shannon entropy (Propositions~\ref{lem:entropy:expectation},~\ref{lem:entropy:variance} and~\ref{prop:adaptive-sampling:shannon}) and that of conditional entropy (Propositions~\ref{lem:cond-entropy:expectation} and~\ref{lem:cond-entropy:variance}).
We also show the formulas for the adaptive analysis using knowledge of prior distributions (Proposition~\ref{prop:adaptive-sampling-known-prior}) and 
using the abstraction-then-sampling technique (Theorem~\ref{thm:adaptive-sampling-ATS}).
Furthermore, we provide detailed explanation on the implementation in the \toolname{} tool in Section~\ref{sec:functioning}, including how to decompose the source code of a system into components.
We also present more experimental results with details in Section~\ref{sec:shortexamples}.
Finally, we add Appendix~\ref{sec:app:proofs} to present the detailed proofs.

The rest of the paper is structured as follows.
Section~\ref{sec:background} introduces background in quantification of information and compares precise analysis with statistical analysis for the estimation of mutual information.
Section~\ref{sec:overview:hybrid} overviews our new method for estimating mutual information.
Section~\ref{sec:compositional} describes the main results of this paper: the statistical estimation of mutual information for the hybrid method, including the method for optimizing sample sizes for different components.
Section~\ref{sec:estimate-known-system} presents how to reduce sample sizes by using prior knowledge about systems, including the abstraction-then-sampling technique with qualitative analysis.
Section~\ref{sec:adaptive} shows the optimal assignment of samples to components to be samples statistically to improve the accuracy of the estimate.
Section~\ref{sec:functioning} overviews the implementation of the techniques in the \toolname tool, including how to decompose the source code of a system into components and to determine the analysis method for each component.
Section~\ref{sec:evaluation} evaluates the proposed method and illustrates its effectiveness against the state of the art, and 
Section~\ref{sec:conclusion} concludes the paper.
Detailed proofs can be found in Appendix~\ref{sec:app:proofs}.

\subsection{Related Work}\label{sec:related}

The information-theoretical approach to program security dates back to the work of Denning \cite{DBLP:journals/cacm/Denning76}
and Gray \cite{DBLP:conf/sp/Gray91}.
Clark et al.~\cite{DBLP:journals/tcs/ClarkHM01,DBLP:journals/jcs/ClarkHM07} presented techniques to automatically compute mutual information of an imperative language with loops. 
For a deterministic program, leakage can be computed from the equivalence relations on the secret induced by the possible outputs, and such relations can be automatically quantified \cite{DBLP:conf/sp/BackesKR09}.
Under- and over-approximation of leakage based on the observation of some traces have been studied for deterministic programs~\cite{DBLP:conf/pldi/McCamantE08,DBLP:conf/pldi/NewsomeMS09}.
As an approach without replying on information theory McCamant et al.~\cite{DBLP:conf/ndss/KangMPS11} developed tools implementing dynamic quantitative taint analysis techniques for security.

Fremont and Seshia \cite{DBLP:conf/smt/FremontS14} present a polynomial time algorithm to approximate the weight of traces of deterministic programs with possible application to quantitative information leakage.
Progress in randomized program analysis includes a scalable algorithm for uniform generation of sample from a distribution defined as constraints  \cite{DBLP:conf/tacas/ChakrabortyFMSV15,DBLP:conf/cp/ChakrabortyMV13}, with applications to constrained-random program verification.

The statistical approach to quantifying information leakage has been studied since the seminal work by Chatzikokolakis et al.~\cite{DBLP:conf/tacas/ChatzikokolakisCG10}. 
Chothia et al. have developed this approach in tools \leakiEst{}~\cite{DBLP:conf/cav/ChothiaKN13,leakiest:www} 
and \LeakWatch{}~\cite{DBLP:conf/esorics/ChothiaKN14,leakwatch:www}. 
The hybrid statistical method in this paper can be considered as their extension with the inclusion of component weighting and adaptive priors inspired by the 
importance sampling in statistical model checking~\cite{DBLP:conf/tacas/BarbotHP12,DBLP:conf/atva/ClarkeZ11}.
To the best of our knowledge, no prior work has applied weighted statistical analysis to the estimation of mutual information or any other leakage measures.

The idea on combining static and randomized approaches to quantitative information flow was first proposed by K\"opf and Rybalchenko~\cite{DBLP:conf/csfw/KopfR10} 
while our approach takes a different approach relying on statistical estimation to have better precision and accuracy and is general enough to deal with probabilistic systems under various prior information conditions.
In related fields, the hybrid approach combining precise and statistical analysis have been proven to be effective, for instance in concolic analysis~\cite{DBLP:conf/icse/MajumdarS07,DBLP:conf/icse/LiuCFS14}, where it is shown that input generated by hybrid techniques leads to greater code coverage than input generated by both fully random and concolic generation.
After the publication of the preliminary version~\cite{DBLP:conf/fm/KawamotoBL16} of this paper, a few papers on quantitative information flow combining symbolic and statistical approaches have been published.
Malacaria et al.~\cite{Malacaria:18:CSF} present an approach that performs Monte Carlo sampling over symbolic paths while the prior distributions are restricted to be uniform.
Sweet et al.~\cite{Sweet:18:POST} combine abstraction interpretation with sampling and concolic execution for reasoning about Bayes vulnerability.
Unlike our work, these two studies aim at giving only bounds on information leakage and do not use statistical hypothesis testing.

Our tool \toolname{} processes a simple imperative language that is an extension of the language used in the QUAIL tool version 2.0
~\cite{DBLP:conf/spin/BiondiLQ15}. 
The algorithms for precise computation of information leakage used in this paper are based on 
trace analysis \cite{DBLP:journals/tcs/BiondiLMW15}, implemented in the QUAIL 
tool \cite{quail:www,DBLP:conf/cav/BiondiLTW13,DBLP:conf/spin/BiondiLQ15}. 
As remarked above, the QUAIL tool implements only a precise calculation of leakage that examines all executions of programs.
Hence the performance of QUAIL does not scale, especially when the program performs complicated computations that yield a large number of execution traces. The performance of QUAIL as compared to \toolname{} is represented by the ``precise'' analysis approach in Section~\ref{sec:evaluation}.
Since QUAIL does not support the statistical approach or the hybrid approach, it cannot handle large problems that \toolname{} can analyze.

As remarked above, the stochastic simulation techniques implemented in \toolname have also been developed 
in the tools LeakiEst~\cite{DBLP:conf/cav/ChothiaKN13} 
(with its extension~\cite{DBLP:conf/qest/KawamotoCP14}) 
and LeakWatch~\cite{DBLP:conf/csfw/ChothiaKNP13,DBLP:conf/esorics/ChothiaKN14}.
The performance of these tools as compared to \toolname is represented by the ``statistical'' analysis approach in Section~\ref{sec:evaluation}.

The tool Moped-QLeak
~\cite{DBLP:conf/fsttcs/ChadhaMS14} computes the precise information leakage of a program by transforming it into an algebraic decision diagram (ADD).
As noted in~\cite{DBLP:conf/spin/BiondiLQ15}, this technique is efficient when the program under analysis is simple enough to be converted into an ADD, and fails otherwise even when other tools including \toolname can handle it.
In particular, there are simple examples~\cite{DBLP:conf/spin/BiondiLQ15} where Moped-QLeak fails to produce any result but that can be examined by QUAIL and LeakWatch, hence by \toolname.

Many information leakage analysis tools restricted to deterministic input programs have been released, including TEMU~\cite{DBLP:conf/pldi/NewsomeMS09}, 
squifc~\cite{DBLP:conf/ccs/PhanM14}, 
jpf-qif~\cite{DBLP:journals/sigsoft/PhanMTP12}, 
QILURA~\cite{DBLP:conf/spin/PhanMPd14}, 
nsqflow~\cite{DBLP:conf/eurosp/ValEBAH16}, and \textsc{sharpPI}~\cite{DBLP:conf/esorics/Weigl16}. Some of these tools have been proven to scale to programs of thousands of lines written in common languages like C and Java. 
Such tools are not able to compute the Shannon leakage for the scenario of adaptive attacks but only compute the min-capacity of a deterministic program for the scenario of one-try guessing attacks, which give only a coarse upper bound on the Shannon leakage.
More specifically, they compute the logarithm of the number of possible outputs of the deterministic program,
usually by using model counting on a SMT-constraint-based representation of the possible outputs, obtained by analyzing the program. 
Contrary to these tools, \toolname{} can analyze randomized programs\footnote{Some of these tools, like jpf-qif and nsqflow, present case studies on randomized protocols. However, the randomness of the programs is assumed to have the most leaking behavior. E.g., in the Dining Cryptographers this means assuming all coins produce head with probability~1.} and provides a quite precise estimation of the Shannon leakage of the program, not just a coarse upper bound. 
As far as we know, \toolname{} is the most efficient tool that has this greater scope and higher accuracy.

\section{Background}
\label{sec:background}

In this section we introduce the basic concepts used in the paper.
We first introduce some notions in information theory to quantify the amount of some information in probabilistic systems.
Then we compare two previous analysis approaches to quantifying information: precise analysis and statistical analysis.

\subsection{Quantification of Information}
\label{subsec:information-theory}

In this section we introduce some background on information theory, which we use to quantify the amount of information in a probabilistic system.
Hereafter we write $X$ and $Y$ to denote two random variables, and $\X$ and $\Y$ to denote the sets of all possible values of $X$ and $Y$, respectively.
We denote the number of elements of a set $\mathcal{A}$ by $\#\mathcal{A}$.
Given a random variable $A$ we denote by $\Expect{A}$ or by $\overline{A}$ the expected value of $A$, and by $\Variance{A}$ the variance of $A$, i.e., $\Variance{A} = \Expect{ (A - \Expect{A})^2 }$.
The logarithms in this paper are to the base $2$ and we often abbreviate $\log_2$ to $\log$.

\subsubsection{Channels}
In information theory, a \emph{channel} models the input-output relation of a system as a conditional probability distribution of outputs given inputs.
This model has also been used 
to formalize information leakage in a system that processes confidential data:
\emph{inputs} and \emph{outputs} of a channel are respectively regarded as \emph{secrets} and \emph{observables} in the system and the channel represents relationships between the secrets and observables.

A \emph{discrete channel} is a triple $(\mathcal{X},\mathcal{Y},C)$ where $\mathcal{X}$ and $\mathcal{Y}$ are two finite sets of discrete input and output values respectively and $C$ is an $\#\mathcal{X}\times\#\mathcal{Y}$ matrix where each element $C[x, y]$ represents the conditional probability of an output $y$ given an input $x$; i.e., for each $x\in\X$,\, $\sum_{y\in\mathcal{Y}}C[x,y]=1$ and $0\leq C[x,y]\leq 1$ for all $y\in\Y$.

A \emph{prior} is a probability distribution on input values $\X$.
Given a prior $P_X$ over $\X$ and a channel $C$ from $\X$ to $\Y$, the \emph{joint probability distribution} $  P_{XY}$ of $X$ and $Y$ is defined by: $P_{XY}[x, y] = P_X[x] C[x, y]$ for each $x\in\X$ and $y\in\Y$.

\subsubsection{Shannon Entropy}

We recall some information-theoretic measures as follows.
Given a prior $P_X$ on input $X$, the \emph{prior uncertainty} (before observing the system's output $Y$) is defined as
\begin{dmath*}
H(X) = \displaystyle - \sum_{x\in\X} P_X[x] \log_2 P_X[x]
\end{dmath*}
while the \emph{posterior uncertainty} (after observing the system's output $Y$) is defined as
\begin{dmath*}
H(X|Y) = \displaystyle - \sum_{y\in\Y^+} P_Y[y] \sum_{x\in\X} P_{X|Y}[x|y] \log_2 P_{X|Y}[x|y],
\end{dmath*}
where $P_Y$ is the probability distribution on the output $Y$,\, 
$\Y^+$ is the set of outputs in $\Y$ with non-zero probabilities,
and $P_{X|Y}$ is the conditional probability distribution of $X$ given $Y$:
\begin{align*}
P_Y[y]~ =\sum_{x'\in\X} P_{XY}[x', y] \hspace{4ex}
P_{X|Y}[x|y] = \frac{P_{XY}[x,y]}{P_Y[y]} \hspace{2ex} \mbox{ if } P_Y[y] \neq 0.
\end{align*}
$H(X|Y)$ is also called the \emph{conditional entropy} of $X$ given $Y$.

\subsubsection{Mutual Information}
The amount of information gained about a random variable $X$ by knowing a random variable $Y$ is defined as the difference between the uncertainty about $X$ before and after observing $Y$. 
The \emph{mutual information} $I(X; Y)$ between $X$ and $Y$ is one of the most popular measures to quantify the amount of information on $X$ gained by $Y$:
\begin{align*}
I(X; Y) &= 
\displaystyle 
\sum_{x\in {\cal X}, y\in {\cal Y}}\hspace{-1ex} P_{XY}[x, y] \log_2 \left({\displaystyle \frac{P_{XY}[x, y]}{P_X[x] P_Y[y]} }\right)
\end{align*}
where $P_Y$ is the marginal probability distribution defined as $P_Y[y] = \sum_{x\in\X} P_{XY}[x,y]$.

In the security scenario, information-theoretical measures quantify the amount of secret information leaked against some particular attacker: 
the mutual information between two random variables $X$ on the secrets and $Y$ on the observables in a system measures the information that is inferable about the secret by knowing the observable.
In this scenario mutual information, or Shannon leakage, assumes an attacker that can ask binary questions on the secret's value after observing the system while min-entropy leakage \cite{DBLP:conf/fossacs/Smith09} considers an attacker that has only one attempt to guess the secret's value.

Mutual information has been employed in many other applications including Bayesian networks \cite{Jensen:1996:IBN:546868}, telecommunications \cite{Gallager:1968:ITR:578869}, pattern recognition \cite{escolano:2009:ITC:1629682}, machine learning \cite{MacKay:2002:ITI:971143}, quantum physics \cite{Wilde:2013:QIT:2505455}, and biology \cite{citeulike:9970407}.
In this work we focus on mutual information and its application to the above security scenario.

\subsection{Computing Mutual Information in Probabilistic Systems}
\label{subsec:compare:approaches}
In this section we present two previous approaches to computing mutual information in probabilistic systems in the context of quantitative information flow.
Then we compare the two approaches to discuss their advantages and disadvantages.

In the rest of the paper
a \emph{probabilistic system} $\Sys$ is defined as a finite set of \emph{execution traces} such that each trace $\tr$ records the values of all variables in $\Sys$ and is associated with a probability $\TP_{\!\Sys}[\tr]$.
Note that $\Sys$ does not have non-deterministic transitions.
For the sake of generality we do not assume any specific constraints at this moment.

The main computational difficulty in calculating the mutual information $I(X;Y)$ between input $X$ and output $Y$ lies in the computation of the joint probability distribution $P_{XY}$ of $X$ and $Y$, 
especially when the system consists of a large number of execution traces and when the distribution $P_{XY}$ is represented as a large data structure.
In previous work this computation has been performed either by the \emph{precise} approach using program analysis techniques or by the \emph{statistical} approach using random sampling and statistics.

\subsubsection{Precise Analysis}

Precise analysis consists of analyzing all the execution traces of a system and determining for each trace $\tr$, the input $x$, output $y$, and probability $\TP_{\!\Sys}[\tr]$ by concretely or symbolically executing the system. 
The precise analysis approach in this paper follows the depth-first trace exploration technique presented by Biondi et al.~\cite{DBLP:conf/spin/BiondiLQ15}.

To obtain the exact joint probability $P_{XY}[x,y]$ for each $x\in\X$ and $y\in\Y$ in a system $\Sys$, we sum the probabilities of all execution traces of $\Sys$ that have input $x$ and output $y$, i.e.,
\begin{align*}
P_{XY}[x,y] =
\sum \Bigl\{\, \TP_{\!\Sys}[\tr] ~~\Big|~~ \mbox{$\tr\in\Sys$ has input $x$ and output $y$} \,\Bigr\}
\end{align*}
where $\TP_{\!\Sys}$ is the probability distribution over the set of all traces in $\Sys$.
This means the computation time depends on the number of traces in the system.
If the system has a very large number of traces, it is intractable for the analyst to precisely compute the joint distribution and consequently the mutual information.

In~\cite{tcs:YasuokaT14} the calculation of mutual information is shown to be computationally expensive. This computational difficulty comes from the fact that entropy-based properties are hyperproperties~\cite{ClarksonS10jcs} that are defined using all execution traces of the system and therefore cannot be verified on each single trace.
For example, when we investigate the information leakage in a system, it is insufficient to check the leakage separately for each component of the system, because the attacker may derive sensitive information by combining the outputs of different components.
More generally, the computation of entropy-based properties (such as the amount of leaked information) is not compositional, in the sense that an entropy-based property of a system is not the (weighted) sum of those of the components.

For this reason, it is inherently difficult to na\"ively combine analyses of different components of a system to compute entropy-based properties.
In fact, previous studies on the compositional approach in quantitative information flow analysis have faced certain difficulties in obtaining useful bounds on information leakage~\cite{DBLP:conf/csfw/BartheK11,DBLP:journals/iandc/EspinozaS13,DBLP:journals/corr/KawamotoG15,Kawamoto:17:LMCS}.

\subsubsection{Statistical Analysis}
Due to the complexity of precise analysis, some previous studies have focused on computing approximate values of entropy-based measures.
One of the common approaches is \emph{statistical analysis} based on Monte Carlo methods, in which approximate values are computed from repeated random sampling and their accuracy is evaluated using statistics.
Previous work on quantitative information flow has used statistical analysis to estimate mutual information~\cite{DBLP:conf/tacas/ChatzikokolakisCG10,mod1,brill1}, channel capacity~\cite{DBLP:conf/tacas/ChatzikokolakisCG10,DBLP:conf/isw/BorealeP14} and min-entropy leakage~\cite{DBLP:conf/esorics/ChothiaKN14,ChothiaKawamoto2014}.

In the statistical estimation of mutual information between two random variables $X$ and $Y$ in a probabilistic system, the analyst executes the system many times and collects the execution traces, each of which has a pair of values $(x,y)\in\X\times\Y$ corresponding to the input and output of the trace.
This set of execution traces is used to estimate the empirical joint distribution $\ph_{XY}$ of $X$ and $Y$ and then to estimate the mutual information $\hat{I}(X; Y)$.

Note that the empirical distribution $\ph_{XY}$ is different from the true distribution $P_{XY}$ and thus the estimated mutual information $\hat{I}(X; Y)$ is different from the true value $I(X; Y)$.
In fact, it is known that entropy-based measures such as mutual information and min-entropy leakage have some bias and variance that depends on the number of collected traces, the matrix size and other factors.
However, results on statistics allow us to correct the bias of the estimate and to compute the variance (and the 95\% confidence interval).
This way we can guarantee the quality of the estimation, which differentiates the statistical approach from the testing approach.

\subsubsection{Comparing the Two Analysis Methods}
The cost of the statistical analysis is proportional to the size $\#\X \times \#\Y$ of the joint distribution matrix (strictly speaking, to the number of non-zero elements in the matrix).
Therefore, this method is significantly more 
%
\begin{table}
\centering
\begin{tabular}{l l  l  }
 &{\bf Precise analysis} & {\bf Statistical analysis}
\\[0.8ex]\bhline{0.3mm}\\[-0.8ex]
{\bf Type} & White box & Black/gray box
\\[0.8ex]\cline{1-3}\\[-0.8ex]
{\bf Analyzes} & Source code & Implementation
\\[0.8ex]\cline{1-3}\\[-0.8ex]
{\bf Produces} & Exact value & Estimate \& accuracy evaluation
\\[0.8ex]\cline{1-3}\\[-0.8ex]
{\bf Reduces costs by} & Random sampling & Knowledge of code \& abstraction
\\[0.8ex]\cline{1-3}\\[-0.8ex]
{\bf Impractical for} & Large number of traces & Large channel matrices
\\[0.8ex]\cline{1-3}
\end{tabular}
\caption{Comparison of the precise and statistical analysis methods.}\label{tab:comparison}
\end{table}
%
efficient than precise analysis if the matrix is relatively small and the number of all traces is very large (for instance because the system's internal variables have a large range).

On the other hand, if the matrix is very large, the number of executions needs to be very large to obtain a reliable and small confidence interval.
In particular, for a small sample size, statistical analysis does not detect rare events, i.e., traces with a low probability that affect the result.
Therefore the precise analysis is significantly more efficient than statistical analysis if the number of all traces is relatively small and the matrix is relatively large (for instance because the system's internal variables have a small range).

The main differences between precise analysis and statistical analysis are summarized in Table~\ref{tab:comparison}.

\section{Overview of the Hybrid Statistical Estimation Method}
\label{sec:overview:hybrid}

In this section we overview a new method for estimating the mutual information between two random variables $X$ (over the inputs $\X$) and $Y$ (over the outputs $\Y$) in a system.
The method, we call \emph{hybrid statistical estimation}, integrates both precise and statistical analyses to overcome the limitations on those previous approaches (explained in Section~\ref{subsec:compare:approaches}).

In our hybrid analysis method, we first decompose a given probabilistic system $\Sys$ into mutually disjoint \emph{components}, which we will define below, and then apply different types of analysis (with possibly different parameters) on different components of the system.
More specifically, for each component, our hybrid method chooses the faster analysis between the precise and statistical analyses.
Hence the hybrid analysis of the whole system is faster than the precise analysis alone and than the statistical analysis alone, while it gives more accurate estimates than the statistical analysis alone, as shown experimentally in Section~\ref{sec:evaluation}.

To introduce the notion of components we recall that in Section~\ref{subsec:compare:approaches} a probabilistic system $\Sys$ is defined as the set of all execution traces such that each trace $\tr$ is associated with probability $\TP[\tr]$ \footnote{Note that this work considers only probabilistic systems without non-deterministic transitions.}.
Formally, a \emph{decomposition} $\alpha$ of $\Sys$ is defined 
a collection of  mutually disjoint non-empty subsets of $\Sys$:
$\emptyset \not\in \alpha$,
$\Sys = \bigcup_{S_i \in\alpha} S_i$, and 
for any $S_i, S_j\in \alpha$,\, $S_i \neq S_j$ implies $S_i \cap S_j \neq \emptyset$.
Then each element of $\alpha$ is called a \emph{component}. 
In this sense, components are a partition of the execution traces of $\Sys$.
When $\Sys$ is executed, only one of $\Sys$'s components is executed, since the components are mutually disjoint.
Hence $S_i$ is chosen to be executed with the probability $\TP[S_i]$.

In decomposing a system we roughly investigate the characteristics of each component's behaviour to choose a faster analysis method for each component. Note that information about a component like its number of traces and the size of its joint sub-distribution matrix can be estimated heuristically before computing the matrix itself. This will be explained in Section~\ref{sec:functioning}; before that section this information is assumed to be available. The choice of the analysis method is as follows:

\begin{itemize}
\item If a component's behaviour is deterministic, we perform a precise analysis on it.
\item If a component's behaviour is described as a joint sub-distribution matrix over \emph{small}\footnote{Relatively to the number of all execution traces of the component.} subsets of $\X$ and $\Y$, then we perform a statistical analysis on the component.
\item If a component's behaviour is described as a matrix over \emph{large}\footnotemark[3] subsets of $\X$ and $\Y$, then we perform a precise analysis on the component.
\item By combining the analysis results on all components, we compute the estimated value of mutual information and its variance (and confidence interval).
See Section~\ref{sec:compositional} for details.
\item By incorporating information from \emph{qualitative} information flow analysis, the analyst may obtain partial knowledge on components and be able to reduce the sample sizes.
See Section~\ref{sec:estimate-known-system} for details.
\end{itemize}
See Section~\ref{sec:functioning} for the details on how to decompose a system.

One of the main advantages of hybrid statistical estimation is that we guarantee the quality of the outcome by removing its bias and providing its variance (and confidence interval) even though different kinds of analysis with different parameters (such as sample sizes) are combined together.

Another advantage is the compositionality in estimating bias and variance.
Since the sampling of execution traces is performed independently for each component, we obtain that the bias and variance of mutual information can be computed in a compositional way, i.e., the bias/variance for the entire system is the sum of those for the components.
This compositionality enables us to find optimal sample sizes for the different components that maximize the accuracy of the estimation (i.e., minimize the variance) given a fixed total sample size for the entire system.
On the other hand, the computation of mutual information itself is not compositional~\cite{Kawamoto:17:LMCS}: it requires calculating the \emph{full} joint probability distribution of the system by summing the joint sub-distributions of all components of the system.

Finally, note that these results can be applied to the estimation of Shannon entropy (Section~\ref{sec:other-measures}) and conditional Shannon entropy (Section~\ref{subsec:Conditional-entropy}) as special cases.
The overview of all results is summarized in Table~\ref{table:all-results}.

\begin{table}
\begin{small}
 \begin{tabular}{|l|l|l|c|c|}
  & 
    & {\bf Bias correction}
    & {\bf Variance computation}
    & {\bf Adaptive sampling}
 \\[0.8ex] \bhline{0.3mm}\\[-0.8ex]
 ~{\bf No knowledge}
 	& Mutual information
 	& Theorem~\ref{lem:MI:expectation:general}
	& Theorem~\ref{lem:MI:variance:general}
	& Theorem~\ref{thm:adaptive-sampling}
 \\[0.8ex] \cline{2-5} ~{\bf on the system} \\[-0.8ex] 
 	& Shannon entropy
 	& Proposition~\ref{lem:entropy:expectation}
	& Proposition~\ref{lem:entropy:variance}
	& Proposition~\ref{prop:adaptive-sampling:shannon}
 \\[0.8ex] \cline{1-5}\\[-0.8ex] 
 ~{\bf Knowledge}
    & Mutual information
 	& Proposition~\ref{lem:MI:expectation:KnownPrior}
    & Proposition~\ref{lem:MI:variance:KnownPrior}
    & Proposition~\ref{prop:adaptive-sampling-known-prior}
 \\[0.8ex] \cline{2-5} ~{\bf on the prior} \\[-0.8ex]
 	& Conditional entropy
    & Proposition~\ref{lem:cond-entropy:expectation}
	& Proposition~\ref{lem:cond-entropy:variance}
	& ---
 \\[0.8ex] \cline{1-5}\\[-0.8ex] 
 ~{\bf Abstraction-then-sampling}
 	& Mutual information
 	& Theorem~\ref{thm:MI:expectation:symbolic}
	& Theorem~\ref{thm:MI:variance:symbolic}
	& Theorem~\ref{thm:adaptive-sampling-ATS}
 \\[0.8ex] \cline{1-5}
 \end{tabular}
 \caption{Our results on the hybrid method.}
 \label{table:all-results}
\end{small}
\end{table}

\section{Hybrid Method for Statistical Estimation of Mutual Information}
\label{sec:compositional}

In this section we present a method for estimating the mutual information  between two random variables $X$ (over the inputs $\X$) and $Y$ (over the outputs $\Y$) in a system, and for evaluating the precision and accuracy of the estimation.

We consider a probabilistic system $\Sys$ that consists of $(m+k)$ components $S_1$, $S_2$, $\ldots\,$, $S_m$ and $T_1$, $T_2$, $\ldots\,$, $T_k$ each executed with probabilities $\theta_1$, $\theta_2$, $\ldots\,$, $\theta_m$ and $\xi_1$, $\xi_2$, $\ldots\,$, $\xi_k$, i.e., when $\Sys$ is executed,\, $S_i$ is executed with the probability $\theta_i$ and $T_j$ with the probability $\xi_j$.
Let $\I = \{ 1, 2, \ldots , m \}$ and $\J = \{ 1, 2, \ldots , k \}$, one of which can be empty.
Then the probabilities of all components sum up to $1$, i.e.,
$\sum_{i\in\I} \theta_i +
\sum_{j\in\J} \xi_j = 1$.
We assume that the analyst is able to compute these probabilities by precise analysis. 
In the example in Fig.~\ref{fig:joint_prog} the probabilities of the components $S_1$, $S_2$, and $T$ are explicitly given as $\theta_1=0.25$, $\theta_2=0.45$, and $\xi_1=0.3$. 
However, in general they would be computed by analyzing the behavior of the system before the system executes the three components. More details about how to obtain this in practice are provided when discussing implementation in Section~\ref{sec:functioning}.

Once the system is decomposed into components, each component is analyzed either by precise analysis or by statistical analysis.
We assume that the analyst can run the component $S_i$ for each $i\in\I$ to record a certain number of $S_i$'s execution traces, and precisely analyze the components $T_j$ for $j\in\J$ to record a certain symbolic representation of $T_j$'s all execution traces, e.g., by static analysis of the source code (or of a specification that the code is known to satisfy). 
In the example in Fig.~\ref{fig:joint_prog}, this means that the components $S_1$ and $S_2$ will be analyzed statistically producing an approximation of their joint distributions, while the component $T$ will be analyzed precisely obtaining its exact joint distribution. The two estimates and one precise joint distributions will be composed to obtain a joint distribution estimate for the whole system, as illustrated in Fig.~\ref{fig:joint}.

In the rest of this section we present a method for computing the joint probability distribution $\ph_{XY}$ (Section~\ref{sec:overview:hybrid}), for estimating the mutual information $\hat{I}(X; Y)$ (Section~\ref{subsec:estimate-MI-bias}), and for evaluating the accuracy of the estimation (Section~\ref{subsec:eval-accuracy}).
Then we show the application of our hybrid method to Shannon entropy estimation (Section~\ref{sec:other-measures}).


In the estimation of mutual information between the two random variables $X$ and $Y$ in the system $\Sys$, we need to estimate the joint probability distribution $P_{XY}$ of $X$ and $Y$.

In our approach this is obtained by combining the joint \emph{sub-probability distributions} of $X$ and $Y$ for all the components $S_i$'s and $T_j$'s.
More specifically, let $R_i$ and $Q_j$ be the joint sub-distributions of $X$ and $Y$ for the components $S_i$'s and $T_j$'s respectively.
Then the joint (full) distribution $P_{XY}$ for the whole system $\Sys$ is defined by:
\begin{dmath*}
P_{XY}[x, y] \eqdef
\sum_{i\in\I} R_i[x,y] +
\sum_{j\in\J} Q_j[x,y]
\end{dmath*}
for $x\in\X$ and $y\in\Y$.
Note that for each $i\in\I$ and $j\in\J$, the sums of all probabilities in the sub-distribution $R_i$ and in $Q_j$ respectively equal the probabilities $\theta_i$ (of executing $S_i$) and $\xi_j$ (of executing $T_j$).

To estimate the joint distribution $P_{XY}$ the analyst computes 
\begin{itemize}
\item for each $j\in\J$, the \emph{exact} sub-distribution $Q_j$ for the component $T_j$ by precise analysis on $T_j$, and
\item for each $i\in\I$, the \emph{empirical} sub-distribution $\hat{R}_i$ for $S_i$ from a set of traces obtained by executing $S_i$ a certain number $n_i$ of times.
\end{itemize}

More specifically, the empirical sub-distribution $\Rh_i$ is constructed as follows.
When the component $S_i$ is executed $n_i$ times, let $\Kixy$ be the number of traces that have input $x\in\X$ and output $y\in\Y$.
Then $n_i = \sum_{x\in\X,y\in\Y} \Kixy$.
From these numbers $\Kixy$ of traces we compute the empirical joint (full) distribution $\Dhi$ of $X$ and $Y$ by:
\begin{dmath*}
\Dhi[x,y] \eqdef \frac{\Kixy}{n_i}\;.
\end{dmath*}
Since $S_i$ is executed with probability $\theta_i$,
the sub-distribution $\Rh_i$ is given by
$\Rh_i[x,y] \eqdef \theta_i \Dhi[x,y] = \frac{\theta_i \Kixy}{n_i}$.

Then the analyst sums up these sub-distributions to obtain the joint  distribution $\ph_{XY}$ for the whole system $\Sys$:
\begin{align*}
\ph_{XY}[x,y] \eqdef
\sum_{i\in\I} \Rh_i[x,y] + \sum_{j\in\J} Q_j[x,y] =
\sum_{i\in\I} \frac{\theta_i \Kixy}{n_i} + \sum_{j\in\J} Q_j[x,y]
\texttt{.}
\end{align*}
Note that $R_i$ and $Q_j$ may have different matrix sizes and cover different parts of the joint distribution matrix $\ph_{XY}$, so they may have to be appropriately padded with zeroes for the summation.

\subsection{Estimation of Mutual Information and Correction of its Bias}
\label{subsec:estimate-MI-bias}
In this section we present our new method for estimating mutual information and for correcting its bias.
For each component $S_i$ let $\Di$ be the joint (full) distribution of $X$ and $Y$ obtained by normalizing $R_i$:
$\Di[x,y] = \frac{R_i[x,y]}{\theta_i}$.
Let
$\Dxi[x] = \textstyle\sum_{y\in\Y} \Di[x,y]$,
$\Dyi[y] = \sum_{x\in\X} \Di[x,y]$, and
$\D = \{ (x, y)\in\X\times\Y : P_{XY}[x,y] \neq 0 \}$.

Using the estimated joint distribution $\ph_{XY}$ we can compute the mutual information estimate $\hat{I}(X; Y)$.
Note that the mutual information for the whole system is smaller than (or equals) the weighted sum of those for the components, because of its convexity w.r.t. the channel matrix.
Therefore it cannot be computed compositionally from those of the components,
i.e., it is necessary to compute the joint distribution matrix $\hat{P}_{XY}$ for the whole system.

Since $\hat{I}(X; Y)$ is obtained from a limited number of traces, it has bias, i.e., its expected value $\Expect{\hat{I}(X; Y)}$ is different from the true value $I(X; Y)$.
The bias $\Expect{\hat{I}(X; Y)} - I(X; Y)$ in the estimation is quantified as follows.
%
\begin{restatable}[Mean of estimated mutual information]{theorem}{resMeanMI}\label{lem:MI:expectation:general}
The expected value $\Expect{\hat{I}(X; Y)}$ of the estimated mutual information is given by:
\begin{dmath*}
\displaystyle
\Expect{\hat{I}(X; Y)} = 
I(X; Y)+
\sum_{i\in\I} \frac{\theta_i^2}{2n_i}
\Bigl(
\sum_{\,(x,y)\in\D\hspace{-1ex}}\hspace{-0.2ex}
\varphi_{ixy}
- \sum_{\,x\in\X^+\hspace{-1ex}}\hspace{-0.2ex}
\varphi_{ix}
- \sum_{\,y\in\Y^+\hspace{-1ex}}\hspace{-0.2ex}
\varphi_{iy}
\Bigr)
+ \O(n_i^{-2})
\end{dmath*}
where
$\varphi_{ixy} \allowbreak= {\textstyle \frac{\Di[x,y] - \Di[x,y]^2}{P_{XY}[x, y]}}$,
$\varphi_{ix} \allowbreak= {\textstyle \frac{\Dxi[x] - \Dxi[x]^2}{P_X[x]}}$ and
$\varphi_{iy} \allowbreak= {\textstyle \frac{\Dyi[y] - \Dyi[y]^2}{P_Y[y]}}$.
\end{restatable}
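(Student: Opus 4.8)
The plan is to expand $\hat I(X;Y)$ as a function of the empirical joint distribution $\hat P_{XY}$ around the true distribution $P_{XY}$ via a second-order Taylor expansion, take expectations term by term, and exploit the multinomial structure of the sampling within each component $S_i$ together with the independence of sampling across components. First I would write $\hat I(X;Y) = g(\hat P_{XY})$ where $g$ is the usual mutual-information functional, and set $\Delta[x,y] = \hat P_{XY}[x,y] - P_{XY}[x,y]$. Since the precise components $Q_j$ contribute no randomness, we have $\Delta[x,y] = \sum_{i\in\I}\bigl(\hat R_i[x,y] - R_i[x,y]\bigr) = \sum_{i\in\I}\frac{\theta_i}{n_i}\bigl(\Kixy - n_i \Di[x,y]\bigr)$, so $\Expect{\Delta[x,y]} = 0$ because $\Expect{\Kixy} = n_i\Di[x,y]$. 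The first-order term of the Taylor expansion therefore vanishes in expectation, and the leading correction comes from the Hessian of $g$ contracted with $\Expect{\Delta[x,y]\Delta[x',y']}$.

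The key computational steps, in order, are: (i) record the partial derivatives of $g$ at $P_{XY}$ — writing $g(P) = \sum_{x,y} P[x,y]\log P[x,y] - \sum_x P_X[x]\log P_X[x] - \sum_y P_Y[y]\log P_Y[y]$ with $P_X,P_Y$ the marginals of $P$, the relevant second derivatives are $\frac{1}{\ln 2}\bigl(\frac{1}{P[x,y]}\bigr)$ on the joint term, $-\frac{1}{\ln 2}\frac{1}{P_X[x]}$ on the $X$-marginal term, and $-\frac{1}{\ln 2}\frac{1}{P_Y[y]}$ on the $Y$-marginal term (the mixed derivatives across distinct cells of the joint term vanish, which is what makes the sums in the statement diagonal); (ii) compute the second moments of $\Delta$. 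By independence across components, $\Expect{\Delta[x,y]\Delta[x',y']} = \sum_{i\in\I}\frac{\theta_i^2}{n_i^2}\Cov{\Kixy,\,K_{ix'y'}}$, and for a multinomial with parameters $(n_i; \Di[\cdot,\cdot])$ the covariance is $n_i\Di[x,y](\delta_{(x,y),(x',y')} - \Di[x',y'])$; (iii) substitute and collect. Contracting the $\frac{1}{P[x,y]}$ Hessian entry with $\frac{\theta_i^2}{n_i}\bigl(\Di[x,y] - \Di[x,y]^2\bigr)$ produces $\frac{\theta_i^2}{n_i}\varphi_{ixy}$; pushing the multinomial variance through the marginalization $\hat P_X[x] = \sum_y \hat P_{XY}[x,y]$ collapses the $Y$-sum and yields the variance of $\sum_y\Kixy/n_i$ scaled appropriately, giving $\frac{\theta_i^2}{n_i}(\Dxi[x]-\Dxi[x]^2)$ against the $X$-marginal Hessian term, hence $-\frac{\theta_i^2}{n_i}\varphi_{ix}$, and symmetrically $-\frac{\theta_i^2}{n_i}\varphi_{iy}$; the overall factor $\tfrac12$ is the Taylor-expansion coefficient on the Hessian, and the $\ln 2$ from the derivatives must be reconciled with the base-2 logarithm convention, so I would double-check that normalization (the statement as written absorbs it, so presumably $\varphi$ is defined with natural-log-consistent bookkeeping or the $1/\ln2$ cancels against the stated $\log_2$ — I would verify this carefully).

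The main obstacle I anticipate is the rigorous control of the remainder term $\O(n_i^{-2})$ rather than the leading-order bookkeeping. The third-order Taylor term involves third derivatives of $g$, which blow up like $P[x,y]^{-2}$ near the boundary of the simplex, so one must either restrict to cells in $\D$ (where $P_{XY}[x,y]\neq 0$, as the statement does) and argue that empirical cells with true probability zero are never sampled, or give a more careful argument bounding the probability that $\hat P_{XY}$ strays far from $P_{XY}$ — for a fixed discrete support this is standard (the third central moments of a multinomial are $\O(n_i)$, so the third-order term is $\O(n_i^{-2})$), but it needs to be stated. A secondary subtlety is that $\hat P_{XY}$ can assign zero probability to a cell in $\D$ (when no sampled trace hits it) so $\hat I$ is still well-defined by the convention $0\log 0 = 0$, and the Taylor expansion is being applied to $g$ viewed as a smooth function on the relevant open face; I would make the support assumptions explicit and then the rest is the routine moment computation sketched above. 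I expect the calculation of the marginal covariance terms — carefully tracking that $\Cov{\sum_y\Kixy, \sum_{y'}K_{ix'y'}} = n_i\Dxi[x](\delta_{x,x'} - \Dxi[x'])$ — to be the fiddliest routine step, but conceptually straightforward.
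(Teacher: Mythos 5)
Your proposal is correct and follows essentially the same route as the paper's proof: a second-order Taylor expansion exploiting that the first-order term has mean zero, the independence of sampling across components, and the multinomial covariance structure within each component (the paper merely packages this as separate expansions of $\hat{H}(X)$, $\hat{H}(Y)$, $\hat{H}(X,Y)$ in the count variables $\Kixy$, $\Liy$, and their marginals, and derives the theorem as the $\II=\emptyset$ case of its abstraction-then-sampling generalization). Your remarks on the remainder term and on the $\ln 2$ normalization are sound --- the paper's own derivation silently drops the $1/\ln 2$ arising from base-2 logarithms --- but neither affects the structure of the argument.
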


\begin{proof}[Proof sketch.]
Here we present only the basic idea. Appendices~\ref{subsec:proof:mean:ATS} and~\ref{subsec:proof:mean:standard} present a proof of this theorem by showing a more general claim, i.e., Theorem~\ref{thm:MI:expectation:symbolic} in Section~\ref{subsec:known-subsystem}.

By properties of mutual information and Shannon entropy, we have:
\begin{align*}
\Expect{\hat{I}(X; Y)} - I(X; Y)
&=
\Expect{\hat{H}(X) + \hat{H}(Y) - \hat{H}(X, Y)} -
\Bigl( H(X) + H(Y) - H(X, Y) \Bigr)
\\&= 
\Bigl( \Expect{\hat{H}(X)} - H(X) \Bigr) + 
\Bigl( \Expect{\hat{H}(Y)} - H(Y) \Bigr) - 
\Bigl( \Expect{\hat{H}(X, Y)} - H(X, Y) \Bigr)
\texttt{.}
\end{align*}
Hence it is sufficient to calculate the bias in $\hat{H}(X)$, $\hat{H}(Y)$, and $\hat{H}(X, Y)$, respectively.

We calculate the bias in $\hat{H}(X, Y)$ as follows.
Let $\fxy(\Kxy{1}, \Kxy{2}, \ldots, \Kxy{m})$ be the $m$-ary function defined by:
\begin{align*}
\fxy(\Kxy{1}, \Kxy{2}, \ldots, \Kxy{m}) =
\biggl(
\sum_{i\in\I} \frac{\theta_i \Kixy}{n_i} +
\sum_{j\in\J} Q_j[x,y] \biggr)
\log\biggl(
\sum_{i\in\I} \frac{\theta_i \Kixy}{n_i} +
\sum_{j\in\J} Q_j[x,y] \biggr),
\end{align*}
which equals $\ph_{XY}[x,y]\log\ph_{XY}[x,y]$.
Let $\BKxy = (\Kxy{1}, \Kxy{2}, \ldots, \Kxy{m})$.
Then the empirical joint entropy is:
\begin{align*}
\hat{H}(X, Y) =
-\hspace{-1.0ex}\sum_{\,(x,y)\in\D\hspace{-1ex}} \ph_{XY}[x,y] \log \ph_{XY}[x,y]
=
-\hspace{-1.0ex}\sum_{\,(x,y)\in\D\hspace{-1ex}} \fxy(\BKxy).
\end{align*}
Let $\Kioxy = \Expect{\Kixy}$ for each $i\in\I$ and $\BKoxy = \Expect{\BKxy}$.
By the Taylor expansion of $\fxy(\BKxy)$ (w.r.t. the multiple dependent variables $\BKxy$) at $\BKoxy$, we have:
\begin{align*}
\hspace{-3.2ex}
\fxy(\BKxy) = 
\fxy(\BKoxy) +\!
\sum_{i\in\I}\!
{\textstyle \frac{\partial\fxy(\BKoxy)}{\partial \Kixy}}
(\Kixy - \Kioxy) +\!
\frac{1}{2}\!
\sum_{i,j\in\I}\!
{\textstyle \frac{\partial^2\fxy(\BKoxy)}{\partial \Kixy \partial \Kjxy}}
(\Kixy - \Kioxy) (\Kjxy - \Kjoxy)
\!+\!
\sum_{i\in\I}\!\O(\Kixy^3)
{.}
\end{align*}
We use the following properties:
\begin{itemize}
\item $\Expect{\Kixy - \Kioxy} = 0$, which is immediate from $\Kioxy = \Expect{\Kixy}$.
\item $\Expect{ (\Kixy - \Kioxy) (\Kjxy - \Kjoxy) } = 0$ if $i \neq j$, because $\Kixy$ and $\Kjxy$ are independent.
\item $\Expect{ (\Kixy - \Kioxy)^2 } = \Variance{\Kixy} = n_i D_i[x,y] \bigl( 1 - D_i[x,y] \bigr)$.
\end{itemize}
Then 
\begin{align*}
\Expect{\hat{H}(X, Y)} &=
-\hspace{-1.5ex}
\sum_{(x, y)\in\D}\hspace{-0.5ex}
\Expect{\,\fxy(\BKxy)\,}
\\ &=
-\hspace{-1.5ex}
\sum_{(x, y)\in\D}\hspace{-0.5ex}
\Bigl(
\fxy(\BKoxy) +
\frac{1}{2}
\sum_{i\in\I}
{\textstyle \frac{\partial^2\fxy(\BKoxy)}{\partial \Kixy \partial \Kixy}}\,
\Expect{ (\Kixy - \Kioxy)^2 }
+ \O(\Kixy^3)
\Bigr)
\\ &=\!
-\hspace{-1.5ex}
\sum_{(x, y)\in\D}\hspace{-0.5ex}
\Bigl(
\fxy(\BKoxy) +
\frac{1}{2}
\sum_{i\in\I}
{\textstyle \frac{\theta_i^2}{n_i^2 P_{XY}[x, y]}}\,
n_i D_i[x,y] \bigl( 1 - D_i[x,y] \bigr)
+\O(n_i^{-2})
\Bigr)
\\ &=\!
H(X, Y) -
\sum_{i\in\I}
\frac{\theta_i^2}{2n_i}\!
\sum_{(x, y)\in\D}\hspace{-0.5ex}
\varphi_{ixy}
+\O(n_i^{-2})
\texttt{,}
\end{align*}
where the derivation of the equalities is detailed in Appendix~\ref{sec:app:proofs}.
Hence the bias in estimating $H(X,Y)$ is given by:
\begin{align*}
\Expect{\hat{H}(X, Y)} - H(X, Y) =
- \sum_{i\in\I}
\frac{\theta_i^2}{2n_i}\!
\sum_{(x, y)\in\D}\hspace{-0.5ex}
\varphi_{ixy}
+\O(n_i^{-2})
\texttt{.}
\end{align*}
Analogously, we can calculate the bias in $\hat{H}(X)$ and $\hat{H}(Y)$ to derive the theorem.
See Appendices~\ref{subsec:proof:mean:ATS} and~\ref{subsec:proof:mean:standard} for the details.
\end{proof}

Since the higher-order terms in the formula are negligible when the sample sizes $n_i$ are large enough, we use the following as the \emph{point estimate} of the mutual information:
\begin{dmath*}
\pe = 
\hat{I}(X; Y) -
\displaystyle
\sum_{i\in\I}
\frac{\theta_i^2}{2n_i}\hspace{-0.3ex}
\Bigl(
\sum_{\,(x,y)\in\D\hspace{-1.5ex}}\hspace{-1ex}
\widehat{\varphi}_{ixy}
-\hspace{-0.5ex}
\sum_{\,x\in\X^+\hspace{-1ex}}\hspace{-0.5ex}
\widehat{\varphi}_{ix}
-\hspace{-0.5ex}
\sum_{\,y\in\Y^+\hspace{-1ex}}\hspace{-0.5ex}
\widehat{\varphi}_{iy}
{\Bigr)}\!
\end{dmath*}
where
$\widehat{\varphi}_{ixy}$, $\widehat{\varphi}_{ix}$ and $\widehat{\varphi}_{iy}$
are respectively empirical values of $\varphi_{ixy}$, $\varphi_{ix}$ and $\varphi_{iy}$ 
that are computed from traces; i.e.,
$\widehat{\varphi}_{ixy} = {\textstyle \frac{\Dhi[x,y] - \Dhi[x,y]^2}{\ph_{XY}[x, y]}}$,
$\widehat{\varphi}_{ix} = {\textstyle \frac{\Dhi[x] - \Dhi[x]^2}{\ph_{XY}[x]}}$,
and
$\widehat{\varphi}_{iy} = {\textstyle \frac{\Dhi[y] - \Dhi[y]^2}{\ph_{XY}[y]}}$.
Then the bias is closer to $0$ when the sample sizes $n_i$ are larger.

\subsection{Evaluation of the Accuracy of Estimation}
\label{subsec:eval-accuracy}
In this section we present how to evaluate the accuracy of mutual information estimation.
The quality of the estimation depends on the sample sizes $n_i$ and other factors,
and can be evaluated using the variance of the estimate $\hat{I}(X; Y)$. 

\begin{restatable}[Variance of estimated mutual information]{theorem}{resVarianceMI}\label{lem:MI:variance:general}
The variance $\Variance{\hat{I}(X; Y)}$ of the estimated mutual information is given by:
\begin{align*}
\displaystyle\hspace{-2.3ex}
\Variance{\hat{I}(X; Y)} =
\sum_{i\in\I} \frac{\theta_i^2}{n_i}
\Biggl(\hspace{-2ex}
\sum_{~~~~(x, y)\in\D}\hspace{-2.2ex} \Di[x,y]
\Bigl({\textstyle 1 + \log\frac{P_X[x]P_Y[y]}{P_{XY}[x,y]}} \Bigr)^{\!2}
\!-
\biggl(\hspace{-2ex}\sum_{~~~~(x, y)\in\D}\hspace{-2.2ex} \Di[x,y]	
\Bigl({\textstyle 1 + \log\frac{P_X[x]P_Y[y]}{P_{XY}[x,y]}} \Bigr) \biggr)^{\!2}
\Biggr)\!
+ O(n_i^{-2})
{.}
\end{align*}
\end{restatable}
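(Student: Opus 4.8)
The approach is to view $\hat I(X;Y)$ as a smooth function of the collection of random cell counts $\{\Kixy : i\in\I,\ (x,y)\in\D\}$ produced by the statistically analyzed components $S_1,\dots,S_m$, and to apply the delta method, pushing the Taylor expansion used in the proof sketch of Theorem~\ref{lem:MI:expectation:general} one order further. The relevant probabilistic facts are: for each $i\in\I$ the counts $(\Kixy)_{(x,y)}$ are multinomial over $n_i$ trials with cell probabilities $\Di[x,y]$, so in particular $\sum_{(x,y)}\Kixy=n_i$; samples of distinct components are independent; and $\hat I(X;Y)$ is computed from $\ph_{XY}[x,y]=\sum_{i\in\I}\frac{\theta_i\Kixy}{n_i}+\sum_{j\in\J}Q_j[x,y]$, which at the mean point $\Kixy=n_i\Di[x,y]$ equals $P_{XY}[x,y]$ (hence the empirical marginals coincide there with $P_X$ and $P_Y$).

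First I would expand $\hat I(X;Y)$ to first order around the mean point. Since the zeroth-order term is the constant $I(X;Y)$ and the linear term has mean $0$, the leading part of the variance is the variance of that linear term,
\begin{align*}
\Variance{\hat I(X;Y)} = \Variance{\;\sum_{i\in\I}\;\sum_{(x,y)\in\D}\frac{\partial \hat I(X;Y)}{\partial \Kixy}\Big|_{\mathrm{mean}}\bigl(\Kixy-n_i\Di[x,y]\bigr)}+\O(n_i^{-2}),
\end{align*}
and, since the $i$-th inner sum depends only on the counts of $S_i$, independence of the components splits this into $\sum_{i\in\I}$ of the variance of the $i$-th linear term. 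Controlling the remainder — showing that the quadratic and higher-order terms contribute only $\O(n_i^{-2})$ to the variance, despite the logarithms in $\hat I$ failing to be globally smooth — is the main obstacle; I would dispatch it exactly as for Theorem~\ref{lem:MI:expectation:general} (and as done in full in Appendix~\ref{sec:app:proofs}): work on the high-probability event that each empirical sub-distribution is close to its true value, on which all derivatives are bounded, and bound the complementary event by multinomial concentration.

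Next I would compute the gradient. Writing $\hat I(X;Y)=\hat H(X)+\hat H(Y)-\hat H(X,Y)$ and using $\frac{\partial\ph_{XY}[x',y']}{\partial\Kixy}=\frac{\theta_i}{n_i}$ if $(x',y')=(x,y)$ and $0$ otherwise, together with the analogous identities for $\ph_X$ and $\ph_Y$, one gets
\begin{align*}
\frac{\partial \hat I(X;Y)}{\partial \Kixy}=-\frac{\theta_i}{n_i}\Bigl(\log e+\log\tfrac{\ph_X[x]\,\ph_Y[y]}{\ph_{XY}[x,y]}\Bigr),
\end{align*}
which at the mean point equals $-\frac{\theta_i}{n_i}\bigl(\log e+\log\frac{P_X[x]P_Y[y]}{P_{XY}[x,y]}\bigr)$, with only cells $(x,y)$ having $\Di[x,y]>0$ (hence in $\D$) carrying a nonzero count for $S_i$. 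A convenient simplification is that $\sum_{(x,y)}(\Kixy-n_i\Di[x,y])=0$ because the total count $n_i$ is deterministic, so adding any constant to the coefficient of $\Kixy$ leaves the linear term — and therefore the variance — unchanged; this is why $\log e$ can be replaced by $1$, as in the statement.

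Finally I would evaluate each component's contribution using the multinomial covariance structure $\Cov{\Kixy,K_{ix'y'}}=n_i\bigl(\Di[x,y]\,[(x,y)=(x',y')]-\Di[x,y]\Di[x',y']\bigr)$, which gives, writing $a_{ixy}:=1+\log\frac{P_X[x]P_Y[y]}{P_{XY}[x,y]}$,
\begin{align*}
\Variance{\;\frac{\theta_i}{n_i}\sum_{(x,y)\in\D}a_{ixy}\,\Kixy}=\frac{\theta_i^2}{n_i}\Biggl(\sum_{(x,y)\in\D}\Di[x,y]\,a_{ixy}^2-\Bigl(\sum_{(x,y)\in\D}\Di[x,y]\,a_{ixy}\Bigr)^{\!2}\Biggr).
\end{align*}
Summing over $i\in\I$ and collecting the $\O(n_i^{-2})$ remainders yields the stated formula; the same computation restricted to $\hat H(X,Y)$, $\hat H(X)$ or $\hat H(Y)$ gives the variances of the corresponding estimated entropies. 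Alternatively, the result follows as the special case of Theorem~\ref{thm:MI:variance:symbolic} in which the abstraction is taken to be trivial.
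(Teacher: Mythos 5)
Your proposal is correct and follows essentially the same route as the paper: the paper proves the more general abstraction-then-sampling statement (Theorem~\ref{thm:MI:variance:symbolic}) by a Taylor expansion in the multinomial counts $\Kixy$ combined with explicit multinomial covariance lemmas, and then obtains the present theorem as the special case $\II=\emptyset$ --- exactly the fallback you note in your last sentence. The only organizational difference is that you differentiate $\hat{I}(X;Y)$ directly and compute the variance of a single linear functional per component, whereas the paper expands $\hat{H}(X)$, $\hat{H}(Y)$ and $\hat{H}(X,Y)$ separately and assembles six variance/covariance terms; your remark that the additive constant in the coefficient ($\log e$ versus $1$) is immaterial, because the coefficients are only determined up to a constant shift under each component's multinomial constraint $\sum_{(x,y)}\Kixy=n_i$, is a correct and slightly more careful treatment of a point the paper passes over silently.
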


\begin{proof}[Proof sketch.]
The variance is calculated using the following:
\begin{align*}
\Variance{\hat{I}(X; Y)}
&= \displaystyle
\Variance{ \hat{H}(X) + \hat{H}(Y) - \hat{H}(X, Y) }
\\[-0.2ex]
&= \displaystyle
\Variance{\hat{H}(X)} + \Variance{\hat{H}(Y)} + \Variance{\hat{H}(X, Y)} 
\\[-0.2ex]
&~~
+ 2 \Cov{\hat{H}(X), \hat{H}(Y)}
- 2 \Cov{\hat{H}(X), \hat{H}(X, Y)}
- 2 \Cov{\hat{H}(Y), \hat{H}(X, Y)}
{.}
\end{align*}
The calculation of these variances and covariances and the whole proof are shown in Appendices~\ref{subsec:proof:var:ATS} and~\ref{subsec:proof:var:standard}.
(We will present a proof of this theorem by showing a more general claim, i.e., Theorem~\ref{thm:MI:variance:symbolic} in Section~\ref{subsec:known-subsystem}).
\end{proof}

The confidence interval of the estimate of mutual information is also useful to show how accurate the estimate is.
A smaller confidence interval corresponds to a more reliable estimate.
To compute the confidence interval approximately, we assume the sampling distribution of the estimate $\hat{I}(X; Y)$ as a normal distribution.
\footnote{In fact, Brillinge~\cite{brill1} shows that the sampling distribution of mutual information values is approximately normal for large sample size, and Chatzikokolakis et al.~\cite{DBLP:conf/tacas/ChatzikokolakisCG10} employ this fact to approximately compute the confidence interval.
We also empirically verified that the sampling distribution is closer to the normal distribution when the $n_i$'s are larger enough, and the evaluation of the obtained confidence interval will be demonstrated by experiments in Section~\ref{subsec:eval:quality}.}
Then the confidence interval is calculated using the variance $v$ obtained by Theorem~\ref{lem:MI:variance:general} as follows.
Given a significance level $\alpha$, we denote by $z_{\alpha\!/\!2}$ the \emph{z-score} for the $100(1-\frac{\alpha}{2})$ percentile point.
Then \emph{the $(1-\alpha)$ confidence interval} of the estimate is given by:
\begin{dmath*}
[\, \max(0, \pe - z_{\alpha\!/\!2} \sqrt{v}),~ \pe + z_{\alpha\!/\!2} \sqrt{v} \,]\;.
\end{dmath*}
For example, we use the z-score $z_{0.0025} = 1.96$ to compute the 95\% confidence interval.
To ignore the higher order terms the sample size $\sum_{i\in\I} n_i$ needs to be at least $4\!\cdot\!\#\X \!\cdot\!\#\Y$.

By Theorems~\ref{lem:MI:expectation:general} and~\ref{lem:MI:variance:general}, the bias and variance for the whole system can be computed compositionally from those for the components, unlike the mutual information itself.
This allows us to adaptively optimize the sample sizes for the components
as we will see in Section~\ref{sec:adaptive}.

\subsection{Application to Estimation of Shannon Entropy}
\label{sec:other-measures}

Hybrid statistical estimation can also be used to estimate the Shannon entropy $H(X)$ of a random variable $X$ in a probabilistic system.
Although the results for Shannon entropy are straightforward from those for mutual information, we present the formulas here for completeness.
For each $i\in\I$ let $\Dxi$ be the sub-distribution of $X$ for the component $S_i$.
Then the mean and variance of the estimate are obtained in the same way as in the Sections~\ref{subsec:estimate-MI-bias} and~\ref{subsec:eval-accuracy}.
%
\begin{restatable}[Mean of estimated Shannon entropy]{proposition}{resMeanShannonEntropy}\label{lem:entropy:expectation}
The expected value $\Expect{\hat{H}(X)}$ of the estimated Shannon entropy is given by:
\begin{align*}
\displaystyle
\Expect{\hat{H}(X)} =
H(X) -
\sum_{i\in\I} \frac{\theta_i^2}{2n_i}
\sum_{x\in\X^+}\hspace{-1ex}\frac{\Dxi[x] \left( 1 - \Dxi[x] \right)}{P_X[x]}
+ \O(n_i^{-2})
\texttt{.}
\end{align*}
\end{restatable}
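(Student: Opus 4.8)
The plan is to obtain Proposition~\ref{lem:entropy:expectation} as a direct specialisation of Theorem~\ref{lem:MI:expectation:general}: the estimator $\hat H(X)$ is exactly the ``joint-entropy part'' $\hat H(X,Y)$ appearing in that proof, with the pair $(x,y)$ collapsed to a single input value $x$. Writing $K_{ix} = \sum_{y\in\Y}\Kixy$ for the number of sampled traces of $S_i$ with input $x$, and $Q_{Xj}[x] = \sum_{y\in\Y}Q_j[x,y]$ for the exact marginal sub-distribution produced by precise analysis of $T_j$, the empirical marginal is $\hat P_X[x] = \sum_{i\in\I}\tfrac{\theta_i K_{ix}}{n_i} + \sum_{j\in\J}Q_{Xj}[x]$ and $\hat H(X) = -\sum_{x\in\X^+}\hat P_X[x]\log\hat P_X[x]$. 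First I would introduce, for each $x\in\X^+$, the function $f_x(K_{1x},\dots,K_{mx}) = \hat P_X[x]\log\hat P_X[x]$, so that $\hat H(X) = -\sum_{x\in\X^+}f_x(\bm K_x)$ with $\bm K_x = (K_{1x},\dots,K_{mx})$. Since $f_x$, just like the summand $\fxy$ in the proof of Theorem~\ref{lem:MI:expectation:general}, is a linear function of the counts composed with $t\mapsto t\log t$, every derivative identity used there transfers verbatim.

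The key steps then mirror that proof. Each $S_i$ is run $n_i$ times independently and a trace of $S_i$ has input $x$ with probability $\Dxi[x]$, so $K_{ix}$ is $\mathrm{Binomial}(n_i,\Dxi[x])$; hence $\Expect{K_{ix}} = n_i\Dxi[x]$, $\Variance{K_{ix}} = n_i\Dxi[x](1-\Dxi[x])$, and $K_{ix},K_{jx}$ are independent for $i\neq j$. Taylor-expanding $f_x$ around $\overline{\bm K}_x = \Expect{\bm K_x}$ and taking expectations, the first-order terms vanish since $\Expect{K_{ix}-\overline{K}_{ix}}=0$, the mixed second-order terms vanish by independence, and what remains is
\begin{align*}
\Expect{\hat H(X)} = -\sum_{x\in\X^+}\Bigl(f_x(\overline{\bm K}_x) + \tfrac{1}{2}\sum_{i\in\I}\tfrac{\partial^2 f_x(\overline{\bm K}_x)}{\partial K_{ix}^2}\,\Variance{K_{ix}}\Bigr) + \O(n_i^{-2}).
\end{align*}

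I would then evaluate the two surviving pieces. Because $\Expect{\hat P_X[x]} = \sum_{i\in\I}\theta_i\Dxi[x] + \sum_{j\in\J}Q_{Xj}[x] = P_X[x]$, the leading term is $-\sum_{x\in\X^+}P_X[x]\log P_X[x] = H(X)$. Differentiating $f_x$ twice --- it is $g\circ u$ with $u=\hat P_X[x]$ linear in the $K_{ix}$ and $g(t)=t\log t$ --- gives, by the same computation as in the proof of Theorem~\ref{lem:MI:expectation:general}, $\tfrac{\partial^2 f_x(\overline{\bm K}_x)}{\partial K_{ix}^2} = \tfrac{\theta_i^2}{n_i^2 P_X[x]}$; multiplying by $\Variance{K_{ix}} = n_i\Dxi[x](1-\Dxi[x])$, halving and summing over $i$ and $x$ produces exactly $-\sum_{i\in\I}\tfrac{\theta_i^2}{2n_i}\sum_{x\in\X^+}\tfrac{\Dxi[x](1-\Dxi[x])}{P_X[x]}$ (which is $-\sum_{i\in\I}\tfrac{\theta_i^2}{2n_i}\sum_{x\in\X^+}\varphi_{ix}$ in the notation of Theorem~\ref{lem:MI:expectation:general}), the claimed bias. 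Restricting the outer sum to $\X^+$ ensures $P_X[x]\neq 0$, so that $g''$, and the third derivative $g'''$ needed for the remainder, are finite at the expansion point.

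The step I expect to be the main obstacle is the one already present in Theorem~\ref{lem:MI:expectation:general}: making the $\O(n_i^{-2})$ remainder rigorous. One must bound the third-order Taylor term using that the third central moments of the binomial counts are $\O(n_i)$ while $g'''$ is bounded on a neighbourhood of $P_X[x]>0$, and argue that for $n_i$ large enough the random expansion point $\hat P_X[x]$ lies in such a neighbourhood except on an event of negligible probability (this also covers the convention $0\log 0 = 0$ when some $\hat P_X[x]$ vanishes). Since Appendices~\ref{subsec:proof:mean:ATS} and~\ref{subsec:proof:mean:standard} already carry out this estimate for the full joint distribution, I would invoke it verbatim with $(x,y)$ replaced by the single index $x$; no genuinely new argument is needed, which is why the proposition is ``straightforward'' from the mutual-information case.
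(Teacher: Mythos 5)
Your proposal is correct and follows essentially the same route as the paper: the paper proves this proposition by setting $\II=\emptyset$ in Proposition~\ref{prop:X:expectation:symbolic}, whose proof is exactly the Taylor-expansion-of-$t\log t$ argument on the aggregated counts $K_{ix}=\sum_y \Kixy$ (multinomial over $x$, independent across components) that you describe, yielding the same second-derivative term $\theta_i^2/(n_i^2 P_X[x])$ times the variance $n_i\Dxi[x](1-\Dxi[x])$. Your treatment of the remainder term is, if anything, slightly more careful than the paper's, but no new idea is involved.
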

%
See Appendix~\ref{subsec:proof:mean:standard} for the proof.
From this we obtain the bias of the Shannon entropy estimates.

\begin{restatable}[Variance of estimated Shannon entropy]{proposition}{resVarianceShannonEntropy}\label{lem:entropy:variance}
The variance $\Variance{\hat{H}(X)}$ of the estimated Shannon entropy is given by:
\begin{align*}
\Variance{\hat{H}(X)} =
\sum_{i\in\I} \frac{\theta_i^2}{n_i}
\Biggl(
\hspace{-1.2ex}\sum_{\hspace{1.5ex}x\in\X^+}\hspace{-1.8ex} \Dxi[x]
\Bigl(1 + \log P_X[x] \Bigr)^{\!2}
\!-\!
\biggl(\hspace{-1.2ex}\sum_{\hspace{1.5ex}x\in\X^+}\hspace{-1.8ex} \Dxi[x]
\Bigl(1 + \log P_X[x] \Bigr) \biggr)^{\!2}
\Biggr)\!
+ \O(n_i^{-2})
\texttt{.}
\end{align*}
\end{restatable}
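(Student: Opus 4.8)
\textbf{Proof proposal (Proposition~\ref{lem:entropy:variance}).}
The estimator $\hat{H}(X)$ is built from exactly the empirical marginal that occurs as the $\hat{H}(X)$ summand in the decomposition $\hat{I}(X;Y)=\hat{H}(X)+\hat{H}(Y)-\hat{H}(X,Y)$ used to prove Theorem~\ref{lem:MI:variance:general}, so the statement can be obtained either by isolating the $\hat{H}(X)$ term in the proof of that theorem or by running the analogous delta-method argument directly; I would do the latter for self-containedness. Write the empirical marginal as $\ph_X[x]=\sum_{i\in\I}\frac{\theta_i K_{ix}}{n_i}+\sum_{j\in\J}Q_j[x]$, where $K_{ix}=\sum_{y\in\Y}K_{ixy}$ is the number of sampled $S_i$-traces with input $x$, and note $\Expect{K_{ix}}=n_i\Dxi[x]$ with $(K_{ix})_{x\in\X^+}$ multinomial$(n_i,(\Dxi[x])_x)$; the precisely-analysed components $T_j$ enter only through the constant offset $\sum_{j}Q_j[x]$ and therefore do not affect the variance. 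Let $f_x$ be the map sending the count vector $(K_{1x},\dots,K_{mx})$ to $\ph_X[x]\log\ph_X[x]$, so that $\hat{H}(X)=-\sum_{x\in\X^+}f_x(K_{1x},\dots,K_{mx})$.

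First I would Taylor-expand each $f_x$ about the mean count vector $(n_1\Dxi[x],\dots,n_m\Dxi[x])$. The first partial derivatives are $\frac{\partial f_x}{\partial K_{ix}}=\frac{\theta_i}{n_i}\bigl(1+\log\ph_X[x]\bigr)$, which at the expansion point equal $\frac{\theta_i}{n_i}\bigl(1+\log P_X[x]\bigr)$ since $\ph_X\to P_X$ there. Writing $c_{ix}=1+\log P_X[x]$ this gives
\begin{align*}
\hat{H}(X)-H(X)=-\sum_{x\in\X^+}\sum_{i\in\I}\frac{\theta_i}{n_i}\,c_{ix}\,\bigl(K_{ix}-n_i\Dxi[x]\bigr)\;+\;r,
\end{align*}
where $r$ gathers the second- and higher-order Taylor terms.

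Next I would take variances. Because the samplings of distinct components $S_i$ are independent, all cross-component covariances vanish, so $\Variance{\hat{H}(X)}$ is a sum over $i\in\I$ of the variances of the per-component linear parts (plus the contribution of $r$). Using the multinomial identity $\Variance{\sum_x c_{ix}K_{ix}}=n_i\bigl(\sum_{x\in\X^+}c_{ix}^2\Dxi[x]-(\sum_{x\in\X^+}c_{ix}\Dxi[x])^2\bigr)$ together with the prefactor $\theta_i^2/n_i^2$, the linear part of component $i$ contributes $\frac{\theta_i^2}{n_i}\bigl(\sum_{x\in\X^+}c_{ix}^2\Dxi[x]-(\sum_{x\in\X^+}c_{ix}\Dxi[x])^2\bigr)$, which summed over $i\in\I$ is precisely the asserted leading term. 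The same second-order term of the expansion, taken in expectation rather than in variance, yields Proposition~\ref{lem:entropy:expectation}, so the two arguments share all of their bookkeeping.

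The one genuinely delicate point --- the main obstacle --- is remainder control: one must show that $\Variance{r}$ and $\Cov{\text{linear part},\,r}$ are each $\sum_{i\in\I}\O(n_i^{-2})$. This does not follow from crude size counting (Cauchy--Schwarz alone would only give $\O(n_i^{-3/2})$ for the covariance); it relies on the fact that the relevant odd central moments of the multinomial counts are $\O(n_i)$ rather than $\O(n_i^{3/2})$, together with uniform bounds on the second and third derivatives of $t\mapsto t\log t$ over the finite support $\X^+$. These estimates are exactly those carried out in the proof of Theorem~\ref{lem:MI:variance:general}, so I would reuse them and defer to Appendix~\ref{subsec:proof:var:standard} for the details.
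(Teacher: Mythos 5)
Your proposal is correct and follows essentially the same route as the paper: the paper obtains this proposition by setting $\II=\emptyset$ in the general formula for $\Variance{\hat{H}_{\II}(X)}$ derived in the proof of Theorem~\ref{thm:MI:variance:symbolic}, and that derivation is precisely the delta-method argument you describe — Taylor expansion about the mean counts with first-derivative coefficients $\frac{\theta_i}{n_i}(1+\log P_X[x])$, independence across components, and the multinomial covariance identity. Your remark on controlling $\Cov{\text{linear part},r}$ via the $\O(n_i)$ third central moments is a correct (and somewhat more explicit) account of why the remainder is $\O(n_i^{-2})$.
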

See Appendix~\ref{subsec:proof:var:standard} for the proof.
From this we obtain the confidence interval of the Shannon entropy estimates.

\section{Estimation Using Prior Knowledge about Systems}
\label{sec:estimate-known-system}

In this section we show how to use prior knowledge about systems to improve the accuracy of the estimation, i.e., to make the variance (and the confidence interval size) smaller and reduce the required sample sizes.

\subsection{Approximate Estimation Using Knowledge of Prior Distributions}
\label{subsec:previous-known-prior}
Our hybrid statistical estimation method integrates both precise and statistical analysis, and it can be seen as a generalization and extension of previous work~\cite{DBLP:conf/tacas/ChatzikokolakisCG10,mod1,brill1}. 

Due to an incorrect computation of the bias, the state-of-the-art statistical analysis tool \LeakWatch{}~\cite{DBLP:conf/esorics/ChothiaKN14,leakwatch:www} does not correctly estimate mutual information. We explain this problem in Section~\ref{subsec:prior-soa} and show how to fix it in Section~\ref{subsec:better-known-prior}. 
We extend this result to the estimation of conditional entropy in Section~\ref{subsec:Conditional-entropy}.

\subsubsection{State of the Art}\label{subsec:prior-soa}
For example, Chatzikokolakis et.al.~\cite{DBLP:conf/tacas/ChatzikokolakisCG10} present a method for estimating mutual information between two random variables $X$ (over secret input values $\X$) and $Y$ (over observable output values $\Y$) when the analyst knows the (prior) distribution $P_X$ of $X$.
In the estimation they collect execution traces by running a system for each secret value $x\in\X$.
Thanks to the precise knowledge of $P_X$, they have more precise and accurate estimates than the other previous work~\cite{mod1,brill1} that also estimates $P_X$ from execution traces.

Estimation using the precise knowledge of $P_X$ is an instance of our result if a system is decomposed into the component $S_x$ for each secret $x\in\X=\I$.
If we assume all joint probabilities are non-zero,
the following approximate result in~\cite{DBLP:conf/tacas/ChatzikokolakisCG10} follows from Theorem~\ref{lem:MI:expectation:general}. 

\begin{corollary}\label{lem:MI:expectation:Chothia}
The expected value $\Expect{\hat{I}(X; Y)}$ of the estimated mutual information is given by:
\begin{dmath*}
\Expect{\hat{I}(X; Y)} = I(X; Y) + {\textstyle\frac{(\#\X - 1)(\#\Y - 1)}{2n}} + \O(n^{-2}),
\end{dmath*}
where $\#\X$ and $\#\Y$ denote the numbers of possible secrets and observables respectively.
\end{corollary}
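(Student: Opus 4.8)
The plan is to obtain Corollary~\ref{lem:MI:expectation:Chothia} as a direct specialization of Theorem~\ref{lem:MI:expectation:general}. In the scenario of~\cite{DBLP:conf/tacas/ChatzikokolakisCG10} the analyst forms the ordinary empirical joint distribution from $n$ traces whose inputs follow the known prior $P_X$; in the terminology of Section~\ref{sec:compositional} this is the instantiation with no precisely-analyzed component ($\J = \emptyset$) and a single statistically-analyzed component consisting of the whole system, with weight $\theta_1 = 1$ and sample size $n_1 = n$. For this component the true sub-distribution is $R_1 = P_{XY}$, hence the normalized component distribution $D_1$ equals $P_{XY}$ and its $X$- and $Y$-marginals equal $P_X$ and $P_Y$ respectively.

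Next I would invoke the standing hypothesis that every joint probability is non-zero in order to replace the restricted index sets appearing in Theorem~\ref{lem:MI:expectation:general} by their full counterparts, i.e.\ $\D = \X\times\Y$, $\X^+ = \X$ and $\Y^+ = \Y$. With $D_1 = P_{XY}$ the three correction quantities of the theorem collapse to $\varphi_{1xy} = \frac{P_{XY}[x,y] - P_{XY}[x,y]^2}{P_{XY}[x,y]} = 1 - P_{XY}[x,y]$, and likewise $\varphi_{1x} = 1 - P_X[x]$ and $\varphi_{1y} = 1 - P_Y[y]$. Summing each over its full index set and using $\sum_{x,y} P_{XY}[x,y] = \sum_x P_X[x] = \sum_y P_Y[y] = 1$ gives $\sum_{(x,y)\in\D}\varphi_{1xy} = \#\X\cdot\#\Y - 1$, $\sum_{x\in\X^+}\varphi_{1x} = \#\X - 1$ and $\sum_{y\in\Y^+}\varphi_{1y} = \#\Y - 1$.

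Substituting these into the single surviving summand $\frac{\theta_1^2}{2n_1}\bigl(\sum_{(x,y)\in\D}\varphi_{1xy} - \sum_{x\in\X^+}\varphi_{1x} - \sum_{y\in\Y^+}\varphi_{1y}\bigr)$ yields $\frac{1}{2n}\bigl((\#\X\cdot\#\Y - 1) - (\#\X - 1) - (\#\Y - 1)\bigr) = \frac{\#\X\cdot\#\Y - \#\X - \#\Y + 1}{2n} = \frac{(\#\X-1)(\#\Y-1)}{2n}$, and the $\O(n^{-2})$ remainder is already carried by Theorem~\ref{lem:MI:expectation:general}; this is precisely the asserted formula.

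I do not expect a real obstacle: the statement is a routine corollary. The only places needing attention are bookkeeping rather than mathematics — first, confirming that the sampling regime of~\cite{DBLP:conf/tacas/ChatzikokolakisCG10} indeed matches the one-component instantiation above (so that the single parameter $n$ is the sample size $n_1$ and the weight is $1$), and second, carrying out the final telescoping $\#\X\cdot\#\Y - \#\X - \#\Y + 1 = (\#\X-1)(\#\Y-1)$ only after the non-zero-probability hypothesis has been used to drop the distinctions $\X^+ = \X$, $\Y^+ = \Y$ and $\D = \X\times\Y$; without that hypothesis the three sums would read $\#\D$, $\#\X^+$ and $\#\Y^+$, and the clean product form would not emerge.
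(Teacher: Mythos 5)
Your derivation is correct, and it is essentially what the paper intends: the paper offers no explicit proof of Corollary~\ref{lem:MI:expectation:Chothia}, merely asserting that it ``follows from Theorem~\ref{lem:MI:expectation:general}'', and your single-component instantiation ($\I=\{1\}$, $\J=\emptyset$, $\theta_1=1$, $n_1=n$, $D_1=P_{XY}$) with the full-support assumption is the specialization that yields exactly $\frac{(\#\X-1)(\#\Y-1)}{2n}$. Your closing caveat about matching the sampling regime is well taken: the surrounding text of the paper describes a per-secret decomposition into components $S_x$ with known $P_X$, and that instantiation would \emph{not} reproduce the product formula (the $\hat{H}(X)$ bias vanishes and the $\hat{H}(Y)$ term acquires a $\sum_{x,y}P_{XY}[x,y]^2/(P_X[x]P_Y[y])$ contribution, as captured by Proposition~\ref{lem:MI:expectation:KnownPrior}), which is precisely the discrepancy the paper exploits when criticizing \LeakWatch{}; so the single-multinomial reading you adopted is the right one for deriving this corollary.
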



Using this result the bias $\Expect{\hat{I}(X; Y)} - I(X; Y)$ is calculated as $\frac{(\#\X - 1)(\#\Y - 1)}{2n}$ in~\cite{DBLP:conf/tacas/ChatzikokolakisCG10}, which depends only on the size of the joint distribution matrix.
However, the bias can be strongly influenced by probability values close or equivalent to zero in the distribution; therefore their approximate results can be correct only when all joint probabilities are non-zero and large enough, which is a strong restriction in practice.
We show in Section~\ref{sec:application-QIF} that the tool \LeakWatch{}~\cite{DBLP:conf/esorics/ChothiaKN14} uses Corollary~\ref{lem:MI:expectation:Chothia}, and consequently miscalculates bias and gives an estimate far from the true value in the presence of very small probability values.

\subsubsection{Our Estimation Using Knowledge of Prior Distributions}
\label{subsec:better-known-prior}

To overcome these issues we present more general results in the case in which the analyst knows the prior distribution $P_X$.
We assume that a system $\Sys$ is decomposed into the disjoint component $S_{ix}$ for each index $i\in\I$ and input $x\in\X$, and that each $S_{ix}$ is executed with probability $\theta_{ix}$ in the system $\Sys$.
Let $\Theta = \{ \theta_{ix} : i\in\I,\, x\in\X \}$.

\paragraph{Estimation of Mutual Information}
\label{subsec:better-known-prior-MI}

In the estimation of mutual information we separately execute each component $S_{ix}$ multiple times to collect execution traces.
Unlike the previous work the analyst may change the number of executions $n_{i} P_X[x]$ to $n_i \lambda_i[x]$ where $\lambda_i[x]$ is an \emph{importance prior} that the analyst chooses to determine how the sample size $n_i$ is allocated for each component $S_{ix}$.
An adaptive way of choosing the importance priors will be presented in Section~\ref{sec:adaptive}.
Let $\Lambda = \{ \lambda_{i} : i\in\I \}$.

Given the number $\Kixy$ of $S_{ix}$'s traces with output $y$, we define the conditional distribution $\Di$ of output given input:
$
\Di[y|x]\!\eqdef\!\frac{\Kixy}{n_i \lambda_i[x]}.
$
Let $\Mixy\!=\!\frac{\theta_{ix}^2}{\lambda_i[x]} \Di[y|x] \left( 1\!-\!\Di[y|x] \right)$.
Then we can calculate the mean and variance of the mutual information $\hat{I}_{\Theta,\Lambda}(X; Y)$ using $\hat{\Di}$, $\Theta$, $\Lambda$ as follows.

\begin{restatable}[Mean of mutual information estimated using the knowledge of the prior]{proposition}{resMeanMIKnownPrior}\label{lem:MI:expectation:KnownPrior}
~~
The expected value $\Expect{\hat{I}_{\Theta,\Lambda}(X; Y)}$ of the estimated mutual information is given by:
\begin{align*}
\displaystyle\hspace{-1ex}
\Expect{\hat{I}_{\Theta,\Lambda}(X; Y)} =
I(X; Y)+
\sum_{i\in\I}
\frac{1}{2n_i}
\sum_{y\in\Y^+}\hspace{-0.2ex}
\Bigl(
\sum_{x\in\D_y}\hspace{-0.7ex}
{\textstyle \frac{\Mixy}{P_{XY}[x, y]}}
-
{\textstyle \frac{\sum_{x\in\D_y}\Mixy }{P_{Y}[y]} }
\!\Bigr)\!
+ \O(n_i^{-2})
\texttt{.}
\end{align*}
\end{restatable}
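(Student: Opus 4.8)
The plan is to follow the same route as the proof of Theorem~\ref{lem:MI:expectation:general}: write the estimator as $\hat{I}_{\Theta,\Lambda}(X;Y) = \hat{H}(X) + \hat{H}(Y) - \hat{H}(X,Y)$ and compute the three entropy biases $\Expect{\hat{H}(X)}-H(X)$, $\Expect{\hat{H}(Y)}-H(Y)$, $\Expect{\hat{H}(X,Y)}-H(X,Y)$ separately. The one new ingredient, which makes this case strictly easier than Theorem~\ref{lem:MI:expectation:general}, is that the prior is known exactly: since the component $S_{ix}$ is executed exactly $n_i\lambda_i[x]$ times we have $\sum_{y\in\Y}\Kixy = n_i\lambda_i[x]$, hence $\ph_X[x] = \sum_{i\in\I}\frac{\theta_{ix}}{n_i\lambda_i[x]}\sum_{y\in\Y}\Kixy = \sum_{i\in\I}\theta_{ix} = P_X[x]$ for every realization of the sampling. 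Thus $\hat{H}(X)=H(X)$ deterministically, and the whole bias of $\hat{I}_{\Theta,\Lambda}$ comes from $\hat{H}(Y)$ and $\hat{H}(X,Y)$.

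To compute $\Expect{\hat{H}(X,Y)}$ I would Taylor-expand as in Theorem~\ref{lem:MI:expectation:general}. Here $\ph_{XY}[x,y] = \sum_{i\in\I}\frac{\theta_{ix}}{n_i\lambda_i[x]}\Kixy$ is a linear function of the counts $\Kixy$, which are mutually independent across $i$ (the $S_{ix}$ are disjoint and sampled independently) with $\Expect{\Kixy} = n_i\lambda_i[x]\,\Di[y|x]$ and $\Variance{\Kixy} = n_i\lambda_i[x]\,\Di[y|x](1-\Di[y|x])$. Expanding $t\mapsto t\log t$ applied to $\ph_{XY}[x,y]$ to second order about its mean $P_{XY}[x,y]$, the first-order term has zero expectation, the mixed second-order terms vanish by independence, and the diagonal second-order terms, summed over $(x,y)\in\D$, give a bias $-\sum_{i\in\I}\frac{1}{2n_i}\sum_{(x,y)\in\D}\frac{\Mixy}{P_{XY}[x,y]}$ up to $\O(n_i^{-2})$, using $\bigl(\tfrac{\theta_{ix}}{n_i\lambda_i[x]}\bigr)^{2}\Variance{\Kixy} = \tfrac{1}{n_i}\cdot\tfrac{\theta_{ix}^2}{\lambda_i[x]}\Di[y|x](1-\Di[y|x]) = \tfrac{\Mixy}{n_i}$.

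The computation for $\hat{H}(Y)$ is the same in spirit, the only difference being that $\ph_Y[y] = \sum_{x\in\X}\sum_{i\in\I}\frac{\theta_{ix}}{n_i\lambda_i[x]}\Kixy$ depends on all counts with output $y$; since those counts are mutually independent over all pairs $(i,x)$, again only diagonal second-order terms survive, yielding a bias $-\sum_{i\in\I}\frac{1}{2n_i}\sum_{y\in\Y^+}\frac{\sum_{x\in\D_y}\Mixy}{P_Y[y]}$ up to $\O(n_i^{-2})$. Subtracting, $\Expect{\hat{I}_{\Theta,\Lambda}(X;Y)}-I(X;Y) = \bigl(\Expect{\hat{H}(Y)}-H(Y)\bigr) - \bigl(\Expect{\hat{H}(X,Y)}-H(X,Y)\bigr)$, and regrouping the double sum over $\D = \bigcup_{y\in\Y^+}(\{y\}\times\D_y)$ gives exactly the claimed formula. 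As a consistency check one can also derive this directly from Theorem~\ref{lem:MI:expectation:general} by taking the $S_{ix}$ themselves as the components, indexed by the pair $(i,x)$, with probabilities $\theta_{ix}$ and sample sizes $n_i\lambda_i[x]$: the $X$-marginal of the normalized distribution of $S_{ix}$ is a point mass on $x$, so its $\varphi_{ix}$-contribution vanishes and its $\varphi_{ixy}$- and $\varphi_{iy}$-contributions collapse to the $\Mixy$-expressions above.

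I expect the remaining work to be purely technical bookkeeping rather than conceptual. The two delicate points are: (i) showing that the higher-order Taylor remainders are uniformly $\O(n_i^{-2})$, which needs $\lambda_i[x]$ bounded away from $0$ on the relevant support and $\sum_{i}n_i$ large relative to $\#\X\cdot\#\Y$; and (ii) the careful treatment of zero-probability entries, i.e. checking that all sums may be restricted to the support sets $\D$, $\D_y$, and $\Y^+$ and that entries outside them contribute nothing to the bias — this is precisely where Corollary~\ref{lem:MI:expectation:Chothia}, and hence \LeakWatch{}, becomes inaccurate when some $P_{XY}[x,y]$ are tiny. Both issues are already handled in the proof of Theorem~\ref{lem:MI:expectation:general}, so the proof amounts to transporting that argument to the finer $(i,x)$-indexed decomposition; full details are in Appendix~\ref{sec:app:proofs}.
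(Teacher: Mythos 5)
Your proposal is correct and follows essentially the same route as the paper: the paper's own proof simply observes that the known prior makes $\hat{H}(X)=H(X)$ exact and then reuses the Taylor-expansion computations of $\Expect{\hat{H}(Y)}$ and $\Expect{\hat{H}(X,Y)}$ from the proof of Theorem~\ref{lem:MI:expectation:general}, which is precisely what you do (your variance bookkeeping with $\bigl(\tfrac{\theta_{ix}}{n_i\lambda_i[x]}\bigr)^2\Variance{\Kixy}=\tfrac{\Mixy}{n_i}$ matches the paper's, and your closing remark about instantiating Theorem~\ref{lem:MI:expectation:general} with the $(i,x)$-indexed components is exactly how Section~\ref{subsec:previous-known-prior} frames the result).
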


\begin{restatable}[Variance of mutual information estimated using the knowledge of the prior]{proposition}{resVarianceMIKnownPrior}\label{lem:MI:variance:KnownPrior}
~~~
The variance\, $\Variance{\hat{I}_{\Theta,\Lambda}(X; Y)}$ of the estimated mutual information is given by:
\begin{align*}
\hspace{-1ex}
\Variance{\hat{I}_{\Theta,\Lambda}(X; Y)} =
\sum_{i\in\I}\sum_{x\in\X^+}\hspace{-0.3ex}
\frac{ \theta_{ix}^2}{n_i \lambda_i[x]}
\!\Biggl(
\hspace{-1.7ex}
\sum_{~~~~y\in\D_x}\hspace{-1.8ex}
{\textstyle D_i[y|x]
\Bigl(\log\frac{P_Y[y]}{P_{XY}[x,y]}\!\Bigr)^{\!2}}
\!-
\biggl(
\hspace{-1.7ex}
\sum_{~~~~y\in\D_x}\hspace{-1.8ex} D_i[y|x]
\Bigl({\textstyle \log\frac{P_Y[y]}{P_{XY}[x,y]}} \Bigr) \biggr)^{\!2}
\Biggr)
\!+ \O(n_i^{-2})
\texttt{.}
\end{align*}
\end{restatable}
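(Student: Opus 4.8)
The plan is to adapt the variance computation behind Theorem~\ref{lem:MI:variance:general} to the finer decomposition $\{S_{ix}\}_{i\in\I,\,x\in\X}$, exploiting the fact that knowing the prior $P_X$ removes the randomness of the input marginal. First I would observe that, for each fixed pair $(i,x)$, the vector of counts $(\Kixy)_{y\in\Y}$ is multinomial with $n_i\lambda_i[x]$ trials and cell probabilities $D_i[\cdot|x]$, so in particular $\sum_{y\in\Y}\Kixy = n_i\lambda_i[x]$ deterministically. Consequently the empirical input marginal is exact, $\ph_X[x] = \sum_{y\in\Y}\sum_{i\in\I}\theta_{ix}\tfrac{\Kixy}{n_i\lambda_i[x]} = \sum_{i\in\I}\theta_{ix} = P_X[x]$, hence $\hat{H}(X) = H(X)$ is constant and
\[
\hat{I}_{\Theta,\Lambda}(X;Y) \;=\; H(X) + \hat{H}(Y) - \hat{H}(X,Y),
\]
so it suffices to compute $\Variance{\hat{H}(Y) - \hat{H}(X,Y)}$; equivalently, in the covariance expansion used for Theorem~\ref{lem:MI:variance:general} every term involving $\hat{H}(X)$ drops out.

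Next I would treat $h(\bm{K}) \eqdef \hat{H}(Y) - \hat{H}(X,Y) = \sum_{(x,y)\in\D}\ph_{XY}[x,y]\log\frac{\ph_{XY}[x,y]}{\ph_Y[y]}$ as a smooth function of the full count vector $\bm{K} = (\Kixy)_{i,x,y}$ and apply the delta method, i.e.\ a multivariate Taylor expansion of $h$ about the means $\Expect{\Kixy} = n_i\lambda_i[x]D_i[y|x]$. Because $\tfrac{\partial\ph_{XY}[x',y']}{\partial\Kixy}$ equals $\tfrac{\theta_{ix}}{n_i\lambda_i[x]}$ when $(x',y') = (x,y)$ and $0$ otherwise, and likewise $\tfrac{\partial\ph_Y[y']}{\partial\Kixy} = \tfrac{\theta_{ix}}{n_i\lambda_i[x]}$ iff $y' = y$, the gradient of $h$ at the means comes out as $\tfrac{\partial h}{\partial\Kixy} = \tfrac{\theta_{ix}}{n_i\lambda_i[x]}\log\tfrac{P_{XY}[x,y]}{P_Y[y]}$. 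The first-order terms of the expansion have zero mean, counts attached to distinct pairs $(i,x)$ are independent, so the leading contribution to the variance is $\sum_{i\in\I}\sum_{x\in\X}\sum_{y,y'}\tfrac{\partial h}{\partial\Kixy}\tfrac{\partial h}{\partial K_{ixy'}}\Cov{\Kixy, K_{ixy'}}$.

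Then I would substitute the multinomial second moments $\Variance{\Kixy} = n_i\lambda_i[x]D_i[y|x](1-D_i[y|x])$ and $\Cov{\Kixy, K_{ixy'}} = -n_i\lambda_i[x]D_i[y|x]D_i[y'|x]$ for $y \neq y'$, and collapse the inner double sum with the elementary identity $\sum_y a_y^2 D_i[y|x] - \bigl(\sum_y a_y D_i[y|x]\bigr)^2$, where $a_y = \log\tfrac{P_Y[y]}{P_{XY}[x,y]}$ (its sign is immaterial since $a_y$ occurs only squared or inside a squared sum); this is exactly the variance of $a_Y$ under the conditional law $D_i[\cdot|x]$. Restricting the outer sums to $x\in\X^+$ and the inner ones to $y\in\D_x$ --- legitimate because $\Kixy$ is almost surely $0$ whenever $P_{XY}[x,y] = 0$ --- yields the stated formula, with $\O(n_i^{-2})$ absorbing the quadratic-and-higher remainder of the expansion.

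I expect the main obstacle to be controlling that remainder rigorously: one has to check that the second- and third-order terms of the Taylor expansion of $h$, once multiplied by the relevant central moments of $\bm{K}$ (which are $\O(n_i\lambda_i[x])$ or smaller), contribute only at order $n_i^{-2}$, and that the logarithmic singularities at the zero entries of $P_{XY}$ and $P_Y$ cause no trouble. This is the same bookkeeping as in the proof of Theorem~\ref{lem:MI:variance:general}, only simpler here because $\ph_X$ is deterministic; the detailed argument is deferred to Appendix~\ref{sec:app:proofs}.
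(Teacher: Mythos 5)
Your proposal is correct and follows essentially the same route as the paper: the key observation that the known prior makes $\hat{H}(X)=H(X)$ deterministic, followed by a Taylor/delta-method expansion with the multinomial covariances of the per-component counts, is exactly the paper's argument (which reassembles $\Variance{\hat{H}(Y)}$, $\Variance{\hat{H}(X,Y)}$ and their covariance from the proof of Theorem~\ref{lem:MI:variance:general}). The only difference is presentational --- you expand the combination $\hat{H}(Y)-\hat{H}(X,Y)$ directly and make the finer per-$(i,x)$ multinomial structure explicit, which is if anything slightly cleaner.
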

See Appendix~\ref{subsec:proof:est:prior} for the proofs.

\subsubsection{Estimation of Conditional Entropy}
\label{subsec:Conditional-entropy}

The new method can also estimate the conditional Shannon entropy $H(X|Y)$ of a random variable $X$ given a random variable $Y$ in a system.
In the context of quantitative security, $H(X|Y)$ represents the uncertainty of a secret $X$ after observing an output $Y$ of the system.
The mean and variance of the conditional entropy are obtained from those of the mutual information in the case where the analyst knows the prior.
\begin{proposition}[Mean of estimated conditional entropy]\label{lem:cond-entropy:expectation}
The expected value $\Expect{\hat{H}_{\Theta,\Lambda}(X|Y)}$ of the estimated conditional Shannon entropy is given by $H(X)-\Expect{\hat{I}_{\Theta,\Lambda}(X; Y)}$ where $\Expect{\hat{I}_{\Theta,\Lambda}(X; Y)}$ is the expected value of the mutual information in the case where the analyst knows the prior (shown in Proposition~\ref{lem:MI:expectation:KnownPrior}).
\end{proposition}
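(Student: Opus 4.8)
The plan is to reduce the statement to the elementary information-theoretic identity $H(X|Y) = H(X) - I(X;Y)$ together with the observation that, in the regime where the analyst knows the prior $P_X$, the quantity $H(X)$ carries no estimation error whatsoever.

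First I would make explicit how the conditional-entropy estimator is defined in this setting: $\hat{H}_{\Theta,\Lambda}(X|Y) \eqdef H(X) - \hat{I}_{\Theta,\Lambda}(X;Y)$. This is the natural estimator here, because $P_X$ is exactly known, so $H(X)$ need not (and via sampling cannot) be estimated, and the identity above expresses the only remaining unknown, $I(X;Y)$, through the very same empirical joint distribution $\hat{P}_{XY}$ assembled from the component sub-distributions of the $S_{ix}$'s. Hence $\hat{I}_{\Theta,\Lambda}(X;Y)$ appearing in this definition is precisely the random variable whose mean was computed in Proposition~\ref{lem:MI:expectation:KnownPrior}.

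Next I would take expectations of both sides. Since $H(X)$ depends only on the known prior $P_X$ and on no sampled data, it is a deterministic constant, so linearity of expectation gives $\Expect{\hat{H}_{\Theta,\Lambda}(X|Y)} = H(X) - \Expect{\hat{I}_{\Theta,\Lambda}(X;Y)}$, which is exactly the claimed formula; substituting the closed form from Proposition~\ref{lem:MI:expectation:KnownPrior} then yields the explicit bias (including the same $\O(n_i^{-2})$ remainder) should one want it spelled out.

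I do not anticipate a genuine obstacle: the content is purely the bookkeeping that $H(X)$ is a constant and that the conditional-entropy estimator is defined through the mutual-information estimator. The only point requiring a line of care is to note that the empirical joint distribution underlying $\hat{I}_{\Theta,\Lambda}(X;Y)$ in the definition of $\hat{H}_{\Theta,\Lambda}(X|Y)$ is identical to the one used in Proposition~\ref{lem:MI:expectation:KnownPrior}, so that the cited mean formula applies verbatim; once that identification is made the proof is immediate.
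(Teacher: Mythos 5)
Your proposal is correct and follows essentially the same route as the paper: both rest on the identity $H(X|Y) = H(X) - I(X;Y)$, the fact that $H(X)$ is a deterministic constant when the prior is known, and linearity of expectation applied to the mutual-information estimator of Proposition~\ref{lem:MI:expectation:KnownPrior}. Your additional remark that the estimator $\hat{H}_{\Theta,\Lambda}(X|Y)$ is \emph{defined} through $\hat{I}_{\Theta,\Lambda}(X;Y)$ over the same empirical joint distribution is a useful explicit justification of what the paper leaves implicit, but it does not change the argument.
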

\begin{proof}
By $H(X|Y) = H(X) - I(X; Y)$, we obtain 
$\Expect{\hat{H}_{\Theta,\Lambda}(X|Y)} = H(X) - \Expect{\hat{I}_{\Theta,\Lambda}(X; Y)}$.
\end{proof}

\begin{proposition}[Variance of estimated conditional entropy]\label{lem:cond-entropy:variance}
The variance $\Variance{\hat{H}_{\Theta,\Lambda}(X|Y)}$ of the estimated conditional Shannon entropy coincides with the variance $\Variance{\hat{I}_{\Theta,\Lambda}(X; Y)}$ of the mutual information in the case where the analyst knows the prior (shown in Proposition~\ref{lem:MI:variance:KnownPrior}).
\end{proposition}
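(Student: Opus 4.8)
The plan is to reduce the statement to a one-line consequence of Proposition~\ref{lem:MI:variance:KnownPrior} by exploiting the identity $H(X|Y) = H(X) - I(X;Y)$ together with the crucial feature of the present setting: the prior distribution $P_X$ is \emph{known} to the analyst. Because $P_X$ is known exactly, the prior entropy $H(X) = -\sum_{x\in\X} P_X[x]\log P_X[x]$ is a fixed deterministic constant; it is not estimated from the sampled traces and hence carries no randomness. First I would pin down the estimator, consistently with the rest of Section~\ref{subsec:better-known-prior} and with Proposition~\ref{lem:cond-entropy:expectation}, as
\begin{align*}
\hat{H}_{\Theta,\Lambda}(X|Y) \eqdef H(X) - \hat{I}_{\Theta,\Lambda}(X;Y),
\end{align*}
where $\hat{I}_{\Theta,\Lambda}(X;Y)$ is the mutual-information estimator built from the empirical conditional distributions $\hat{D}_i[y|x] = \Kixy/(n_i\lambda_i[x])$ exactly as in Proposition~\ref{lem:MI:expectation:KnownPrior}.

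With the estimator fixed, the computation is immediate: since $H(X)$ is a constant, the variance operator is invariant under the additive shift and the sign flip, so
\begin{align*}
\Variance{\hat{H}_{\Theta,\Lambda}(X|Y)}
= \Variance{H(X) - \hat{I}_{\Theta,\Lambda}(X;Y)}
= \Variance{-\hat{I}_{\Theta,\Lambda}(X;Y)}
= \Variance{\hat{I}_{\Theta,\Lambda}(X;Y)},
\end{align*}
and the right-hand side is given in closed form (up to $\O(n_i^{-2})$) by Proposition~\ref{lem:MI:variance:KnownPrior}.

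The only point that genuinely needs care --- and the reason the statement is specific to the known-prior case --- is precisely the observation that $H(X)$ contributes no variance. In the ``no knowledge on the system'' setting of Section~\ref{sec:other-measures}, $H(X)$ would itself be estimated via $\hat{H}(X)$ and would be correlated with $\hat{I}(X;Y)$, so one would instead have to expand $\Variance{\hat{H}(X)} + \Variance{\hat{I}(X;Y)} - 2\Cov{\hat{H}(X),\hat{I}(X;Y)}$, along the lines of the proof sketch of Theorem~\ref{lem:MI:variance:general}. Under knowledge of the prior, both the $\Variance{\hat{H}(X)}$ term and this covariance term vanish, which is exactly what makes the coincidence with $\Variance{\hat{I}_{\Theta,\Lambda}(X;Y)}$ an equality rather than an approximation of a more complicated expression. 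I do not expect any real obstacle here; once the estimator is defined and the determinism of $H(X)$ is made explicit, the proof is a two-line argument citing Proposition~\ref{lem:MI:variance:KnownPrior}.
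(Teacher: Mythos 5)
Your proof is correct and follows essentially the same route as the paper's: both use $H(X|Y)=H(X)-I(X;Y)$ together with the fact that, since the prior is known, $H(X)$ is a deterministic constant, so $\Variance{\hat{H}_{\Theta,\Lambda}(X|Y)}=\Variance{\hat{I}_{\Theta,\Lambda}(X;Y)}$. Your version merely spells out the determinism of $H(X)$ and the contrast with the unknown-prior case, which the paper leaves implicit.
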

\begin{proof}
By $H(X|Y) = H(X) - I(X; Y)$, we obtain 
$\Variance{\hat{H}_{\Theta,\Lambda}(X|Y)} = \Variance{\hat{I}_{\Theta,\Lambda}(X; Y)}$.
\end{proof}

\subsection{Abstraction-Then-Sampling Using Partial Knowledge of Components}
\label{subsec:known-subsystem}
\begin{wrapfigure}[16]{t}{0.43\linewidth}
  \centering
  \includegraphics[width=0.42\textwidth]{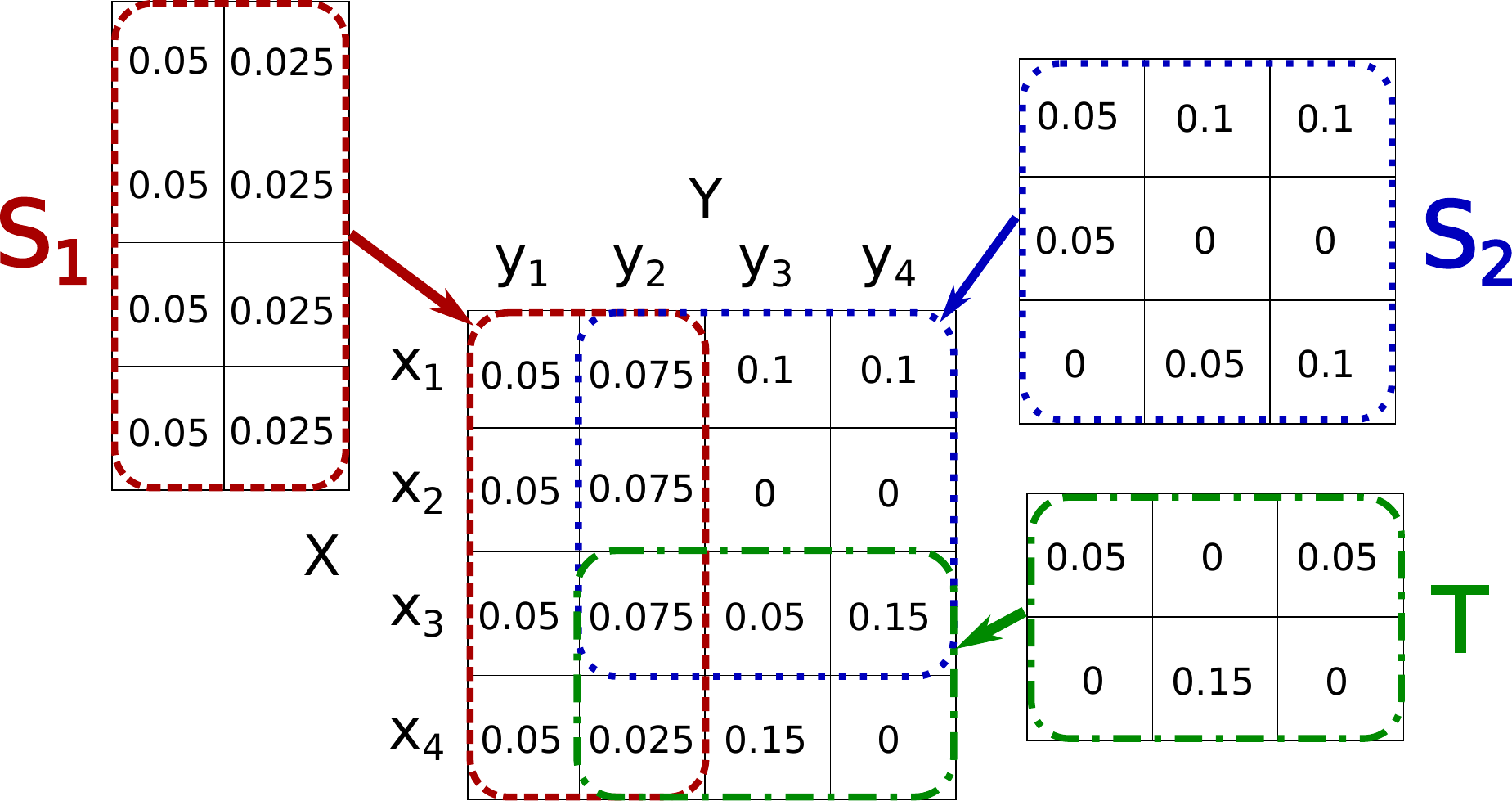}\vspace{10pt}
  \caption{Joint distribution composed of 3 components. All the rows of component $S_1$ are identical, hence abstraction-then-sampling can be used on it.}
  \label{fig:jointalt}
\end{wrapfigure}

In this section we extend the hybrid statistical estimation method to consider the case in which the analyst knows that the output of some of the components does not depend on the secret input (for instance by static code analysis).
Such prior knowledge may help us abstract components into simpler ones and thus reduce the sample size for the statistical analysis.

We illustrate the basic idea of this ``abstraction-then-sampling'' technique as follows.
Let us consider an analyst who knows two pairs $(x,y)$ and $(x',y')$ of inputs and outputs have the same probability in a component $S_i$:
$\Di[x,y] = \Di[x',y']$.
Then, when we construct the empirical distribution $\Dhi$ from a set of traces, we can count the number $K_{i\{(x,y),(x',y')\}}$ of traces having either $(x,y)$ or $(x',y')$, and divide it by two:
$\Kixy = K_{ix'y'} = \frac{K_{i\{(x,y),(x',y')\}}}{2}$.
Then to achieve a certain accuracy, the sample size required for the estimation using the prior knowledge on the equality $\Kixy = K_{ix'y'}$ is smaller than that without using it.

In the following we generalize this idea to deal with similar information that the analyst may possess about the components.
Let us consider a (probabilistic) system in which for some components, observing the output provides no information on the input. Assume that the analyst is aware of this by \emph{qualitative} information analysis (for verifying non-interference).
Then such a component $S_i$ has a sub-channel matrix where all non-zero rows have an identical conditional distribution of outputs given inputs \cite{cover2006eit}.
Consequently, when we estimate the $\#\X_i\times\#\Y_i$ matrix of $S_i$ it suffices to estimate one of the rows, hence the number of executions is proportional to $\#\Y_i$ instead of $\#\X_i\times\#\Y_i$.

The abstraction-then-sampling approach can be simply explained by referring to the joint distribution matrix in Fig.~\ref{fig:jointalt}. 
Note that each row of the sub-distribution matrix for component $S_1$ is identical, even though the rows of the joint matrix are not, and assume that the analyst knows this by analyzing the code of the program and finding out that for component $S_1$ the output is independent from the input.
Then the analyst would know that it is unnecessary to execute the component separately for each possible input value in $S_1$: it is sufficient to execute the component only for one value of the input, and to apply the results to each row in the sub-distribution matrix for component $S_1$. 
This allows the analyst to obtain more precise results and a smaller variance (and confidence interval) on the estimation given a fixed total sample size $n_i$ for the component.

Note that even when some components leak no information, computing the mutual information for the whole system requires constructing the matrix of the system, hence the matrices of all components.

Let $\II$ be the set of indexes of components that have channel matrices whose non-zero rows consist of the same conditional distribution.
For each $i\in\II$, we define $\pi_i[x]$ as the probability of having an input $x$ in the component $S_i$.
To estimate the mutual information for the whole system, we apply the abstraction-then-sampling technique to the components $\II$ and the standard sampling technique (shown in Section~\ref{sec:compositional}) to the components $\I\setminus\II$.

Then the mean and variance of the mutual information are as follows.
The following results show that the bias and confidence interval are narrower than when not using the prior knowledge of components.

\begin{restatable}[Mean of mutual information estimated using the abstraction-then-sampling]{theorem}{resMeanMISymbolic}\label{thm:MI:expectation:symbolic}
The expected value $\Expect{\hat{I}_{\II}(X; Y)}$ of the estimated mutual information is given by:
\begin{align*}
\displaystyle&\hspace{-2ex}
\Expect{\hat{I}_{\II}(X; Y)} = 
I(X; Y)+\hspace{-1ex}
\sum_{i\in\I\setminus\II}\hspace{-0.5ex}
\frac{\theta_i^2}{2n_i}\hspace{-0.2ex}
\biggl(\!
\sum_{\,(x,y)\in\D\hspace{-3.2ex}}\hspace{-0.2ex}
\varphi_{ixy}
-\hspace{-0.5ex}
\sum_{x\in\X^+\hspace{-2.5ex}}\hspace{-0.2ex}
\varphi_{ix}
-\hspace{-0.5ex}
\sum_{y\in\Y^+\hspace{-2.5ex}}\hspace{-0.2ex}
\varphi_{iy}\hspace{-0.5ex}
{\biggr)}\hspace{-0.5ex}
+\hspace{-0.5ex}
\displaystyle
\sum_{i\in\II}\hspace{-0.5ex}
\frac{\theta_i^2}{2n_i}\hspace{-0.2ex}
\biggl(\!
\sum_{\,(x,y)\in\D\hspace{-3.2ex}}\hspace{-0.2ex}
\psi_{ixy}
-\hspace{-0.5ex}
\sum_{\,y\in\Y^+\hspace{-1.5ex}}\hspace{-0.2ex}
\varphi_{iy}\hspace{-0.5ex}
\biggr)\!
+ \O(n_i^{-2})
\end{align*}
where $\psi_{ixy} \eqdef \frac{\Di[x,y]\pi_i[x] - \Di[x,y]^2}{P_{XY}[x, y]}$.
\end{restatable}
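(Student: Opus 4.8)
The plan is to reduce Theorem~\ref{thm:MI:expectation:symbolic} to the same Taylor-expansion computation used in the proof sketch of Theorem~\ref{lem:MI:expectation:general}, tracking separately the two kinds of statistically-analysed components: the ordinary ones indexed by $\I\setminus\II$ and the abstraction-then-sampling ones indexed by $\II$. As before, I would start from the decomposition $\hat{I}(X;Y) = \hat{H}(X) + \hat{H}(Y) - \hat{H}(X,Y)$ and compute the bias of each of the three empirical entropies separately; the mutual-information bias is then the alternating sum. The marginal $Y$-entropy $\hat{H}(Y)$ contributes a $\varphi_{iy}$ term for \emph{every} statistically-sampled component $i\in\I$ (whether or not $i\in\II$), since the abstraction assumption concerns the conditional distribution of $Y$ given $X$ inside a component and does not change how $Y$ is marginally sampled; this explains why the $\sum_{y\in\Y^+}\varphi_{iy}$ term appears in both sums in the statement. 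The difference lies entirely in $\hat{H}(X)$ and $\hat{H}(X,Y)$, where for $i\in\II$ the empirical joint counts are obtained by sampling a single row and copying it (scaled by $\pi_i[x]$) across all input rows.

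Concretely, for $i\in\I\setminus\II$ I would reuse verbatim the computation in the sketch: write $\hat{H}(X,Y) = -\sum_{(x,y)\in\D} \fxy(\BKxy)$, Taylor-expand $\fxy$ around $\BKoxy=\Expect{\BKxy}$, use that the $\Kixy$ are independent across $i$ with $\Variance{\Kixy}=n_i D_i[x,y](1-D_i[x,y])$, and collect the $\O(n_i^{-1})$ term to get the $\varphi_{ixy}$ contribution; the marginal-entropy computations are analogous and give $\varphi_{ix}$ and $\varphi_{iy}$. For $i\in\II$ the key change is the covariance structure of the empirical joint counts. If $L_i$ denotes the number of traces of $S_i$ falling in output class $y$ (pooled over all inputs, which is legitimate because the conditional row is common), then the empirical joint sub-distribution assigns $\theta_i \pi_i[x] \hat{D}_i[y\mid x]$-type mass, and the relevant second moment becomes $\Variance{\text{(pooled count)}} = n_i D_i[x,y]\pi_i[x]^{-1}\cdots$ — after substituting $D_i[x,y] = \pi_i[x] D_i[y\mid x]$ one recovers the modified numerator $\Di[x,y]\pi_i[x] - \Di[x,y]^2$, i.e.\ $\psi_{ixy}$, in place of $\Di[x,y]-\Di[x,y]^2$. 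The same pooling makes the $\hat{H}(X)$ bias term vanish for $i\in\II$ (the input marginal inside such a component is known exactly, being $\theta_i\pi_i$), which is why no $\varphi_{ix}$ term survives in the $\II$-sum. Assembling $\Expect{\hat{H}(X)} + \Expect{\hat{H}(Y)} - \Expect{\hat{H}(X,Y)}$ and cancelling the zeroth-order terms against $I(X;Y)$ yields the claimed formula, with all dropped terms of order $\O(n_i^{-2})$.

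The main obstacle I anticipate is getting the covariance bookkeeping for the $\II$-components exactly right. Under abstraction-then-sampling the empirical entries $\hat{D}_i[x,y]$ for different $x$ (same $y$) are no longer built from disjoint sub-samples but are deterministic rescalings of one shared multinomial count, so they are perfectly correlated; one must be careful that the multivariate Taylor expansion of $\fxy$ is taken in the correct set of independent underlying random variables (the pooled per-output counts, not the per-$(x,y)$ counts) before computing expectations. A related subtlety is handling output values with zero probability and input rows that are identically zero in $S_i$ — the set $\D$, the restricted sums over $\X^+$ and $\Y^+$, and the definition of $\pi_i$ must be aligned so that no division by zero occurs and the cancellations are exact. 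Since the paper states (Appendices~\ref{subsec:proof:mean:ATS} and~\ref{subsec:proof:mean:standard}) that Theorem~\ref{lem:MI:expectation:general} is itself derived as the special case $\II=\emptyset$ of this theorem, the cleanest route is to carry out the $\II$-general computation once and specialise at the end; I would structure the argument that way, proving the three entropy biases in the general form and only then reading off both Theorem~\ref{thm:MI:expectation:symbolic} and Theorem~\ref{lem:MI:expectation:general}.
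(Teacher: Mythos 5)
Your proposal follows essentially the same route as the paper's own proof: it decomposes $\hat{I}_{\II}(X;Y)$ into the three empirical entropies, Taylor-expands each around the expected counts using the pooled per-output counts as the independent underlying variables for the components in $\II$ (which is exactly how the paper's Propositions~\ref{prop:joint:expectation:symbolic}--\ref{prop:X:expectation:symbolic} are organised), and correctly identifies both the variance computation $\pi_i[x]^2\cdot n_i\frac{\Di[x,y]}{\pi_i[x]}\bigl(1-\frac{\Di[x,y]}{\pi_i[x]}\bigr)$ that produces $\psi_{ixy}$ and the reasons why $\varphi_{iy}$ survives for all components while $\varphi_{ix}$ vanishes for $i\in\II$. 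The plan to prove the general statement once and recover Theorem~\ref{lem:MI:expectation:general} as the special case $\II=\emptyset$ is also precisely the paper's structure, so the proposal is correct and matches the published argument.
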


See Appendix~\ref{subsec:proof:mean:ATS} for the proof.

%
\begin{restatable}[Variance of mutual information estimated using the abstraction-then-sampling]{theorem}{resVarianceMISymbolic}\label{thm:MI:variance:symbolic}~~
The variance \allowbreak $\Variance{\hat{I}_{\II}(X; Y)}$ of the estimated mutual information is given by:
\begin{align*}
\Variance{\hat{I}_{\II}(X; Y)} =
&\hspace{-1ex}
\sum_{~~i\in\I\setminus\II}\hspace{-0.5ex} 
\frac{\theta_i^2}{n_i}
\Biggl(
\hspace{-2ex}
\sum_{~~~~(x, y)\in\D}\hspace{-2ex} \Di[x,y]
\Bigl({\textstyle 1 + \log\!\frac{P_X[x]P_Y[y]}{P_{XY}[x,y]}} \Bigr)^{\!2}
\!-\!
\biggl(\hspace{-2ex}\sum_{~~~~(x, y)\in\D}\hspace{-2ex} \Di[x,y]	
\Bigl({\textstyle 1 + \log\!\frac{P_X[x]P_Y[y]}{P_{XY}[x,y]}} \Bigr) \biggr)^{\!2}
\Biggr)\!
\\
&\hspace{0.2ex}
+\hspace{-0.2ex}
\sum_{i\in\II} \frac{\theta_i^2}{n_i}
\biggl(\hspace{-0.5ex}
\sum_{~~y\in\Y^+}\hspace{-0.7ex}
\Dyi[y] \gamma_{ixy}^{\!2}
-
\Bigl(\hspace{-0.5ex}
\sum_{~~y\in\Y^+}\hspace{-0.7ex}
\Dyi[y] \gamma_{ixy}
\Bigr)^{\!2}
\biggr)\!
+\, \O(n_i^{-2})\;
\end{align*}
where 
$\gamma_{ixy}\eqdef\displaystyle
 \log P_Y[y] - \sum_{x\in\X}\pi_i[x]\log P_{XY}[x,y]$.
\end{restatable}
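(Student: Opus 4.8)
The plan is to follow the same skeleton as the proof sketch of Theorem~\ref{lem:MI:variance:general}, decomposing $\Variance{\hat I_{\II}(X;Y)}$ via $\hat I = \hat H(X) + \hat H(Y) - \hat H(X,Y)$ into six terms — three variances and three covariances — and then to track separately the contribution of the standard-sampling components $i\in\I\setminus\II$ and the abstraction-then-sampling components $i\in\II$. For the components in $\I\setminus\II$ the analysis is verbatim the one in the proof of Theorem~\ref{lem:MI:variance:general}, since those components are sampled exactly as in Section~\ref{sec:compositional}; independence across components means their contributions simply add, producing the first sum in the statement. So the real work is to recompute, for $i\in\II$, the leading-order ($O(n_i^{-1})$) contribution to each of the six terms under the abstraction-then-sampling sampling scheme.

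First I would set up the sampling model for $i\in\II$. Because all non-zero rows of $S_i$'s sub-channel are identical, a single conditional distribution $D_i[\cdot|x]\equiv D_i[\cdot]$ governs the outputs; let $\Kiy$ be the number of the $n_i$ traces landing in output $y$, so $(\Kiy)_{y}$ is multinomial with parameters $n_i$ and $(\Dyi[y])_y$, and the empirical sub-distribution is recovered as $\hat R_i[x,y] = \theta_i \pi_i[x]\Kiy/n_i$ (input $x$ gets weight $\pi_i[x]$ applied to the row estimate). Next I would Taylor-expand each of $\hat H(X)$, $\hat H(Y)$, $\hat H(X,Y)$ around the true value as functions of the $\Kiy$'s, keeping terms up to second order, exactly as in the sketch of Theorem~\ref{lem:MI:expectation:general} but now with the single-index variable $\Kiy$ rather than $\Kixy$. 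For the variance we need the first-order (linear) part: writing $\hat I_{\II} \approx I + \sum_{i\in\II}\sum_y c_{iy}(\Kiy - \Expect{\Kiy}) + (\text{other components}) + \text{h.o.t.}$, the coefficient $c_{iy}$ is the sum of the partial derivatives of $H(X)$ (which does not depend on $\Kiy$, contributing $0$), of $-H(Y)$-part and of $+H(X,Y)$-part. Computing $\partial \hat H(Y)/\partial \Kiy$ gives (up to the $\theta_i/n_i$ factor) $-(1+\log P_Y[y])$, while $\partial \hat H(X,Y)/\partial\Kiy = \sum_x \pi_i[x](1+\log P_{XY}[x,y])$ because output $y$'s count feeds into every cell $(x,y)$ with weight $\pi_i[x]$ and $\sum_x\pi_i[x]=1$. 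Subtracting, the constant $1$'s cancel and the coefficient collapses to $\tfrac{\theta_i}{n_i}\gamma_{ixy}$ with $\gamma_{ixy} = \log P_Y[y] - \sum_x \pi_i[x]\log P_{XY}[x,y]$, matching the definition in the statement.

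Then I would invoke the multinomial covariance structure: $\Variance{\Kiy} = n_i\Dyi[y](1-\Dyi[y])$ and $\Cov{\Kiy,\Kiy[y']} = -n_i\Dyi[y]\Dyi[y']$ for $y\neq y'$, together with independence across $i$. Plugging the linear approximation into $\Variance{\cdot}$ yields $\sum_{i\in\II}\tfrac{\theta_i^2}{n_i^2}\bigl(\sum_y \gamma_{ixy}^2\,\Variance{\Kiy} + \sum_{y\neq y'}\gamma_{ixy}\gamma_{ixy'}\Cov{\Kiy,\Kiy[y']}\bigr)$, and the multinomial identities turn the double sum into $n_i\bigl(\sum_y \Dyi[y]\gamma_{ixy}^2 - (\sum_y \Dyi[y]\gamma_{ixy})^2\bigr)$, i.e. the empirical variance of $\gamma_{ixy}$ under $\Dyi$. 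After the $n_i$ cancels one factor of $n_i^{-2}$, this gives precisely the second sum in the theorem, with all higher-order contributions absorbed into $\O(n_i^{-2})$. I expect the main obstacle to be bookkeeping the $\pi_i[x]$-weights carefully when differentiating the joint-entropy term — in particular, making sure the chain rule correctly routes a single output-count $\Kiy$ into the $\#\X_i$ cells of row $y$ with the right weights, so that the "$+1$" terms cancel and the $\log$-terms aggregate into $\gamma_{ixy}$ rather than leaving stray $\log P_X[x]$ pieces; and, relatedly, confirming that the cross-terms between $\II$ and $\I\setminus\II$ components vanish by independence so that the two sums genuinely decouple. The rest is the routine multinomial-variance computation already rehearsed for Theorem~\ref{lem:MI:variance:general}, and the detailed derivation would be deferred to Appendix~\ref{subsec:proof:var:ATS}.
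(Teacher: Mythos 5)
Your proposal is correct and follows essentially the same route as the paper's proof in Appendix~\ref{subsec:proof:var:ATS}: a first-order Taylor (delta-method) expansion of $\hat{H}(X)+\hat{H}(Y)-\hat{H}(X,Y)$, the multinomial covariance structure of the counts $\Kixy$, $\Liy$ and $\Kiy$, and independence across components so that the $\I\setminus\II$ and $\II$ contributions decouple. The only differences are organizational and cosmetic: you collapse the linear coefficient of $\hat{I}_{\II}$ with respect to $\Kiy$ into $\gamma_{ixy}$ before taking variances (the paper instead computes the six variance/covariance terms separately via the coefficients $\Bixy$, $\Biy$, $B_{ix\cdot}$ and combines them at the end), and your stated value of $\partial\hat{H}(X,Y)/\partial\Kiy$ is missing a minus sign, so the coefficient is $-\tfrac{\theta_i}{n_i}\gamma_{ixy}$ rather than $+\tfrac{\theta_i}{n_i}\gamma_{ixy}$ --- which is immaterial after squaring.
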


See Appendix~\ref{subsec:proof:var:ATS} for the proof.

\section{Adaptive Optimization of Sample Sizes}
\label{sec:adaptive}

In this section we present a method for deciding the sample size $n_i$ of each component $S_i$ to estimate mutual information with an optimal accuracy when using the hybrid estimation technique in Section~\ref{sec:compositional} and its variants using prior information on the system in Section~\ref{sec:estimate-known-system}. The proof for all results in this section can be found in Appendix~\ref{subsec:proof:adaptive}.

\paragraph{Mutual Information}

To decide the sample sizes we take into account the trade-off between accuracy and cost of the statistical analysis:
The computational cost increases proportionally to the sample size $n_i$ (i.e., the number of $S_i$'s execution traces), while a larger sample size $n_i$ provides a smaller variance hence a more accurate estimate.

More specifically, given a budget of a total sample size $n$ for the whole system, we obtain an optimal accuracy of the estimate by adjusting each component's sample size $n_i$
\footnote{This idea resembles the \emph{importance sampling} in statistical model checking in that the sample size is adjusted to make the estimate more accurate.}
 (under the constraint $n = \sum_{i\in\I} n_i$).
To compute the optimal sample sizes, we first run each component to collect a small number (compared to $n$, for instance dozens) of execution traces.
Then we calculate certain intermediate values in computing the variance and determine sample sizes for further executions.
Formally, let $v_i$ be the following intermediate value of the variance for the component $S_i$:
\begin{align*}
v_i = \displaystyle
\theta_i^2
\Biggl(
\hspace{-2.2ex} 
\sum_{~~~~(x, y)\in\D}\hspace{-2.7ex} \hat{\Di}[x,y]
{\textstyle \Bigl(1 + \log\!\frac{\widehat{P}_X[x]\widehat{P}_Y[y]}{\widehat{P}_{XY}[x,y]} \Bigr)^{\!2}}
\!-\!
\biggl(
\hspace{-2.2ex} 
\sum_{~~~~(x, y)\in\D}\hspace{-2.7ex} \hat{\Di}[x,y]
{\textstyle \Bigl(1 + \log\!\frac{\widehat{P}_X[x]\widehat{P}_Y[y]}{\widehat{P}_{XY}[x,y]} \Bigr) \biggr)^{\!2}}
\Biggr)
{.}
\end{align*}
Then we find $n_i$'s that minimize the variance $v = \sum_{i\in\I} \frac{v_i}{n_i}$ of the estimate by using the following theorem.

\begin{restatable}[Optimal sample sizes]{theorem}{resOptimalSS}\label{thm:adaptive-sampling}
Given the total sample size $n$ and the above intermediate variance $v_i$ of the component $S_i$ for each $i\in\I$,\,
the variance of the mutual information estimate is minimized if, for all $i\in\I$,\, the sample size $n_i$ for $S_i$ is given by:
$n_i = \frac{\sqrt{v_i}n}{\sum_{j = 1}^{m} \sqrt{v_j}}$.
\end{restatable}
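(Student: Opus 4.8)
The goal is to minimize $v = \sum_{i\in\I} \frac{v_i}{n_i}$ subject to the linear constraint $\sum_{i\in\I} n_i = n$, treating the $n_i$ as positive reals (relaxing the integrality). This is a standard constrained optimization that I would attack with Lagrange multipliers.

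Let me set up the Lagrangian $\mathcal{L}(n_1,\dots,n_m,\mu) = \sum_{i\in\I} \frac{v_i}{n_i} + \mu\left(\sum_{i\in\I} n_i - n\right)$. Taking partial derivatives, $\frac{\partial \mathcal{L}}{\partial n_i} = -\frac{v_i}{n_i^2} + \mu = 0$, which gives $n_i^2 = \frac{v_i}{\mu}$, i.e. $n_i = \sqrt{v_i}/\sqrt{\mu}$ (the positive root, since sample sizes are positive and the $v_i$ are nonnegative sums of squared-deviation-type terms). Summing over $i$ and using the constraint: $n = \sum_{i\in\I} n_i = \frac{1}{\sqrt{\mu}}\sum_{j=1}^m \sqrt{v_j}$, so $\frac{1}{\sqrt{\mu}} = \frac{n}{\sum_{j=1}^m \sqrt{v_j}}$. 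Substituting back yields $n_i = \frac{\sqrt{v_i}\, n}{\sum_{j=1}^m \sqrt{v_j}}$, which is the claimed formula.

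Finally I would argue this critical point is the global minimum rather than a saddle or maximum. The cleanest route is convexity: each map $n_i \mapsto v_i/n_i$ is convex on $(0,\infty)$ (second derivative $2v_i/n_i^3 > 0$ when $v_i>0$), so $v$ is convex on the open positive orthant, the constraint set is affine (hence convex), and therefore the unique stationary point of the Lagrangian is the global minimizer. Alternatively one can invoke the Cauchy–Schwarz inequality directly: $\left(\sum_i \sqrt{v_i}\right)^2 = \left(\sum_i \sqrt{v_i/n_i}\cdot\sqrt{n_i}\right)^2 \le \left(\sum_i v_i/n_i\right)\left(\sum_i n_i\right) = v\cdot n$, giving the lower bound $v \ge \frac{1}{n}\left(\sum_i \sqrt{v_i}\right)^2$ with equality exactly when $v_i/n_i$ is proportional to $n_i$, i.e. $n_i \propto \sqrt{v_i}$ — which reproduces the optimal assignment and simultaneously certifies optimality in one line.

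I expect no serious obstacle here; the only points requiring a little care are handling degenerate cases (some $v_i = 0$, in which case the corresponding component needs essentially no samples and one optimizes over the remaining indices) and stating clearly that we are relaxing $n_i$ to the reals so the $n_i$ in the conclusion should be read as the continuous optimum, to be rounded in practice. I would present the Cauchy–Schwarz argument as the main proof since it is short and self-certifying, and perhaps mention the Lagrange-multiplier derivation as the way one discovers the formula.
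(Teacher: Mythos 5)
Your proof is correct. Both your argument and the paper's reduce the theorem to the same key inequality, namely $\sum_{i\in\I} \frac{v_i}{n_i} \ge \frac{1}{n}\bigl(\sum_{i=1}^{m}\sqrt{v_i}\bigr)^2$ with equality at $n_i = \frac{\sqrt{v_i}\,n}{\sum_{j=1}^{m}\sqrt{v_j}}$; the difference lies in how that inequality is established. The paper (Proposition~\ref{prop:adaptive} in the appendix) proves it by induction on $m$, with an explicit algebraic base case for $m=2$ that boils down to $(n_2\sqrt{v_1}-n_1\sqrt{v_2})^2\ge 0$ --- essentially a hand-rolled proof of the special case of Cauchy--Schwarz you invoke. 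Your Cauchy--Schwarz argument, writing $\bigl(\sum_i\sqrt{v_i}\bigr)^2 = \bigl(\sum_i \sqrt{v_i/n_i}\cdot\sqrt{n_i}\bigr)^2 \le \bigl(\sum_i v_i/n_i\bigr)\bigl(\sum_i n_i\bigr)$, gets the bound and the equality condition $n_i\propto\sqrt{v_i}$ in one line, which is shorter and arguably more transparent; the paper's induction buys self-containedness at the cost of length. Your Lagrange-multiplier derivation plus the convexity certificate is a third valid route and is the natural way to \emph{discover} the formula, but as you note it needs the extra convexity argument to certify global optimality, whereas both the induction and the Cauchy--Schwarz bound certify it directly. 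Your remarks on the degenerate case $v_i=0$ and on the continuous relaxation of the $n_i$ are points the paper glosses over, and they are worth keeping.
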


\paragraph{Shannon Entropy}
Analogously to Theorem~\ref{thm:adaptive-sampling} we can adaptively optimize the sample sizes in the estimation of Shannon entropy in Section~\ref{sec:other-measures}.
To compute the optimal sample sizes we define $v'_i$ by:
\begin{align*}
v'_i =
\theta_i^2
\Biggl(
\hspace{-1.2ex}\sum_{\hspace{1.5ex}x\in\X^+}\hspace{-1.8ex} \Dxi[x]
\Bigl(1 + \log P_X[x] \Bigr)^{\!2}
\!-\!
\biggl(\hspace{-1.2ex}\sum_{\hspace{1.5ex}x\in\X^+}\hspace{-1.8ex} \Dxi[x]
\Bigl(1 + \log P_X[x] \Bigr) \biggr)^{\!2}
\Biggr)\!
+ \O(n_i^{-2})
\texttt{.}
\end{align*}
Then we can compute the optimal sample sizes by using the following proposition. 

\begin{restatable}[Optimal sample sizes for Shannon entropy estimation]{proposition}{resOptimalSSforShannon}\label{prop:adaptive-sampling:shannon}
Given the total sample size $n$ and the above intermediate variance $v'_i$ of the component $S_i$ for each $i\in\I$,\,
the variance of the Shannon entropy estimate is minimized if, for all $i\in\I$,\, the sample size $n_i$ for $S_i$ satisfies
$n_i = \frac{\sqrt{v'_i}n}{\sum_{j = 1}^{m} \sqrt{v'_j}}$.
\end{restatable}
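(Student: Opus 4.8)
The plan is to reduce the statement to exactly the constrained optimization already solved for mutual information in Theorem~\ref{thm:adaptive-sampling}. After a small pilot run, the analyst regards each $v'_i$ as a fixed constant depending only on the empirical marginal sub-distribution of $X$ in the component $S_i$ and on $\theta_i$, not on the eventual sample sizes; dropping the negligible $\O(n_i^{-2})$ terms, Proposition~\ref{lem:entropy:variance} says that the leading-order variance of the Shannon entropy estimate is $v = \sum_{i\in\I}\frac{v'_i}{n_i}$. So the task is to minimize $v$ over the positive reals $n_1,\dots,n_m$ subject to the budget constraint $\sum_{i\in\I} n_i = n$.

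The core step is a Lagrange-multiplier computation. Introducing $\mu$ and differentiating $\sum_{i\in\I}\frac{v'_i}{n_i} - \mu\bigl(\sum_{i\in\I} n_i - n\bigr)$ with respect to $n_i$ gives $-\frac{v'_i}{n_i^2} - \mu = 0$, so $n_i = \sqrt{-v'_i/\mu}$; in particular $n_i$ is proportional to $\sqrt{v'_i}$, say $n_i = c\sqrt{v'_i}$ with $c$ independent of $i$. Substituting this into $\sum_{i\in\I} n_i = n$ yields $c\sum_{j=1}^{m}\sqrt{v'_j} = n$, hence $c = n/\sum_{j=1}^{m}\sqrt{v'_j}$ and therefore $n_i = \frac{\sqrt{v'_i}\,n}{\sum_{j=1}^{m}\sqrt{v'_j}}$, which is exactly the claimed assignment.

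Finally I would confirm that this stationary point is the global minimizer rather than a saddle. Each map $n_i\mapsto v'_i/n_i$ is convex on $(0,\infty)$ because $v'_i\ge 0$, so $v$ is convex on the open positive orthant, the feasible set $\{\sum_{i\in\I} n_i = n\}$ is affine, and the unique critical point is the minimum. Alternatively one can bypass calculus with Cauchy--Schwarz: $\bigl(\sum_{i\in\I}\sqrt{v'_i}\bigr)^2 = \bigl(\sum_{i\in\I}\tfrac{\sqrt{v'_i}}{\sqrt{n_i}}\cdot\sqrt{n_i}\bigr)^2 \le \bigl(\sum_{i\in\I}\tfrac{v'_i}{n_i}\bigr)\bigl(\sum_{i\in\I} n_i\bigr) = v\,n$, so $v \ge \bigl(\sum_{i\in\I}\sqrt{v'_i}\bigr)^2/n$ with equality precisely when $n_i\propto\sqrt{v'_i}$. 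I do not expect a genuine obstacle here: the argument is structurally identical to the proof of Theorem~\ref{thm:adaptive-sampling}, the only thing to keep track of being that $v'_i$ is built from the marginal sub-distribution of $X$ instead of the joint sub-distribution, so all the bookkeeping transfers verbatim once Proposition~\ref{lem:entropy:variance} is available.
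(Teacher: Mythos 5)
Your proposal is correct, and the reduction is identical to the paper's: both arguments take the leading-order variance $v=\sum_{i\in\I}v'_i/n_i$ from Proposition~\ref{lem:entropy:variance}, discard the $\O(n_i^{-2})$ terms, and minimize $v$ over positive $n_1,\dots,n_m$ subject to $\sum_{i\in\I}n_i=n$. Where you diverge is in how that constrained minimization is solved. The paper isolates it as a standalone lemma (Proposition~\ref{prop:adaptive}), proved by induction on $m$: the two-component case is settled by rearranging $(n_1+n_2)(n_2v_1+n_1v_2)-n_1n_2(\sqrt{v_1}+\sqrt{v_2})^2$ into the square $(n_2\sqrt{v_1}-n_1\sqrt{v_2})^2\ge 0$, and the general case follows by folding components in one at a time. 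You instead give a Lagrange-multiplier computation backed by a convexity check (needed, since stationarity alone does not rule out a saddle, and you correctly supply it), and, more decisively, the Cauchy--Schwarz bound $\bigl(\sum_i\sqrt{v'_i}\bigr)^2\le\bigl(\sum_i v'_i/n_i\bigr)\bigl(\sum_i n_i\bigr)$ with equality iff $n_i\propto\sqrt{v'_i}$ --- which is precisely the inequality and equality condition of Proposition~\ref{prop:adaptive}, obtained in one line rather than by induction. The paper's route is more elementary and self-contained (no calculus, no appeal to Cauchy--Schwarz); yours is shorter and makes the optimality condition transparent. One cosmetic caveat: you allow $v'_i\ge 0$ when invoking convexity, whereas the paper's lemma assumes all $v_i>0$; if some $v'_i=0$ the prescribed $n_i$ is zero and sits on the boundary of the feasible region, so strictly one should either exclude such components or note that the formula degenerates gracefully there. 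This does not affect the substance of the argument.
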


\paragraph{Knowledge of the Prior}
Analogously to Theorem~\ref{thm:adaptive-sampling}, the sample sizes $n_i$ and the importance priors $\lambda_i$ can be adaptively optimized 
when the prior distribution of the input is known to the analyst (as presented in Section~\ref{subsec:previous-known-prior}).

\begin{restatable}[Optimal sample sizes when knowing the prior]{proposition}{resSampleSizesKnownPrior}\label{prop:adaptive-sampling-known-prior}
For each $i\in\I$ and $x\in\X$, let $v_{ix} $ be the following intermediate variance of the component $S_{ix}$.
\begin{align*}
v_{ix} = \displaystyle
\theta_{ix}^2
\Biggl(
\sum_{y\in\D_x}\hspace{-0.2ex} \hat{\Di}[y|x]
\biggl({\textstyle \log\frac{\widehat{P}_Y[y]}{\widehat{P}_{XY}[x,y]}} \biggr)^{\!2}
\!-
\biggl(\sum_{y\in\D_x}\hspace{-0.2ex} \hat{\Di}[y|x]
\Bigl({\textstyle \log\frac{\widehat{P}_Y[y]}{\widehat{P}_{XY}[x,y]}} \Bigr) \biggr)^{\!2}
\Biggr).
\end{align*}
Given the total sample size $n$, the variance of the estimated mutual information is minimized if, for all $i\in\I$ and $x\in\X$,\, the sample size $n_i$ and the importance prior $\lambda_i$ satisfy:
$
n_i\lambda_i[x] = \frac{\sqrt{v_{ix}}n}{\sum_{j = 1}^{m} \sqrt{v_{jx}}}\;
\texttt{.}
$
\end{restatable}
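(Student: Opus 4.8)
The plan is to reduce the claim, exactly as for Theorem~\ref{thm:adaptive-sampling}, to a one–parameter constrained minimisation solved by Cauchy--Schwarz. First I recall from Proposition~\ref{lem:MI:variance:KnownPrior} that, up to the $\O(n_i^{-2})$ remainder, the variance of the estimate is
\begin{align*}
\Variance{\hat{I}_{\Theta,\Lambda}(X; Y)} \;=\; \sum_{i\in\I}\sum_{x\in\X^+}\frac{v_{ix}}{\,n_i\,\lambda_i[x]\,} \;+\; \O(n_i^{-2}),
\end{align*}
where $v_{ix}$ is precisely the bracketed $(i,x)$-summand appearing in the statement (with the factor $\theta_{ix}^2$ absorbed into it and the true distributions momentarily in place of the empirical ones). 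The free quantities are the sample sizes $n_i>0$ and the importance priors $\lambda_i$, subject to $\sum_{x\in\X}\lambda_i[x]=1$ for each $i$; the sampling budget fixes, for each input value $x\in\X$, the total number of executions carried out on input $x$, i.e.\ $\sum_{i\in\I}n_i\lambda_i[x]=n$. This last point is why the normalising denominator in the statement sums only over $j\in\{1,\dots,m\}$ and not also over inputs.

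The key step is the change of variables $M_{ix}\eqdef n_i\lambda_i[x]$, the actual number of runs allotted to the component $S_{ix}$. This is a bijection onto $\{\,M_{ix}\ge 0 : \sum_{i\in\I}M_{ix}=n\ \text{for every}\ x\in\X\,\}$: from any such family one recovers $n_i\eqdef\sum_{x\in\X}M_{ix}$ and $\lambda_i[x]\eqdef M_{ix}/n_i$, whereupon $\sum_x\lambda_i[x]=1$ holds automatically, so no constraint is lost. Hence it suffices, for each input $x$ separately, to minimise $\sum_{i\in\I}v_{ix}/M_{ix}$ subject to $\sum_{i\in\I}M_{ix}=n$. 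By Cauchy--Schwarz, $\bigl(\sum_{i}\sqrt{v_{ix}}\bigr)^2 \le \bigl(\sum_i v_{ix}/M_{ix}\bigr)\bigl(\sum_i M_{ix}\bigr) = n\sum_i v_{ix}/M_{ix}$, with equality iff $M_{ix}\propto\sqrt{v_{ix}}$; together with $\sum_i M_{ix}=n$ this forces $M_{ix}=\sqrt{v_{ix}}\,n\big/\sum_{j=1}^m\sqrt{v_{jx}}$, which is exactly the claimed value of $n_i\lambda_i[x]$. The same conclusion follows from a one-line Lagrange computation: setting $\partial_{M_{ix}}\bigl(\sum_i v_{ix}/M_{ix}-\mu\sum_i M_{ix}\bigr)=0$ gives $M_{ix}=\sqrt{v_{ix}/\mu}$, and $\mu$ is pinned down by the constraint.

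Two points remain. First, the $v_{ix}$ are not known in advance, so in practice one runs a small pilot sample, forms the empirical $\hat v_{ix}$ — which is why the statement writes $v_{ix}$ through $\hat\Di$, $\widehat P_Y$, $\widehat P_{XY}$ — and plugs it into the formula, exactly the pilot-then-allocate scheme already used in Section~\ref{sec:adaptive} for Theorem~\ref{thm:adaptive-sampling}. Second, and this is the main obstacle, one must justify that optimising the leading term of the variance also optimises the true variance up to $\O(n^{-2})$: the neglected $\O(n_i^{-2})$ remainders have to be uniformly dominated once every $n_i\lambda_i[x]$ is at least a fixed multiple of $\#\X\cdot\#\Y$, which is precisely the regime already assumed in Section~\ref{subsec:eval-accuracy} for the normal approximation and the confidence interval. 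Granting that, the proof is just the bijective reparametrisation plus the Cauchy--Schwarz (or Lagrange) minimisation above; the full details are given in Appendix~\ref{subsec:proof:adaptive}.
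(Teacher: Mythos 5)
Your proposal is correct and takes essentially the same route as the paper: the paper likewise treats $M_{ix}=n_i\lambda_i[x]$ as the allocation variables and applies, for each $x$ separately, its Proposition~\ref{prop:adaptive} (the inequality $\sum_i v_i/n_i\ge\tfrac{1}{n}\bigl(\sum_i\sqrt{v_i}\bigr)^2$ with equality at $n_i\propto\sqrt{v_i}$), which is exactly your Cauchy--Schwarz step. The only cosmetic difference is that the paper establishes that elementary inequality by induction on $m$ rather than by quoting Cauchy--Schwarz or a Lagrange computation.
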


\paragraph{Abstraction-then-sampling}
Finally, the sample sizes can be optimized for the abstraction-then-sampling approach in Section~\ref{subsec:known-subsystem} by using the following theorem.

\begin{restatable}[Optimal sample sizes using the abstraction-then-sampling]{theorem}{resAdaptiveMISymbolic}\label{thm:adaptive-sampling-ATS}Let $v_{i}^{\star}$ be the following intermediate variance of the component $S_i$:
\begin{align*}
\hspace{-3.3ex}
v_{i}^{\star} =
\begin{cases}
\displaystyle
~\theta_i^2
\Biggl(
\hspace{-2ex}
\sum_{~~~~(x, y)\in\D}\hspace{-2ex} \hat{\Di}[x,y]
\Bigl({\textstyle 1 + \log\!\frac{\ph_X[x]\ph_Y[y]}{\ph_{XY}[x,y]}} \Bigr)^{\!2}
\!-\!
\biggl(\hspace{-2ex}\sum_{~~~~(x, y)\in\D}\hspace{-2ex} \hat{\Di}[x,y]	
\Bigl({\textstyle 1 + \log\!\frac{\ph_X[x]\ph_Y[y]}{\ph_{XY}[x,y]}} \Bigr) \biggr)^{\!2}
\Biggr)\!
&
\mbox{if $i\in\I\setminus\II$}
\\[3ex]
\displaystyle
~\theta_i^2
\biggl(\hspace{-0.5ex}
\sum_{~~y\in\Y^+}\hspace{-0.7ex}
\hat{\Dyi}[y]
\hat{\gamma}_{ixy}^{\!2}
-
\Bigl(\hspace{-0.5ex}
\sum_{~~y\in\Y^+}\hspace{-0.7ex}
\hat{\Dyi}[y] \hat{\gamma}_{ixy}
\Bigr)^{\!2}
\biggr)
&
\mbox{if $i\in\II$}
\end{cases}
\end{align*}
Given the total sample size $n$, the variance of the estimated mutual information is minimized if, for all $i\in\I$ and $x\in\X$,\, the sample size $n_i$ is given by:
$n_i = \frac{\sqrt{v_{i}^{\star}}n}{\sum_{j = 1}^{m} \sqrt{v_{j}^{\star}}}
{.}
$
\end{restatable}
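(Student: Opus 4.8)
The plan is to reduce this to a routine constrained-optimization argument, exactly as in the proof of Theorem~\ref{thm:adaptive-sampling}, once the correct form of the total variance is in hand. First I would observe that Theorem~\ref{thm:MI:variance:symbolic} gives the variance $\Variance{\hat{I}_{\II}(X;Y)}$ as a sum over $i\in\I$ of a term of the shape $\frac{\theta_i^2}{n_i}\,w_i + \O(n_i^{-2})$, where $w_i$ is the bracketed expression depending on $i$ (with two different closed forms according to whether $i\in\I\setminus\II$ or $i\in\II$). Comparing with the definition of $v_i^{\star}$ in the statement, one sees that $v_i^{\star}$ is precisely the empirical estimator of $\theta_i^2 w_i$, i.e. the quantity obtained by replacing the true distributions $\Di$, $\Dyi$, $P_X$, $P_Y$, $P_{XY}$ (and the derived coefficient $\gamma_{ixy}$) by their empirical counterparts $\hat{\Di}$, $\hat{\Dyi}$, $\ph_X$, $\ph_Y$, $\ph_{XY}$, $\hat{\gamma}_{ixy}$. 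So, up to the $\O(n_i^{-2})$ terms which we neglect (as is done throughout Section~\ref{sec:adaptive} and justified by the sample-size condition $\sum_i n_i \geq 4\#\X\#\Y$), the variance to be minimized is $v = \sum_{i\in\I} \frac{v_i^{\star}}{n_i}$.

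Next I would carry out the Lagrange-multiplier (or Cauchy--Schwarz) minimization of $v = \sum_{i\in\I} \frac{v_i^{\star}}{n_i}$ subject to the linear constraint $\sum_{i\in\I} n_i = n$, treating the $n_i$ as positive reals. Setting $\frac{\partial}{\partial n_i}\bigl(\sum_j \frac{v_j^{\star}}{n_j} + \mu(\sum_j n_j - n)\bigr) = 0$ gives $-\frac{v_i^{\star}}{n_i^2} + \mu = 0$, hence $n_i \propto \sqrt{v_i^{\star}}$; imposing $\sum_i n_i = n$ fixes the proportionality constant and yields $n_i = \frac{\sqrt{v_i^{\star}}\,n}{\sum_{j=1}^m \sqrt{v_j^{\star}}}$. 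Convexity of $n_i\mapsto v_i^{\star}/n_i$ on $(0,\infty)$ (each $v_i^{\star}\geq 0$, being an empirical variance) guarantees this stationary point is the global minimum on the constraint simplex. Alternatively, a one-line Cauchy--Schwarz argument, $\bigl(\sum_i \sqrt{v_i^{\star}}\bigr)^2 = \bigl(\sum_i \frac{\sqrt{v_i^{\star}}}{\sqrt{n_i}}\sqrt{n_i}\bigr)^2 \leq \bigl(\sum_i \frac{v_i^{\star}}{n_i}\bigr)\bigl(\sum_i n_i\bigr) = v\cdot n$, shows $v \geq \frac{1}{n}\bigl(\sum_i\sqrt{v_i^{\star}}\bigr)^2$ with equality exactly when $\frac{v_i^{\star}/n_i}{n_i}$ is constant in $i$, i.e. $n_i\propto\sqrt{v_i^{\star}}$, recovering the same formula. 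The statement about $\lambda_i[x]$ and components $S_{ix}$ present in Proposition~\ref{prop:adaptive-sampling-known-prior} does not arise here, since under abstraction-then-sampling the abstracted components are sampled uniformly over one representative input row; so only the $n_i$ need be determined.

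I do not expect a genuine obstacle in the optimization itself — it is the same computation as Theorem~\ref{thm:adaptive-sampling}. The one point requiring a little care is the bookkeeping that identifies $v_i^{\star}$ as the leading coefficient extracted from Theorem~\ref{thm:MI:variance:symbolic}: one must check that the two branches of the $v_i^{\star}$ definition match the two summands of that variance formula (the $\I\setminus\II$ branch matching the first sum verbatim, the $\II$ branch matching the $\Dyi[y]\gamma_{ixy}^2$-type sum), and that replacing true quantities by empirical ones only perturbs the minimizer within the $\O(n_i^{-2})$ error already being discarded. Once that identification is made explicit, the minimization is immediate and the proof closes.
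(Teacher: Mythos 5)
Your proposal is correct and follows essentially the same route as the paper: read off the leading coefficients $v_i^{\star}$ from Theorem~\ref{thm:MI:variance:symbolic} so that the quantity to minimize is $\sum_{i\in\I} v_i^{\star}/n_i$ under the constraint $\sum_{i\in\I} n_i = n$, and then solve that optimization. The only cosmetic difference is that the paper packages the optimization step as a standalone inequality (Proposition~\ref{prop:adaptive}, proved by induction on $m$ from the two-component case), whereas you establish the same inequality directly via Cauchy--Schwarz or Lagrange multipliers.
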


\section{Implementation in the \toolname  Tool}\label{sec:functioning}
We describe how \toolname estimates the Shannon leakage of a given program, i.e., the mutual information between secret and output, implementing the hybrid statistical estimation procedure described above. The tool determines which components of the program to analyze with precise analysis and which with statistical analysis, and inserts appropriate annotations in the code. 
The components are analyzed with the chosen technique and the results are composed into a joint probability distribution of the secret and observable variables. 
Finally, the mutual information and its confidence interval are computed from the joint distribution.


The \toolname tool, including user documentation and source code is freely available at~\url{https://project.inria.fr/hyleak/}.
Multiple examples and the scripts to generate the results are also provided. 

\toolname is very simple to use. The code of the system to analyze is written in a file e.g., \texttt{system.hyleak}. We invoke the tool with the command:

\smallskip
\hspace{-3.5ex}\texttt{{\small ./hyleak system.hyleak}}
\smallskip
 
The tool generates various \texttt{\!.pp} text files with 
analysis information and the control flow graph of the program. 
Finally, it outputs the prior and posterior Shannon entropy estimates for the secret, the estimated Shannon leakage of the 
program before and after bias correction, and its confidence interval.
\toolname can also print the channel matrix and 
%
%
additional information; the full list of arguments is printed by \texttt{./hyleak -h}.

\subsection{Illustrating Example: Random Walk}\label{subsec:implementation-example}

Consider the following random walk problem (modeled in Fig.~\ref{fig:randomwalkcode}). 

The secret is the initial location of an agent, encoded by a single natural number representing an approximate distance from a given point, e.g., in meters. 
Then the agent takes a fixed number of steps. At each step the distance of the agent increases or decreases by 10 meters with the same probability. After this fixed number of random walk steps, the final location of the agent is revealed, and the attacker uses it to guess the initial location of the agent. 

This problem is too complicated to analyze by precise analysis, because the analysis needs to explore every possible combination of random paths, amounting to an exponential number in the random walk steps. It is also intractable to analyze with a fully statistical approach, since there are hundreds of possible secret values and the program has to be simulated many times for each of them to sufficiently observe the agent's behavior.

As shown in Section~\ref{sec:evaluation}, \toolname's hybrid approach computes the leakage significantly faster than the fully precise analysis and more accurately than the fully statistical analysis.

\begin{figure}
\begin{small}
\begin{algorithm}[H]
const $MAX$:=14;\\
secret int32 $sec$ := [201,800]; \\[-0.1ex]
observable int32 $obs$ := 0; \\[-0.1ex]
public int32 $loc$ := 0; \\[-0.1ex]
public int32 $seed$ := 0; \\[-0.1ex]
public int32 $ran$ := 0; \\[-0.1ex]
\uIf{ $sec \le$ 250\tcp{sec:[201,800]; TOT_OBS = 1 TOT_INT = 1; SEC DEP} }{ 
  $loc$ := 200; \\[-0.1ex]
}\uElseIf{ $sec \le$ 350 }{
  $loc$ := 300; \\[-0.1ex]
}\uElseIf{ $sec \le$ 450 }{
  $loc$ := 400; \\[-0.1ex]
}\uElseIf{ $sec \le$ 550 }{
  $loc$ := 500; \\[-0.1ex]
  }\uElseIf{ $sec \le$ 650 }{
  $loc$ := 600; \\[-0.1ex]
  }\uElseIf{ $sec \le$ 750 }{
  $loc$ := 700; \\[-0.1ex]
} \Else{
  $loc$ := 800; \\[-0.1ex]
}
{\color{red}simulate-abs;}\tcp{ loc: [200,800]; sec: [201,800]; TOT_OBS = 1; TOT_INT = 601}
\For{ $time$ in [0,$MAX$]  }{
  $ran$ := \random(0,9); \\[-0.1ex]
  \eIf{ $ran \le$ 5 }{
    $loc$ := $loc$ + 10; \\[-0.1ex]
  }{
    $loc$ := $loc$ - 10; \\[-0.1ex]
  }
\tcp{loc: [50,950]; ran: [0,9], TOT_OBS = 1; TOT_INT = 9010}}  
$obs$ := $loc$; \tcp{obs: [50,950]; TOT_OBS = 901; TOT_INT = 9010} 
return;
\end{algorithm}
\end{small}
\caption{Source code for the Random Walk illustrative example explained in Section~\ref{subsec:implementation-example}. The comments show the estimates for the value ranges of some variables, as computed by HyLeak following Step 2 of Section~\ref{subsec:implement-architecture}, where \texttt{TOT_OBS} represent the estimate of the possible combinations of values of all observable variables and \texttt{TOT_INT} the estimate of the possible combinations of values of  all internal variables. The red \texttt{simulate-abs} statement in Line 23 shows where the statement will be automatically added to implement the division into components, as explained in Step 2 of Section~\ref{subsec:implement-architecture} and in more details in Section~\ref{subsec:implement-components}.\label{fig:randomwalkcode}}
\end{figure}

\subsection{Architecture}\label{subsec:implement-architecture}

The \toolname tool implementation consists of the following 4 steps.
Steps 1 and 2 are implemented with different ANTLR parsers~\cite{ANTLR3:2007}.
The implementation of Step 3 inherits some code from the QUAIL tool~\cite{quail:www,DBLP:conf/cav/BiondiLTW13,DBLP:conf/spin/BiondiLQ15}
to employ QUAIL's optimization techniques for precise analysis, i.e., parallel analysis of execution traces and compact Markovian state representation.

\subsubsection*{Step 1: Preprocessing}
\paragraph*{Step 1a. Lexing, parsing and syntax checking.}
\toolname starts by lexical analysis, macro substitution and syntax analysis.
In macro substitution the constants defined in the input program are replaced with their declared values, and simple operations are resolved immediately. In the example in Fig.~\ref{fig:randomwalkcode}, this replaces the value of constant \textit{MAX} on Line 24 with its declared value from Line 1.
The tool checks whether the input program correctly satisfies the language syntax. 
In case of syntax errors, an error message describing the problem is produced and execution is terminated. 

\paragraph*{Step 1b. Loop unrolling and array expansion.} 
\texttt{for} loops ranging over fixed intervals are unrolled to optimize the computation of variable ranges 
and thus program decomposition in Step 2. In the example in Fig.~\ref{fig:randomwalkcode}, the \texttt{for} loop in Line 24 gets replaced by a fixed number of repetitions of its code with increasing values of the variable \textit{time}.
Similarly, arrays are replaced with multiple variables indexed by their position number in the array.
Note that these techniques are used only to optimize program decomposition and not required to compute the leakage in programs with arbitrary loops.

\subsubsection*{Step 2: Program Decomposition and Internal Code Generation}

If a \texttt{simulate} or \texttt{simulate-abs} statement is present in the code, the program decomposition step is skipped and such statements are used to determine program decomposition. 

The code may be decomposed heuristically only at conditional branching
to efficiently compute the probability of each component being executed. 
Moreover, each component must be a terminal in the control flow graph, hence no component is executed afterwards.
This is because the estimation method requires that the channel matrix for the system is the weighted sum of those for its components, and that the weight of a component is the probability of executing it, as explained in Sections~\ref{sec:overview:hybrid}.

The analysis method and its parameters for each component $S_i$ are decided by estimating the computational cost of analyzing $S_i$.
Let $\Z_i$ be the set of all \emph{internal randomness} (i.e., the non-secret variables whose values are assigned according to probability distributions) in $S_i$.
Then the cost of the statistical analysis is proportional to $S_i$'s sub-channel matrix size $\#\X_i\times\#\Y_i$, while the cost of the precise analysis is proportional to the number of all traces in $S_i$ (in the worst case proportional to $\#\X_i\times\#\Z_i$).
Hence the cost estimation is reduced to counting $\#\Y_i$ and $\#\Z_i$.

To obtain this, for each variable and each code line, an estimation of the number of possible values of the variable at the specific code line is computed. This is used to evaluate at each point in the input program whether it would be more expensive to use precise or statistical analysis. These estimations are shown as comments for different lines of the source code in Fig.~\ref{fig:randomwalkcode}. 
To reduce the computational cost of the estimation of variable ranges, we apply ad-hoc heuristics to obtain approximate estimates.

After determining the decomposition of the program, \toolname automatically adds \texttt{simulate} and/or \texttt{simulate-abs} statements in the code to signal which parts of the input program should be analyzed with standard random sampling (Section~\ref{sec:compositional}) and with abstraction-then-sampling (Section~\ref{subsec:known-subsystem}) respectively.
For instance, since no annotations originally exist in the example source code in Fig.~\ref{fig:randomwalkcode}, \toolname adds the \texttt{simulate-abs} statement (written in red) on Line 23.
The procedure for decomposition is shown in Fig.~\ref{proc:decompose} and is illustrated in Section~\ref{subsec:implement-components} using the Random Walk example of Fig.~\ref{fig:randomwalkcode}.
While the decomposition procedure is automated, it is a heuristic that does not guarantee to produce an optimal decomposition.
Hence for usability the choice of analysis can be controlled by user's annotations on the code. 

At the end, 
the input program is translated into a simplified internal language. Conditional statements and loops (\texttt{if}, \texttt{for}, and \texttt{while}) are rewritten into \texttt{if-goto} statements. 

\subsubsection*{Step 3: Program Analysis}
\begin{wrapfigure}[47]{r}{0.51\textwidth}
\centering
\vspace{-0.5cm}
\includegraphics[width=0.52\textwidth]{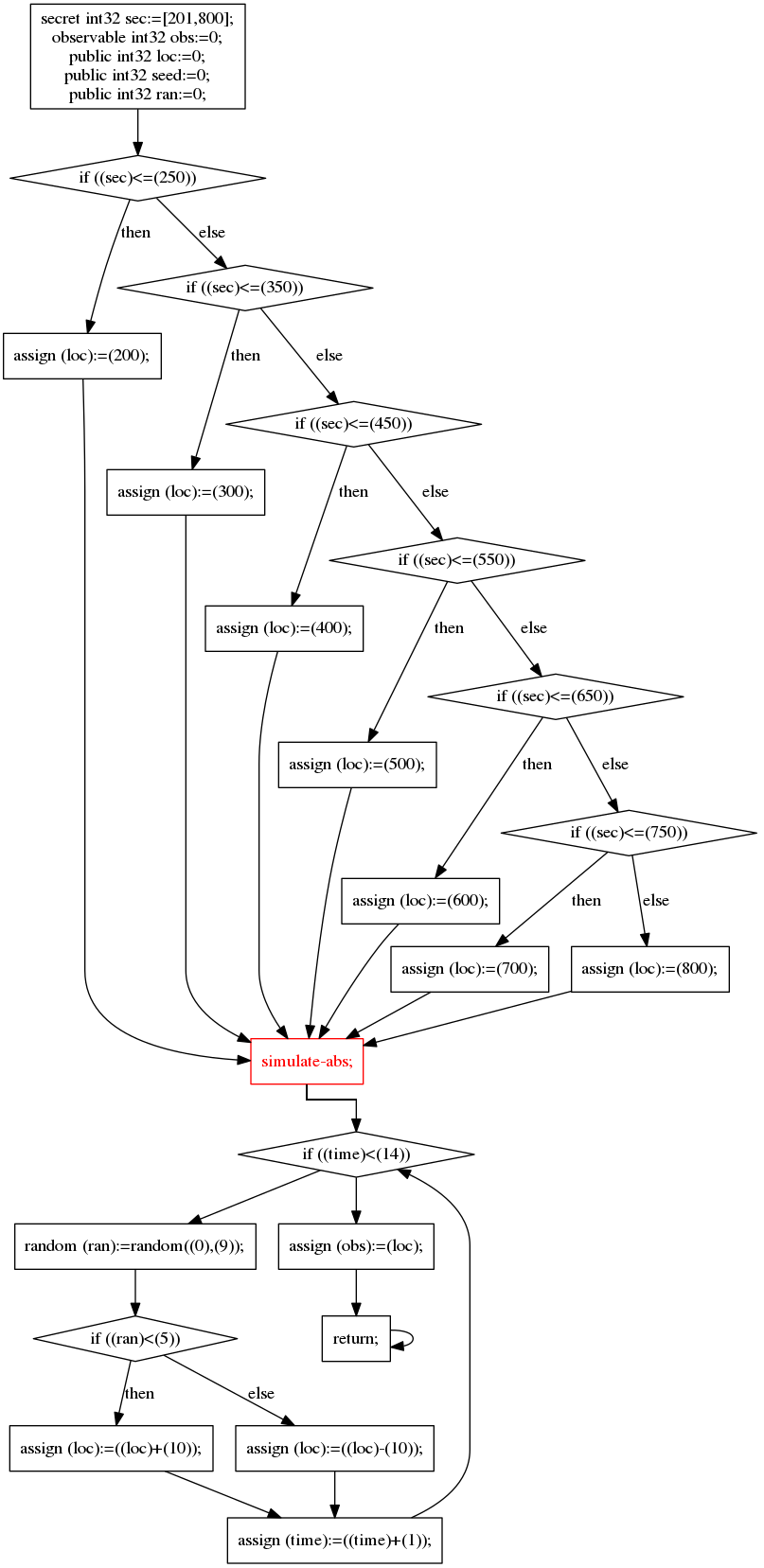}\vspace{1ex}
     \caption{Control flow graph for the input code of Fig.~\ref{fig:randomwalkcode}. The red node corresponds to the \texttt{simulate-abs} statement on Line 23 of Fig.~\ref{fig:randomwalkcode}. Each arrow entering the red node comes from a symbolic path with different secret and internal values, inducing a different component.}
     \label{fig:randomwalkcfg}
\end{wrapfigure}
In this step the tool analyzes the executions of the program using the two approaches.

\paragraph*{Step 3a. Precise analysis.} 
The tool performs a depth-first symbolic execution of all possible execution traces of the input program, until it finds a \texttt{return}, \texttt{simulate}, or \texttt{simulate-abs} statement.
When reaching a \texttt{return} statement the tool recognizes the symbolic path as terminated and stores its secret and output values.
In the cases of \texttt{simulate} and \texttt{simulate-abs} statements it halts the symbolic path, saves the resulting program state, and schedules it for standard random sampling (Section~\ref{sec:compositional}) or for abstraction-then-sampling  (Section~\ref{subsec:known-subsystem}), respectively, starting from the saved program state.
In the example in Fig.~\ref{fig:randomwalkcode}, the tool analyzes the code precisely and generates one symbolic path for each of the possible \texttt{if-elseif-else-end} statements from Line 7 to Line 22, so 7 symbolic paths in total 
(as shown in the control flow graph of the code in Fig.~\ref{fig:randomwalkcfg}).
Then each of the 7 symbolic paths meets the \texttt{simulate-abs} statement in Line 23, so they get removed from precise analysis and scheduled for abstraction-then-sampling.

\paragraph*{Step 3b. Statistical analysis.} 
The tool performs all the statistical analyses and abstraction-then-sampling analyses, using the saved program states from Step 3a as starting point of each component to analyze statistically. 
The sample size for each simulation is automatically decided by using heuristics to have better accuracy with less sample size, as explained in Sections~\ref{sec:compositional} and~\ref{sec:estimate-known-system}.
The results of each analysis is stored as an appropriate joint probability sub-distribution between secret and observable values.
In the example in Fig.~\ref{fig:randomwalkcode}, each of the 7 symbolic paths scheduled for abstraction-then-sampling gets analyzed with the technique. 
For each of the symbolic paths \toolname choses a value of the secret and samples only that one, then applies the results for all secret values of the component. 
\toolname recomputes the assignment of samples to the components for each 10\% of the total samples, following the optimal sample sizes computed for abstraction-then-sampling components in Theorem~\ref{thm:adaptive-sampling-ATS}.

\begin{figure}
\begin{tcolorbox}
\begin{enumerate}
\item Build the control flow graph of the system. 
\item Mark all possible components based on each conditional branching. 
Each possible component must be a terminal as explained in Section~\ref{sec:compositional}.
\item For each possible component $S_i$, check whether it is deterministic or not (by syntactically checking an occurrence of a probabilistic assignment or a probabilistic function call). If it is, mark the component for precise analysis, since deterministic systems necessarily have a smaller number of traces.
\item For each possible component $S_i$, check whether $S_i$'s output variables are independent of its input variables inside $S_i$ (by \emph{qualitative} information flow). If so, mark that the abstraction-then-sampling technique in Section~\ref{subsec:known-subsystem} is to be used on the component, meaning that component $S_i$ will be sampled on a single secret input value and the results will be applied to all secret values.
\item For each $S_i$, estimate an approximate range size $\#Z_i$ of its internal variables and $\#Y_i$ of its observable variables. 
\item Looking from the leaves to the root of the control flow graph, 
estimate the cost of statistical and precise analyses,
decide the decomposition into components, and
mark each component for the cheaper analysis between the two. 
(For example, use a heuristics that marks each component $S_i$ for precise analysis if $\#Z_i\leq \#X_i$ and for statistical analysis otherwise.)

\item Join together adjacent components if they are marked for precise analysis, or if they are marked for statistical analysis and have the same input and output ranges.
\item For each component, perform precise analysis or statistical analysis as marked.
\end{enumerate}
\end{tcolorbox} 
\caption{Procedure for the decomposition of a system into components given its source code described in Section~\ref{sec:overview:hybrid}. The actual implementation here uses the control flow graph of the system to guide the procedure.}
\label{proc:decompose}
\end{figure}

\subsubsection*{Step 4: Leakage Estimation}
In this step the tool aggregates all the data collected by the precise and statistical analyses (performed in Steps 3) and estimates the Shannon leakage of the input program, together with evaluation of the estimation.
This is explained in detail together with the program decomposition in Section~\ref{subsec:implement-components}.

\subsection{On the Division into Components of the Random Walk Benchmark}\label{subsec:implement-components}

In this section we briefly discuss how the Random Walk example in Fig.~\ref{fig:randomwalkcode} can be divided into components using the procedure in Fig.~\ref{proc:decompose}. The procedure in Fig.~\ref{proc:decompose} shows in more detail how to implement the procedure described in Section~\ref{sec:overview:hybrid}. In the implementation, the construction of a the control flow graph of the system is used to guide the division into components.

The control flow graph generated by \toolname is shown in Fig.~\ref{fig:randomwalkcfg}. Note that \toolname has added a \texttt{simulate-abs} statement to the code, visible in the control flow graph. 

The control flow graph in Fig.~\ref{fig:randomwalkcfg} helps understanding how \toolname has implicitly divided the program into components. 
Note that the \texttt{simulate-abs} node marked in red has 7 in-edges. Each of these edges corresponds to a different symbolic path (i.e., a set of execution traces following the same edge), with different possible values for the secret variable \texttt{sec} and the internal variable \texttt{loc}. 
\toolname has determined heuristically that at Line 23 the number of possible values of the observable variables (\texttt{TOT_OBS = 1}) is smaller than the number of possible values of internal variables (\texttt{TOT_INT = 601}), hence statistical simulation will be more efficient than precise analysis on these components. 
Also, in the code after line 23, \toolname has determined by syntactically analyzing the code that the values of the observable variables do not depend on the secret, hence each row of the sub-channel matrix for each of these components is identical, 
much like component $S_1$ in Fig.~\ref{fig:jointalt}. 

Hence, the abstraction-then-sampling technique of Section~\ref{subsec:known-subsystem} can be applied, meaning that the behavior of each component will be simulated only for a single value of the secret in the set of possible secret values of the component, and the results will be applied to each row of the channel matrix.

Now that the analysis has gathered all the necessary information, \toolname computes the leakage of the system under analysis. More specifically, it constructs an (approximate) joint posterior distribution of the secret and observable values of the input program from all the collected data produced by Step 3, as explained in Section~\ref{sec:overview:hybrid}.
Then the tool estimates the Shannon leakage value from the joint distribution, including bias correction (see Sections~\ref{sec:compositional} and~\ref{sec:estimate-known-system}).
Finally, a 95\% confidence interval for the estimated leakage value is computed to roughly evaluate the quality of the analysis.

In the example in Fig.~\ref{fig:randomwalkcode}, \toolname{} outputs the prior Shannon entropy 8.9658, the posterior Shannon entropy 7.0428, the Shannon leakage (after bias correction) 1.9220, and the confidence interval [1.9214, 1.9226].
\section{Evaluation}\label{sec:evaluation}

We evaluate experimentally the effectiveness of our hybrid method compared to the state of the art.
We first discuss the cost and quality of the estimation, then test the hybrid method against fully precise\slash fully statistical analyses on Shannon leakage benchmarks.

\subsection{On the Tradeoff between the Cost and Quality of Estimation}
\label{subsec:eval:quality}
In the hybrid statistical estimation, the estimate takes different values probabilistically, 
because it is computed 
from a set of traces that are generated by executing a probabilistic system.
Fig.~\ref{fig:MI-case1} shows the sampling distribution of the mutual information estimate of the joint distribution in Fig.~\ref{fig:joint} in Section~\ref{sec:intro}.
The graph shows the frequency 
(on the $y$ axis) of the mutual information estimates (on the $x$ axis) when performing the estimation 1000 times.
In each estimation we perform precise analysis on the component $T$ and statistical analysis on $S_1$ and $S_2$ (with a sample size of 5000).
The graph is obtained from 1000 samples each of which is generated by combining precise analysis on a component and statistical analysis on 2 components (using 5000 randomly generated traces).
As shown in Fig.~\ref{fig:MI-case1} the estimate after the correction of bias by Theorem~\ref{lem:MI:expectation:general} is closer to the true value.
The estimate is roughly between the lower and upper bounds of the 95\% confidence interval calculated using Theorem~\ref{lem:MI:variance:general}.

The interval size depends on the sample size in statistical analysis as shown in Fig.~\ref{fig:MI-sample-size}.
In Fig.~\ref{fig:MI-sample-size} we illustrated the relationships between the size of the confidence interval and the sample size in the statistical analysis.
We used an example with the randomly generated $10 \times 10$ channel matrix presented in Fig.~\ref{fig:bigmatrix} and the uniform prior.
The graph shows the frequency (on the $y$ axis) of the corrected mutual information estimates (on the $x$ axis) that are obtained by estimating the mutual information value 1000, 5000 and 10000 times.
When the sample size is $k$ times larger then the confidence 
interval is $\sqrt{k}$ times narrower.
\begin{figure*}
 \begin{minipage}{0.43\hsize}
  \begin{center}
  \mbox{\raisebox{-10pt}{\includegraphics[width=1.01\textwidth]{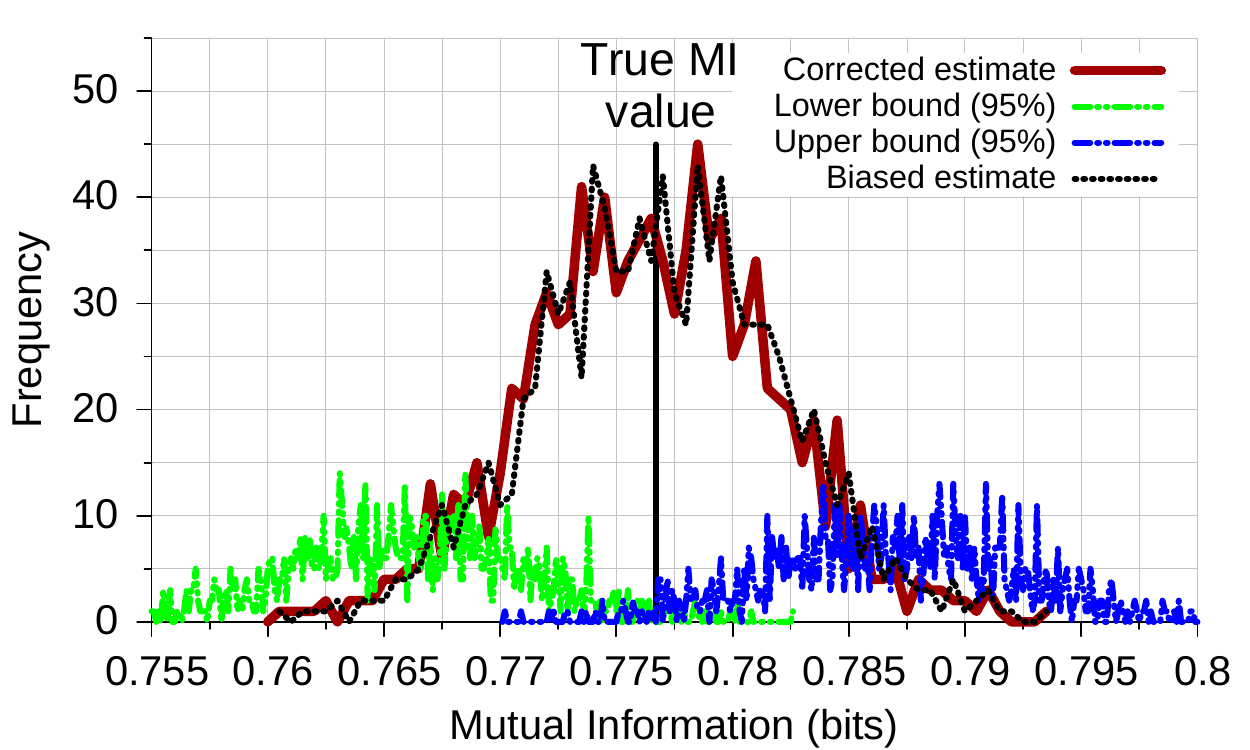}}}
  \end{center}
  \caption{Distribution of mutual information estimate and its confidence interval.}
  \label{fig:MI-case1}
 \end{minipage}~~~~~
 \begin{minipage}{0.49\hsize}
  \begin{center}
  %
\centering
\begin{scriptsize}
\tabcolsep = 0.55mm
\renewcommand{\arraystretch}{1.15}
\begin{tabular}{c c c c c c c c c c c c c}
 & & \multicolumn{10}{c}{\bf Observable} \\
&  & {\bf 0} & {\bf 1} & {\bf 2} & {\bf 3} & {\bf 4} & {\bf 5} & {\bf 6} & {\bf 7} & {\bf 8} & {\bf 9} \\
\midrule
\centering \multirow{13}{*}{\rotatebox{90}{\mbox{\bf Secret\hspace{0.25cm}}}~~~}
& {\bf 0} & 0.2046  & 0.1102  & 0.0315  & 0.0529  & 0.1899  & 0.0064  & 0.0791  & 0.1367  & 0.0386  & 0.1501 \\
\cmidrule{2-12}
& {\bf 1} & 0.0852  & 0.0539  & 0.1342  & 0.0567  & 0.1014  & 0.1254  & 0.0554  & 0.1115  & 0.0919  & 0.1844 \\
\cmidrule{2-12}
& {\bf 2} & 0.1702  & 0.0542  & 0.0735  & 0.0914  & 0.0639  & 0.1322  & 0.1119  & 0.0512  & 0.1172  & 0.1343 \\
\cmidrule{2-12}
& {\bf 3} & 0.0271  & 0.1915  & 0.0764  & 0.1099  & 0.0982  & 0.0761  & 0.0843  & 0.1364  & 0.0885  & 0.1116 \\
\cmidrule{2-12}
& {\bf 4}  & 0.0957  & 0.1977  & 0.0266  & 0.0741  & 0.1496  & 0.2177  & 0.0610  & 0.0617  & 0.0841  & 0.0318 \\
\cmidrule{2-12}
& {\bf 5} & 0.0861  & 0.1275  & 0.1565  & 0.1193  & 0.1321  & 0.1716  & 0.0136  & 0.0984  & 0.0183  & 0.0766 \\
\cmidrule{2-12}
& {\bf 6} & 0.0173  & 0.1481  & 0.1371  & 0.1037  & 0.1834  & 0.0271  & 0.1289  & 0.1690  & 0.0036  & 0.0818 \\
\cmidrule{2-12}
& {\bf 7} & 0.0329  & 0.0825  & 0.0333  & 0.1622  & 0.1530  & 0.1378  & 0.0561  & 0.1479  & 0.0212  & 0.1731 \\
\cmidrule{2-12}
& {\bf 8} & 0.1513  & 0.0435  & 0.0527  & 0.2022  & 0.0189  & 0.2159  & 0.0718  & 0.0063  & 0.1307  & 0.1067 \\
\cmidrule{2-12}
& {\bf 9} & 0.0488  & 0.1576  & 0.1871  & 0.1117  & 0.1453  & 0.0349  & 0.0549  & 0.1766  & 0.0271  & 0.056
\end{tabular}
\end{scriptsize}
\caption{Channel matrix for the experiments in Section~\ref{subsec:eval:quality}.}\label{fig:bigmatrix}

  \end{center}
 \end{minipage}
\end{figure*}

\begin{center}
\begin{figure*}
  \centering
  \subfloat[Estimates and sample sizes.]{\includegraphics[width=0.43\textwidth]{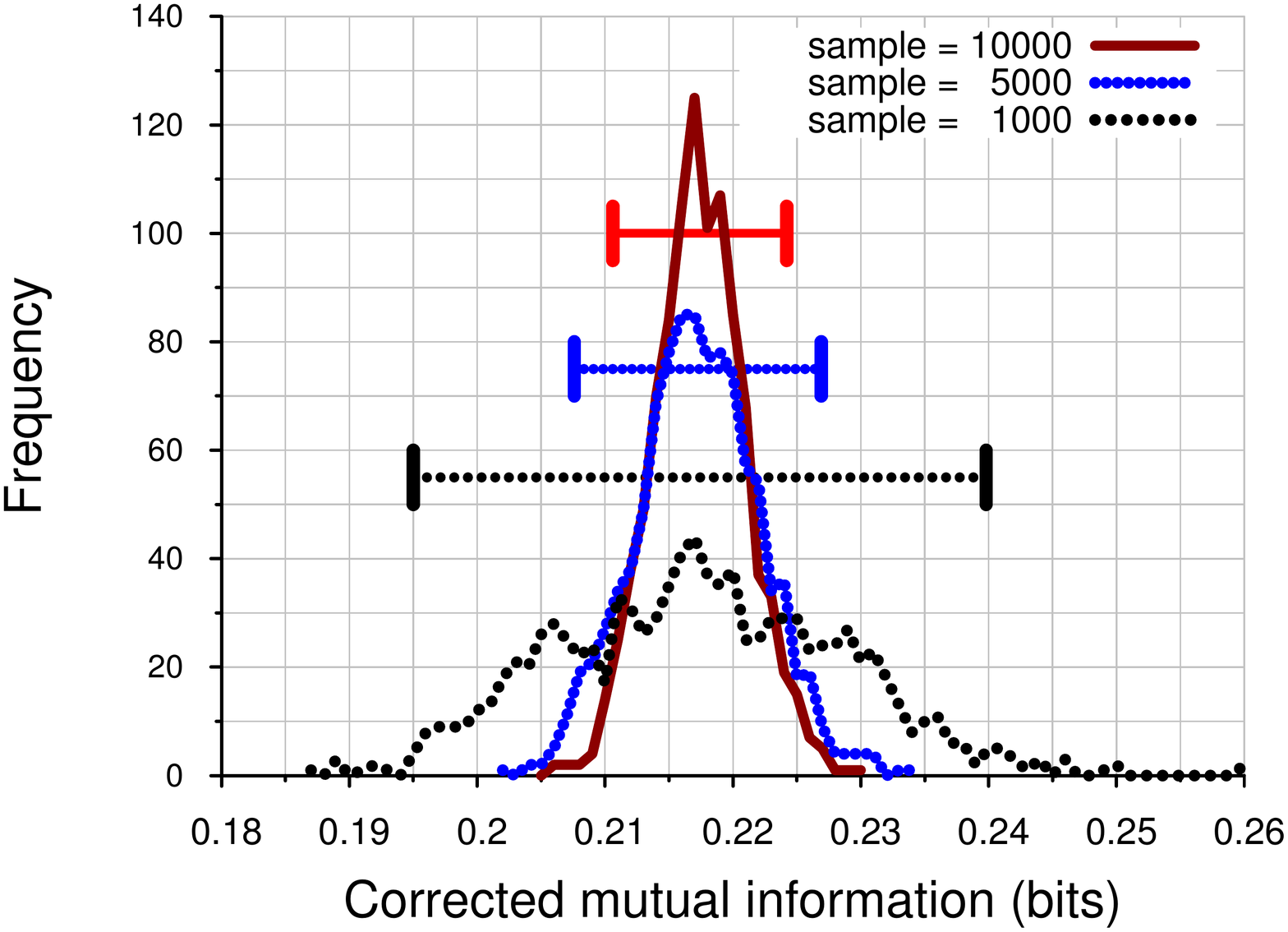}\label{fig:MI-sample-size}}~~~~~~~~
  \subfloat[Estimates and the ratio of precise analysis.]{\includegraphics[width=0.43\textwidth]{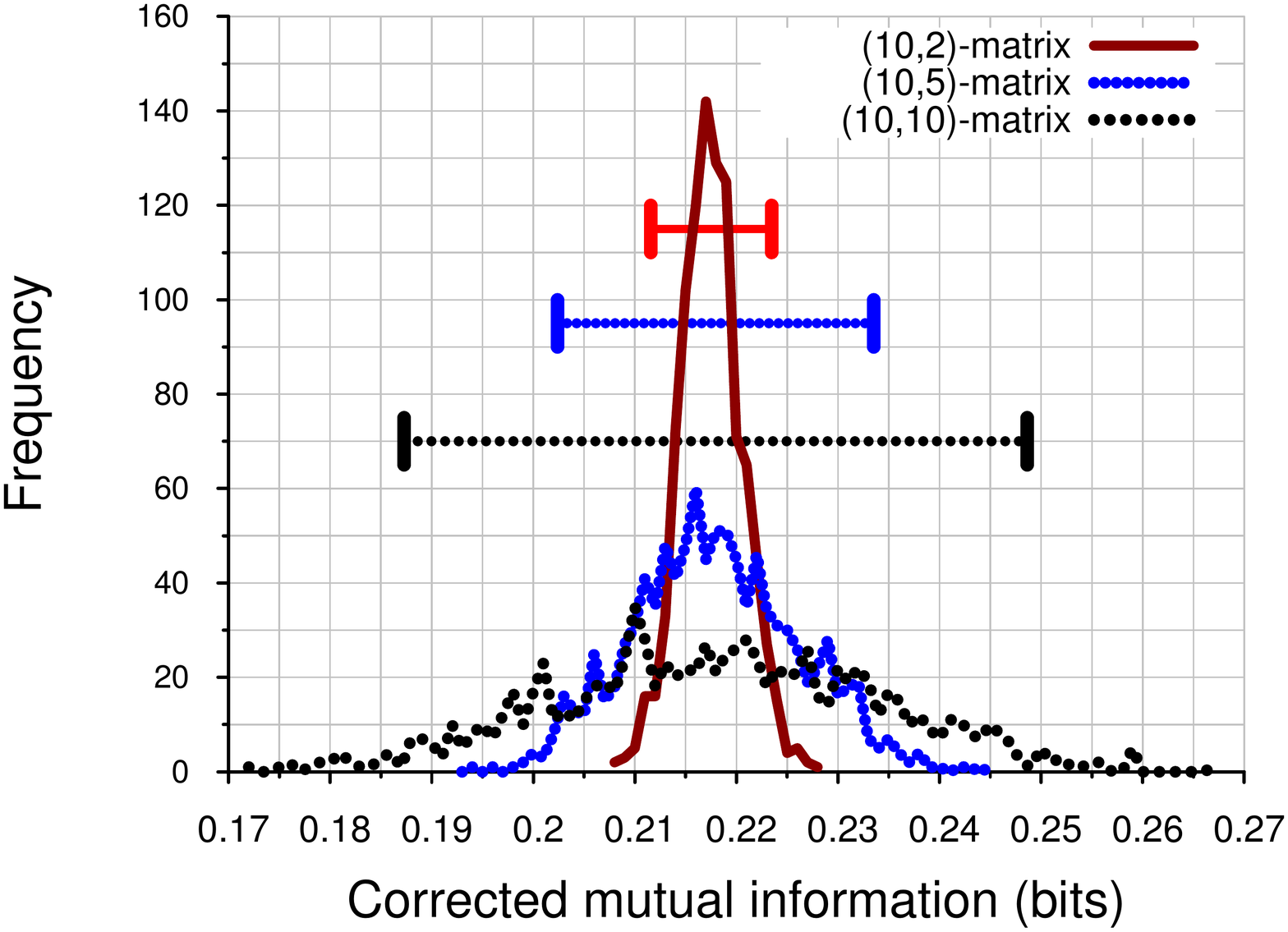}\label{fig:MI-domain-obs}}~~~~~~~~
  \caption{Smaller intervals when increasing the sample size or the ratio of precise analysis.\label{fig:MI-confidence-interval-changes}}
\end{figure*}
\end{center}

The interval size also depends on the amount of precise analysis as shown in Fig.~\ref{fig:MI-domain-obs}.
If we perform precise analysis on larger components, then the sampling distribution becomes more centered (with shorter tails) and the confidence interval becomes narrower.
For instance, in Fig.~\ref{fig:MI-domain-obs} we illustrated the relationships between the size of the confidence interval and the amount of precise analysis.
The graph shows the frequency (on the $y$ axis) of the corrected mutual information estimates (on the $x$ axis) that are obtained by estimating the mutual information value 1000 times when statistical analysis is applied to a $10 \times 2$, $10 \times 5$ and $10 \times 10$ sub-matrix of the full $10 \times 10$ matrix.
Using statistical analysis only on a smaller component ($10 \times 2$ sub-matrix) yields a smaller confidence interval than using it on the whole system ($10 \times 10$ matrix).
More generally, if we perform precise analysis on larger components, then we have a smaller confidence interval.
This means that the hybrid approach produces better estimates than the state of the art in statistical analysis.
Due to the combination with precise analysis, the confidence interval estimated by our approach is smaller than \LeakWatch{}~\cite{DBLP:conf/esorics/ChothiaKN14} for the same sample size.

\subsection{Shannon Leakage Benchmarks}\label{sec:shortexamples}

We compare the performance of our hybrid method with fully precise/statistical analysis on Shannon leakage benchmarks.
Our implementations of precise and statistical analyses are variants of the state-of-the art tools QUAIL{}~\cite{DBLP:conf/cav/BiondiLTW13,quail:www} and \LeakWatch{}~\cite{DBLP:conf/esorics/ChothiaKN14,leakwatch:www} respectively.
All experiments are performed on an Intel i7-3720QM 2.6GHz eight-core machine with 8GB of RAM running Fedora 21.

\subsubsection{Random Walk}\label{subsec:result-randomwalk} 

We analyze the random walk example presented in Section~\ref{subsec:implementation-example}
for different values of number of steps $MAX$. 
We plot the computation times and the errors in leakage values computed  by the three different methods
in the graphs presented in Fig~\ref{fig:results:random-walk}.
These graph show again that the execution time of precise analysis grows exponentially to the number of steps $MAX$, while \toolname and fully randomized analysis do not require much time even for large values of $MAX$. In the fully randomized analysis the error is always much larger than when using \toolname.

\subsubsection{Reservoir Sampling}
\SetAlFnt{\footnotesize}

\begin{minipage}[t]{0.29\textwidth}
\begin{algorithm}[H]
\scriptsize
const N; \tcp{number of elements}
const K; \tcp{selection}
secret array[$N$] of int1 s;\\
observable array[$K$] of int1 r;\\
public int32 j := 0;\\
\lFor{$i$ in $[0,K\text{-}1]$}{
  r[$i$] := s[$i$]
}
\For{$i$ in $[K,N\text{-}1]$}{
  $j$ := \random(0,$i$);\\
  \lIf{$j$<$K$}{r[$j$] := s[$i$]}
}
\end{algorithm}
\captionof{figure}{Reservoir sampling.}\label{alg:reservoir}
\end{minipage}
\hspace{3mm}
\begin{minipage}[t]{0.67\textwidth}
The reservoir sampling problem \cite{Vitter:1985:RSR:3147.3165} consists of selecting $K$ elements randomly from a pool of $N>K$ elements. 
We quantify the information flow of the commonly-used \emph{Algorithm R} \cite{Vitter:1985:RSR:3147.3165}, shown in Fig~\ref{alg:reservoir}, for various values of $N$ and $K=N/2$.
In the algorithm, the first $K$ elements are chosen as the sample, then each other element has a probability to replace one element in the sample.

We plot the computation times and the errors in leakage values computed by the three different methods in the graphs presented in Fig~\ref{fig:results:reservoir}.
It shows that \toolname hybrid approach performs faster than the full simulation approach and gives more precise results.
Compared to the precise analysis, the hybrid approach run faster when increasing the model's complexity.
\end{minipage}

\begin{figure}
  \centering
  \subfloat{\label{fig:random-walk_time}\includegraphics[width=0.41\textwidth]{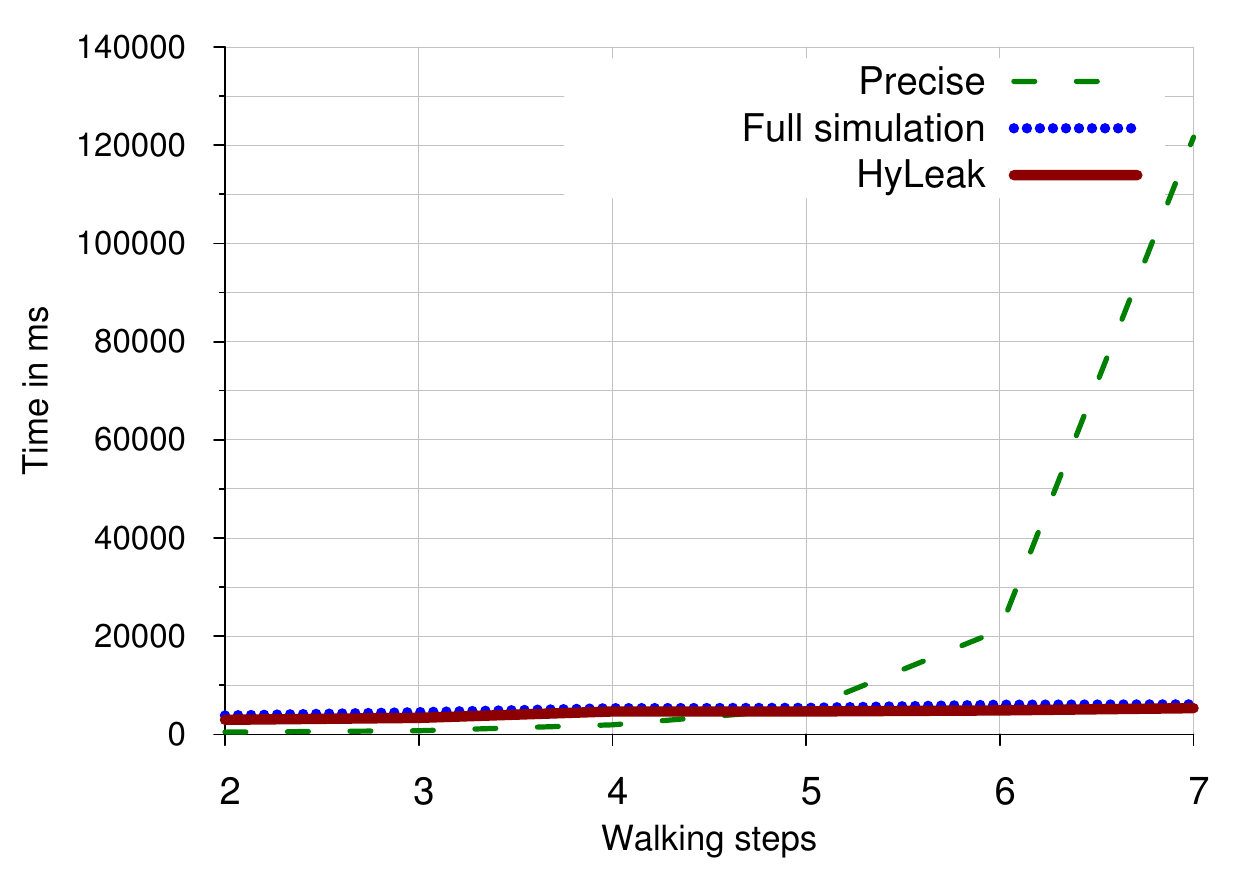}}~~~~~~~~
  \subfloat{\label{fig:random-walk_err}\includegraphics[width=0.41\textwidth]{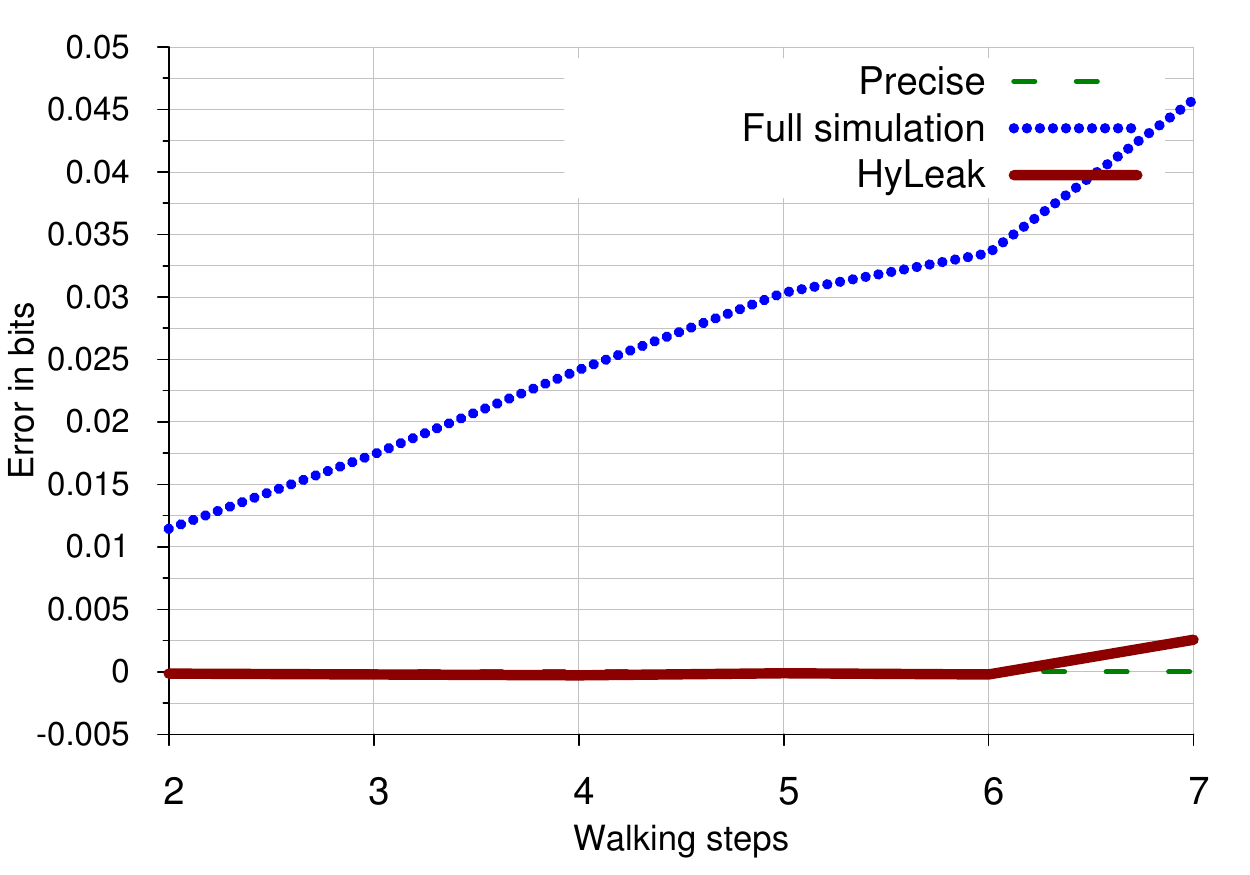}}~~~~~~~~
  \caption{Random walk experimental results.}
  \label{fig:results:random-walk}
\end{figure}

\begin{figure}
  \centering
  \subfloat{\label{fig:reservoir-time}\includegraphics[width=0.41\textwidth]{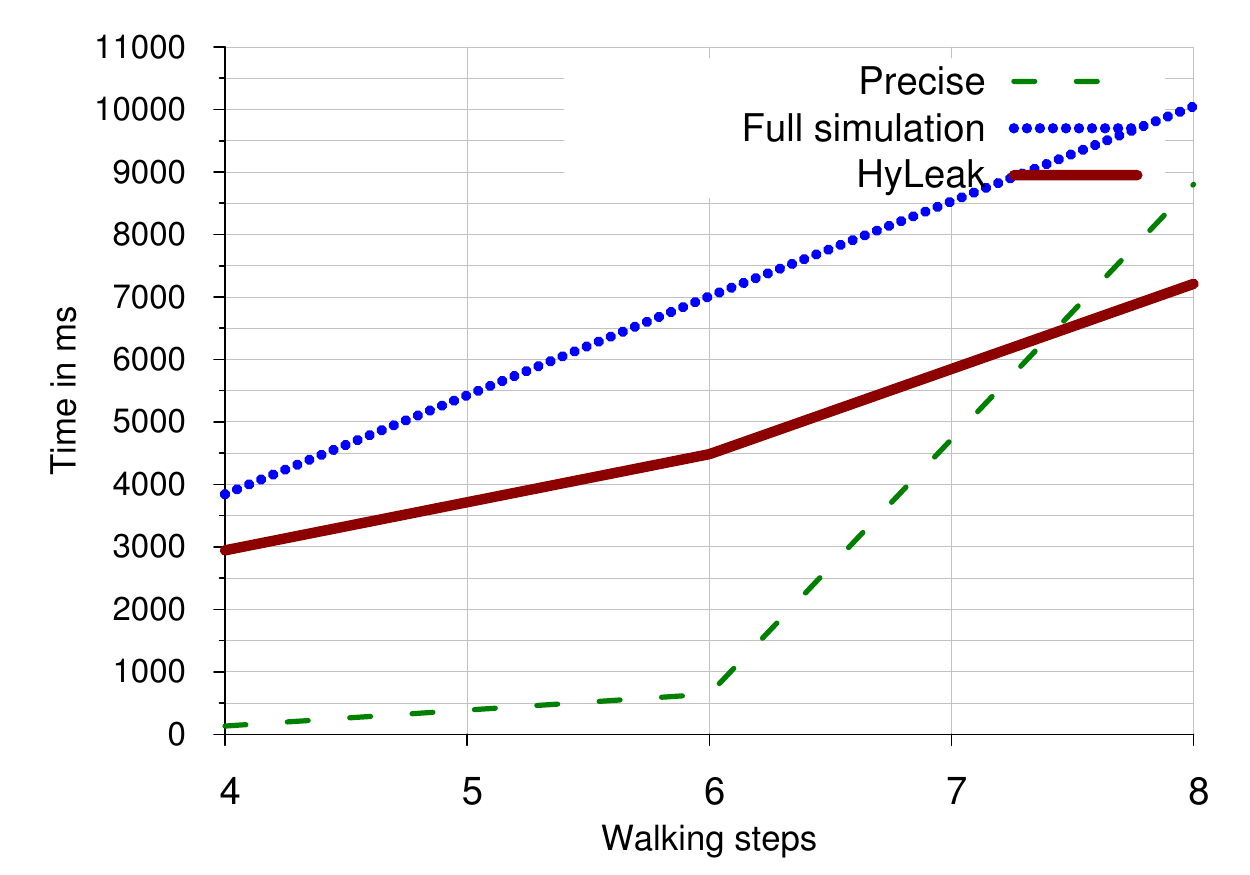}}~~~~~~~~
  \subfloat{\label{fig:reservoir_err}\includegraphics[width=0.41\textwidth]{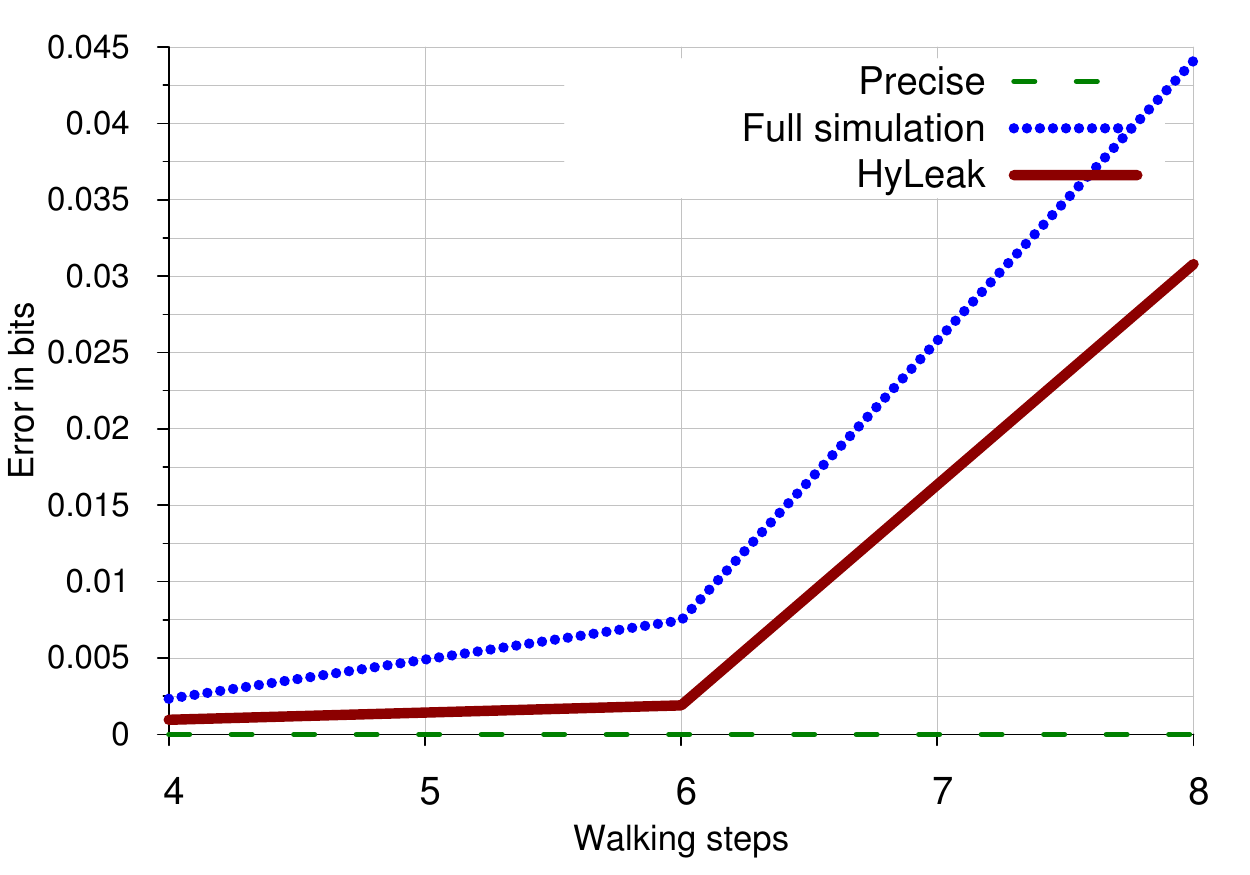}}
  \caption{Reservoir sampling experimental results.}
  \label{fig:results:reservoir}
\end{figure}

\begin{figure}
\begin{algorithm}[H]
\scriptsize
const $N$:=4; \tcp{number of cryptographers at each table}
const $M$:=8; \tcp{total number of cryptographers}
\tcc{these bits represent the coin tosses for the three tables}
public array [$N$] of int1 $coinA$, $coinB$, $coinC$;\\

public int32 $lies$; \tcp{this is for the liar}

\tcc{these bits represent the bits declared by the three cryptographers at each table}
public array [$N$] of int1 $declA$, $declB$, $declC$;\\

\tcc{these are the outputs at each table}
observable int1 $outputA$ := 0;\\
observable int1 $outputB$ := 0;\\
observable int1 $outputC$ := 0;\\

secret int32 $h$ := $[0,M]$; \tcp{the secret has M+1 possible values}

lies := \random($1,M$); 
\lFor{$c$ in $coinA$}{$c$ := \randombit($0.5$)}
\lFor{$c$ in $coinB$}{$c$:=\randombit($0.5$)}
\lFor{$c$ in $coinC$}{$c$:=\randombit($0.5$)}

\For{$i$ in $[0,N-1]$}{
  $declA[i]$:=$coinA[i] \xor coinA[(i+1)\% N]$;\\
  \lIf{$h==i+1$}{$declA[i]$:=~$!declA[i]$}
  \lIf{$lies==i+1$}{$declA[i]$:=~$!declA[i]$}
  $outputA$ := $outputA \xor declA[i]$;\\
  $declB[i]$:=$coinB[i] \xor coinB[(i+1)\%N]$;\\
  \lIf{$h==i+3$}{$declB[i]$:=~$!declB[i]$;}
  \lIf{lies==i+3}{$declB[i]$:=~$!declB[i]$;}
  $outputB$ := $outputB \xor declB[i]$;\\
  $declC[i]$ := $coinC[i] \xor coinC[(i+1)\%N]$;\\
  \lIf{$h==i+5$}{$declC[i]$:=~$!declC[i]$}
  \lIf{$lies==i+5$}{$declC[i]$:=~$!declC[i]$}
  $outputC$ := $outputC \xor declC[i]$;
}
\end{algorithm}
\caption{Multiple lying cryptographers.}\label{alg:multiple_lying_crypto}
\end{figure}

\subsubsection{Multiple Lying Cryptographers Protocol}\label{sec:application-QIF}
The lying cryptographers protocol is a variant of the dining cryptographer multiparty computation protocol \cite{Chaum88thedining} in which a randomly-chosen cryptographer declares the opposite of what they would normally declare, i.e. they lie if they are not the payer, and do not lie if they are the payer.
We consider three simultaneous lying cryptographers implementation in which 8 cryptographers run the protocol on three separate overlapping tables $A$, $B$ and $C$ with 4 cryptographers each. Table $A$ hosts cryptographers $1$ to $4$, Table $B$ hosts cryptographers $3$ to $6$, and Table $C$ hosts cryptographers $5$ to $8$. The identity of the payer is the same in all tables.

The division into components is executed following the principles in Section~\ref{subsec:implement-components}. 
The hybrid approach divides the protocol into 8 components, one for each possible liar. Then each component is analyzed statistically.

The results of the experiment are summarized in Table~\ref{tab:benchmark_results}.
Note that this example has some zeroes in the  probabilities in the channel matrix,
which makes the tool \LeakWatch{}~\cite{DBLP:conf/esorics/ChothiaKN14,leakwatch:www} compute an incorrect result due to incorrect bias estimation,
as explained in Section~\ref{subsec:previous-known-prior}. In fact, the leakage value obtained by \LeakWatch{} using Corollary~\ref{lem:MI:expectation:Chothia} is 0.36245, which is far from the correct value of 0.503 in Table~\ref{tab:benchmark_results}. On the other hand, the error of our calculation using Proposition~\ref{lem:MI:expectation:KnownPrior} is 1.87e-4 even with fully statistical analysis.

\begin{figure}
  \centering
  \subfloat{\label{fig:window_time}\includegraphics[width=0.41\textwidth]{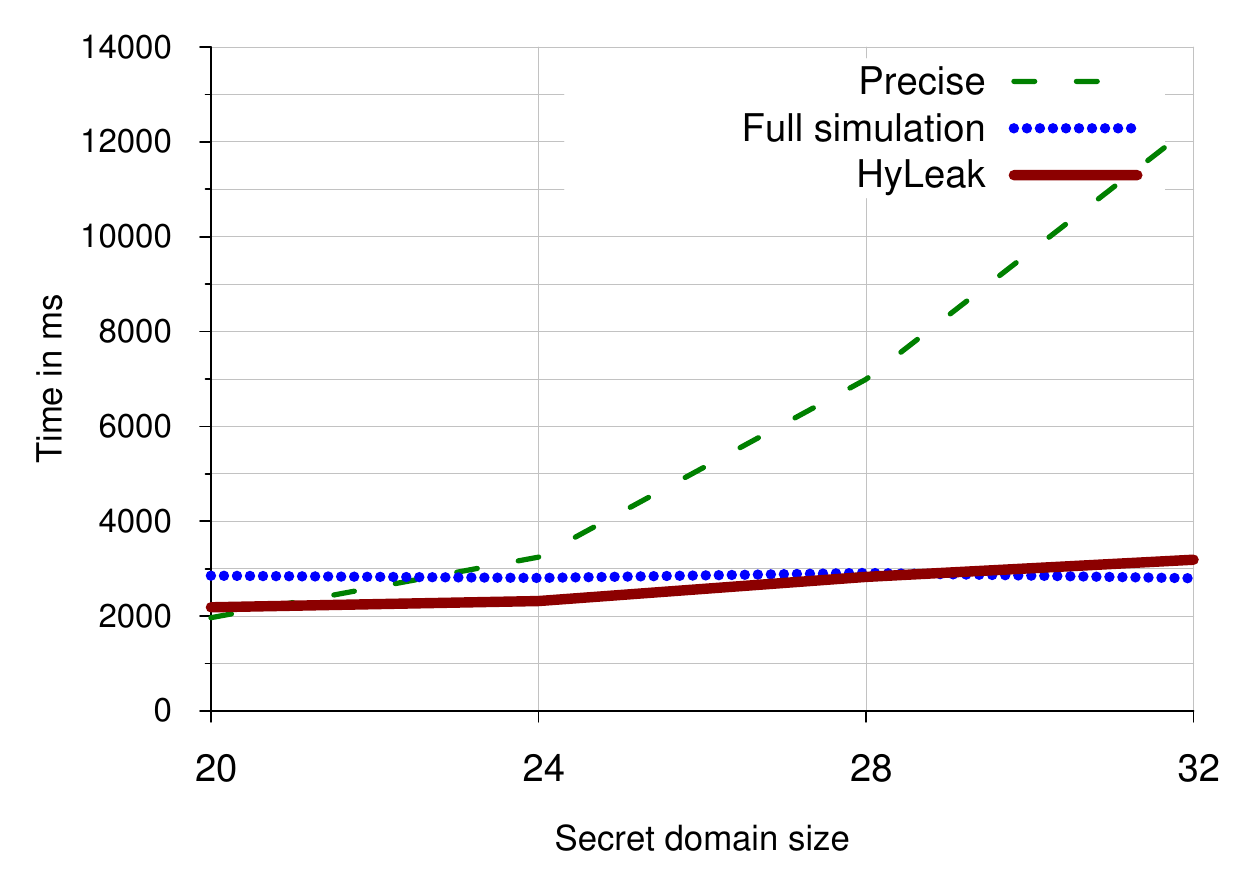}}~~~~~~~~
  \subfloat{\label{fig:window_err}\includegraphics[width=0.41\textwidth]{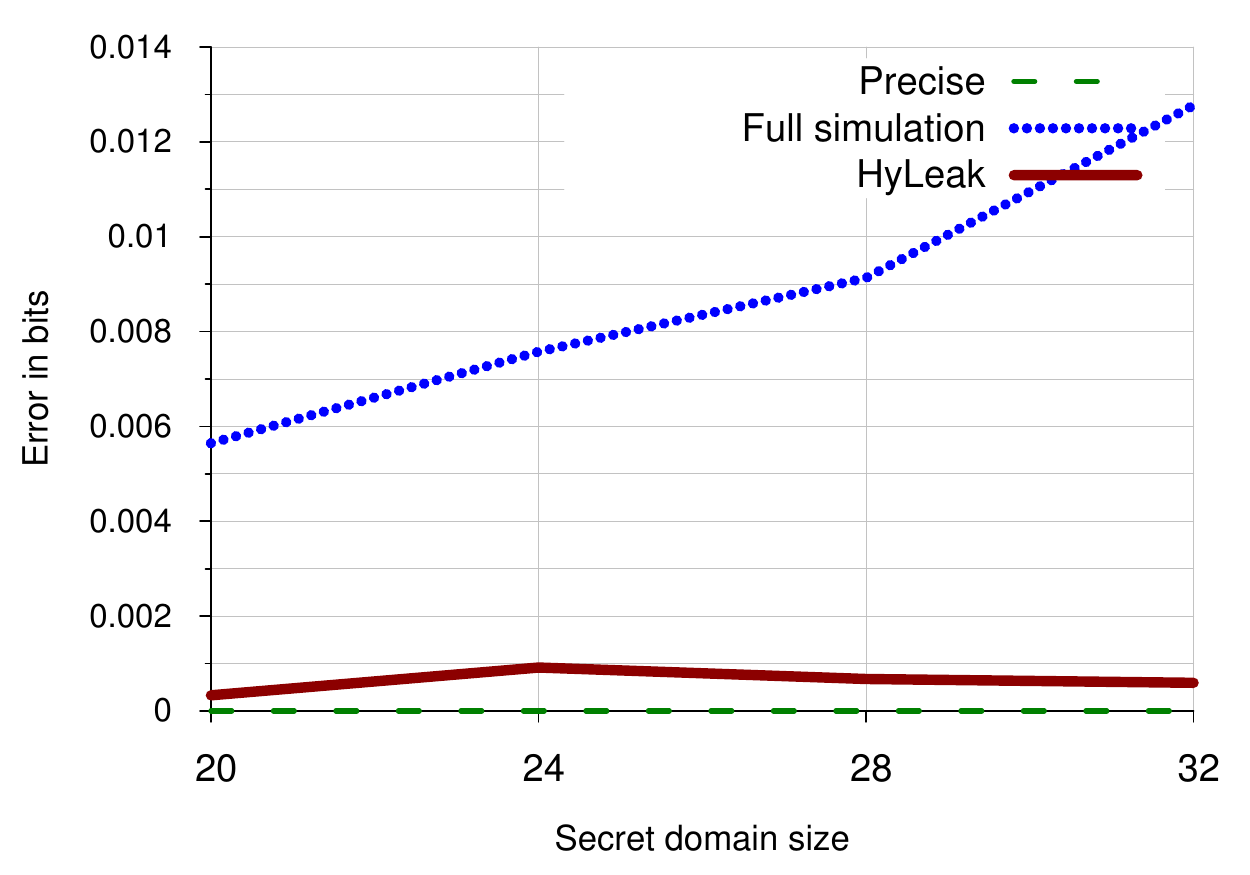}}~~~~~~~~
  \caption{Shifting window experimental results.}
  \label{fig:results:window}
\end{figure}

\subsubsection{Shifting Window}

\begin{minipage}[t]{0.37\textwidth}
\begin{algorithm}[H]
\scriptsize
const N; \tcp{number of elements}
const W; \tcp{window size}
secret int32 $sec$ := [0,$N$-1];\\
observable int32 $obs$;\\
public int32 $minS$, $sizeS$, $sizeS$, $minO$, $sizeO$;\\
$minS$ := \random($0$,$N$-$W$-$1$);\\
\eIf{$sec \geq minS$}{
  $sizeS$ := \random($1$,$W$);\\
  \eIf{$sec \leq minS\text{+}sizeS$}{
    $minO$ := \random($0$,$N$-$W$-$1$);\\
    $sizeO$ := \random($1$,$W$);\\
    $obs$ := \random($minO$,$minO$+$sizeO$);
  }{
    $obs$ := \random($0$,$N$-$1$);
  }
}{
  $obs$ := \random($0$,$N$-$1$);
}
\end{algorithm}
\captionof{figure}{Shifting Window.}\label{alg:window}
\end{minipage}
\hspace{4mm}
\begin{minipage}[t]{0.59\textwidth}
In the Shifting Window example (Fig.~\ref{alg:window}) the secret \texttt{sec} can take $N$ possible values, and an interval (called a ``window'') in the secret domain is randomly selected from $1$ to $W$.
If the value of the secret is inside the window, then another window is randomly chosen in a similar way and the program outputs a random value from this second window.
Otherwise, the program outputs a random value over the secret domain.\\

In Fig.~\ref{fig:results:window} we present the results of experiments on the shifting window when increasing the size of the secret domain.
The execution time of precise analysis grows proportionally to the secret domain size $N$ while \toolname and fully randomized analysis do not require much time for a larger $N$.
In the fully randomized analysis the error from the true value grows rapidly while in using \toolname the error is much smaller.
\end{minipage}

\subsubsection{Probabilistically Terminating Loop}\label{subsec:prob-terminating}

\begin{minipage}[t]{0.37 \textwidth}
\begin{algorithm}[H]
\scriptsize
const $N$; \tcp{number of secrets}
const $BOUND$;\\
secret int32 $sec$ := $[0,N]$;\\
observable int32 $obs$;\\
public int32 $time$ := 0;\\
public int2 $terminate$ := 0;\\
public int32 $rand$;\\
\While{$terminate \neq 1$}{
  $rand$ := \random($1$,$N$);\\
  \lIf{$sec \leq rand$}{$terminate$ := $1$}
  $time$ := $time$+$1$;
}
\eIf{$time < BOUND$}{
  $obs$ := $time$;
}{
  $obs$ := $BOUND$;
}
\end{algorithm}
\captionof{figure}{Probabilistically Terminating Loop.}\label{fig:prob-terminating}
\end{minipage}
\hspace{4mm}
\begin{minipage}[t]{0.59\textwidth}
The tool \toolname can analyze programs that terminate only probabilistically.
For instance, the program shown in Fig.~\ref{fig:prob-terminating} has a loop that terminates depending on the randomly generated value of the variable \texttt{rand}.
No previous work has presented an automatic measurement of information leakage by probabilistic termination, as precise analysis cannot handle non-terminating programs, which typically causes non-termination of the analysis of the program.
On the other hand, the stochastic simulation of this program supported in \toolname terminates after some  number of iterations in practice although it may take long for some program executions to terminate.

We analyze this model with the hybrid and full simulation approaches only, as the precise analysis does not terminate.
The results are given in Table~\ref{tab:benchmark_results}. It shows that also with this problem the hybrid approach performs faster than the full simulation approach.
\end{minipage}

\subsubsection{Smart Grid Privacy}\label{subsec:smart-grid}

A smart grid is an energy network where users (like households) may consume or produce energy.
In Fig.~\ref{fig:smart-grid} we describe a simple model of a smart grid using the \toolname language. This example is taken from~\cite{DBLP:conf/spin/BiondiLQ15}.
The users periodically negotiate with a central aggregator in charge
of balancing the total consumption among several users. In practice each user
declares to the aggregator its consumption plan. The aggregator sums up
the consumptions of the users and checks if it falls within admitted bounds.
If not it answers to the users that the consumption is too low or too high by a certain amount, such that they adapt their demand.
This model raises some privacy issues as some attacker can try to guess the consumption of a user, 
and for instance infer whether or not this particular user is at home.

In Fig.~\ref{fig:results:smartgrid} we present the experiment results of this smart grid example for different numbers of users.
\toolname takes less time than both fully precise analysis and fully randomized analysis (as shown in the left figure).
Moreover, it is closer to the true value than fully randomized analysis
especially when the number of users is larger (as shown in the right figure).

\begin{figure}
\begin{algorithm}[H]
\scriptsize
const $N$:=9;  \tcp{the total number of users}
const $S$;  \tcp{the number of users we care about}
const $C$:=3;  \tcp{the possible consumptions level}
const $M$:=0;  \tcp{the consumption level of the attacker}
const $LOWT$:=2; \tcp{the lower threshold}
const $HIGHT$:=9; \tcp{the upper threshold}
\tcc{the observable is the order given by the control system}
observable int32 $order$;\\
observable int1 $ordersign$;\\
\tcc{the secret is the consumption of each user we care about}
secret array [$S$] of int32 $secretconsumption$ := $[0,C\text{-}1]$;\\
\tcc{the other consumptions are just private}
private array [$N$-($S$+$1$)] of int32 $privateconsumption$ := [$0$,$C$-$1$];\\
public int32 $total$ := $M$; \tcp{this is the projected consumption}
\tcc{count the secret consumptions}
\For{ $i$ in [$0$,$S$-$1$]}{
  \For{ $j$ in [$0$,$C$-$1$]}{
    \lIf{$secretconsumption[i]==j$}{$total$ := $total+j$}
  }
}
\tcc{count the private consumptions}
\For{ $i$ in [$0$,$N$-$S$-$1$]}{
  \For{ $j$ in [$0$,$C$-$1$]}{
    \lIf{$privateconsumption[i]==j$}{$total$ := $total+j$}
  }
}
\uIf{$total<LOWT$}{
  $order$ := $LOWT - total$;\\
  $ordersign$ := $0$;\\
}\uElseIf{$total > HIGHT$}{
  $order$ := $total - HIGHT$;\\
  $ordersign$ := $1$;\\
}\Else{
  $order$ := $0$;\\
  $ordersign$ := $0$;\\
}
\end{algorithm}
\caption{Smart Grid Example.}
\label{fig:smart-grid}
\end{figure}

\begin{figure}
  \centering
  \subfloat{\label{fig:smartgrid-time}\includegraphics[width=0.41\textwidth]{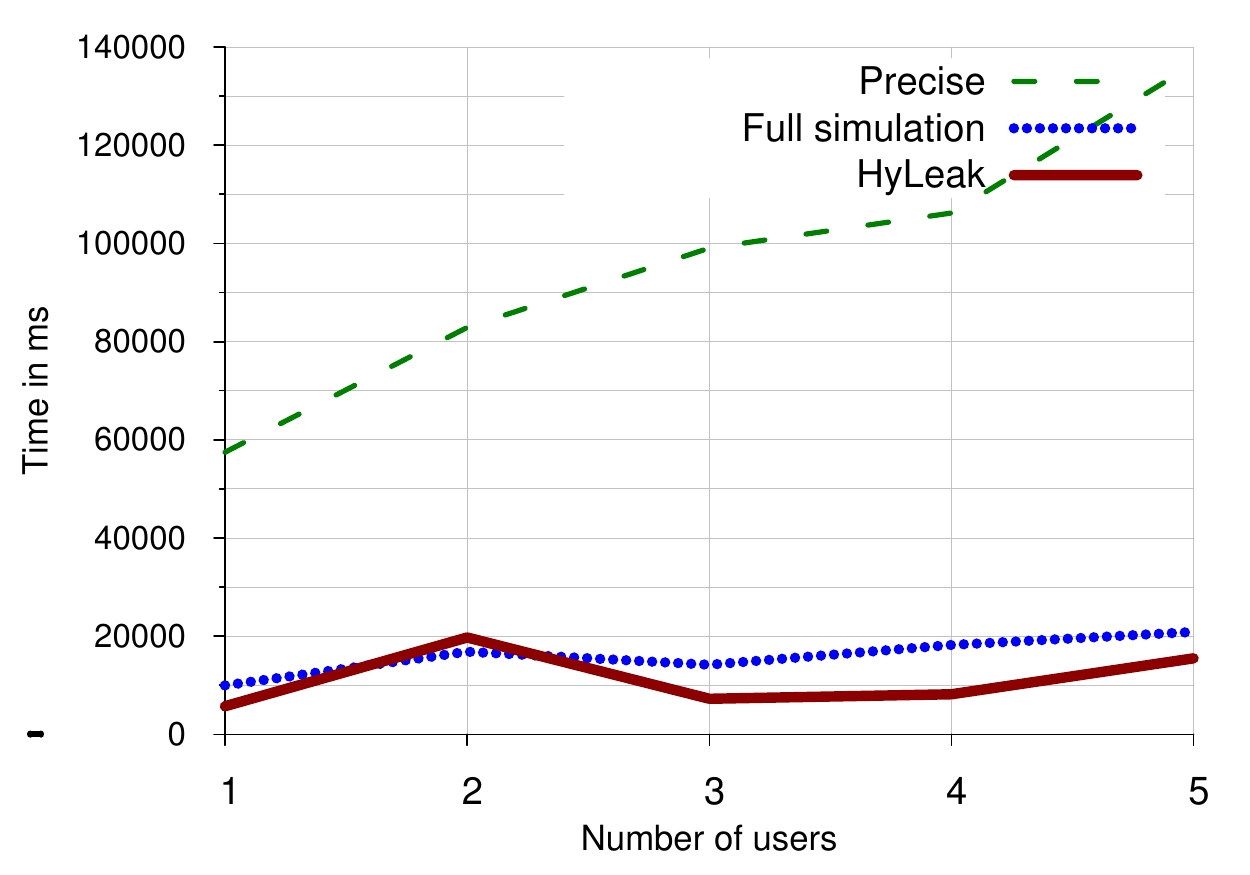}}~~~~~~~~
  \subfloat{\label{fig:smartgrid_err}\includegraphics[width=0.41\textwidth]{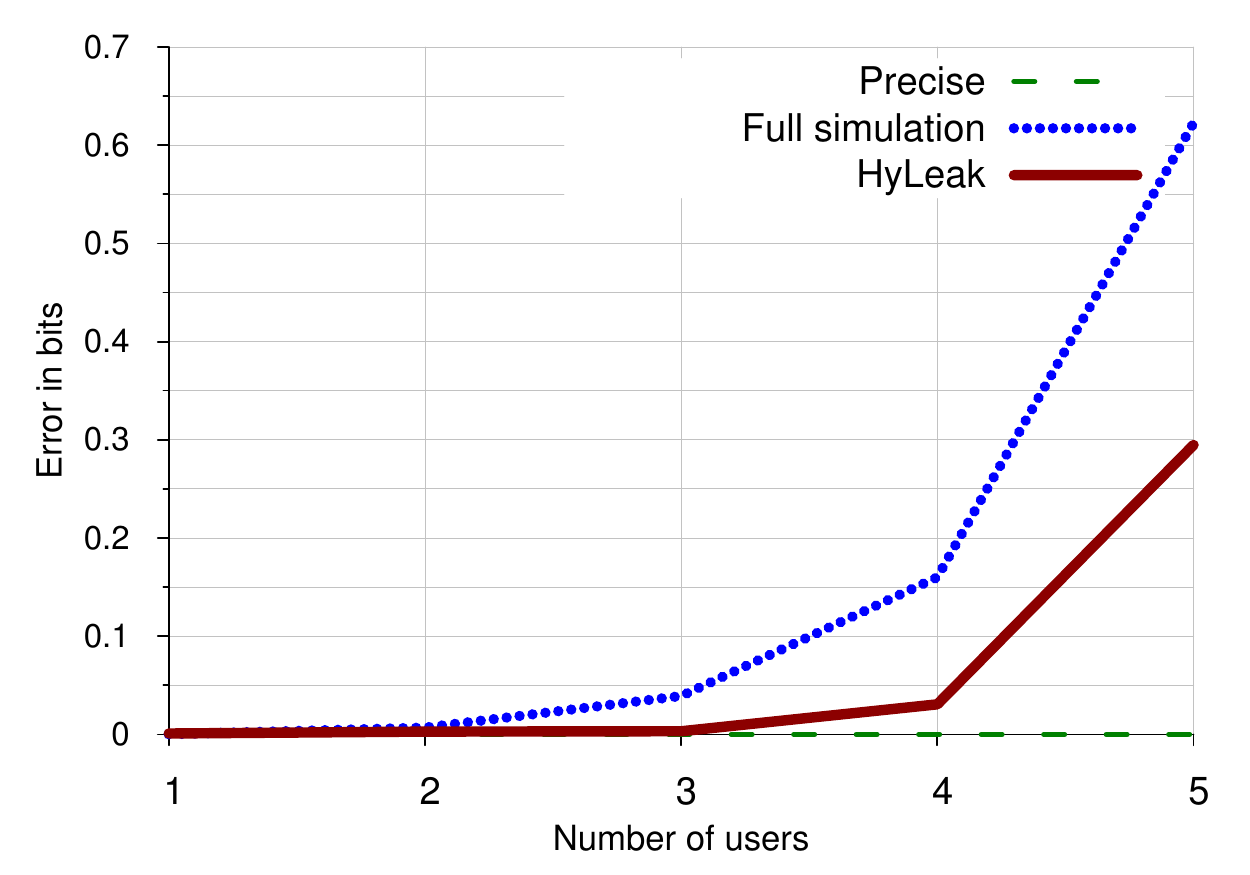}}~~~~~~~~
  \caption{Smart grid experimental results.}
  \label{fig:results:smartgrid}
\end{figure}

\subsubsection{Benchmarks results}
In Table~\ref{tab:benchmark_results} we show the results of all the benchmarks using fully precise, fully statistical and hybrid analyses, for a sample size of 50000 executions.
Timeout is set at 10 minutes. 

\begin{table*}
\centering
\begin{tabular}{c c c c c c c c c c c}
\cmidrule[0.1em]{3-11}
 & & \multicolumn{3}{c}{\bf Precise} & \multicolumn{3}{c}{\bf Statistical} &  \multicolumn{3}{c}{\bf Hybrid} \\[-0.3ex]
\cmidrule{3-11}
 & & {\bf Time(s)} & {\bf Leakage} & {\bf Error} & {\bf Time(s)} & {\bf Leakage} & {\bf Error} & {\bf Time(s)} & {\bf Leakage} & {\bf Error} \\[-0.3ex]
\midrule
\parbox[t]{6mm}{\multirow{6}{*}{\rotatebox[origin=c]{90}{\parbox{1cm}{\centering Random\\ walk}}}} & {\bf N=2} & 0.467 &  2.17 &  0 &  3.85 &  2.19 &  1.15e-2 & 2.97 &  2.17 &  1.37e-4 \\[-0.3ex]
 & {\bf N=3} & 0.748 &  2.17 &  0 &  4.51 &  2.19 &  1.74e-2 & 3.35 &  2.17 &  2.05e-4  \\[-0.3ex]
 & {\bf N=4} & 1.93  &  2.17 &  0 &  5.34 &  2.2  &  2.42e-2 & 4.66 &  2.17 &  2.74e-4  \\[-0.3ex]
 & {\bf N=5} & 5.25  &  2.14 &  0 &  5.4  &  2.17 &  3.03e-2 & 4.66 &  2.14 &  1.14e-4  \\[-0.3ex]
 & {\bf N=6} & 21.2  &  2.11 &  0 &  6.05 &  2.14 &  3.35e-2 & 4.87 &  2.11 &  2.03e-4  \\[-0.3ex]
 & {\bf N=7} & 122   &  2.07 &  0 &  6.14 &  2.12 &  4.58e-2 & 5.34 &  2.07 &  2.57e-4  \\[-0.3ex]
\midrule
\parbox[t]{2mm}{\multirow{5}{*}{\rotatebox[origin=c]{90}{Reservoir}}} & {\bf N=4} & 0.134 &  0.732 &  0 &  3.84 & 0.734 & 2.33e-3 & 2.95 &  0.731 & 9.55e-4 \\[-0.3ex]
 & {\bf N=6}  & 0.645   &  0.918 &  0 &  7    & 0.926 & 7.47e-3 & 4.48 & 0.917 & 1.89e-3  \\[-0.3ex]
 & {\bf N=8}  & 8.8     &  1.1   &  0 &  10.1 & 1.14  & 4.41e-2 & 7.21 & 1.13  & 3.08e-2  \\[-0.3ex]
 & {\bf N=10} & timeout &  n/a   &  n/a &  15.7 & 1.62  &  n/a      & 11.1 & 1.61  &  n/a \\[-0.3ex]
 & {\bf N=12} & timeout &  n/a   &  n/a &  27.7 & 3.02  &  n/a      & 20.7 & 3.01  &  n/a  \\[-0.3ex]
\midrule
\parbox[t]{6mm}{\multirow{3}{*}{\rotatebox[origin=c]{90}{\parbox{1cm}{\centering Lying\\ crypto.}}}} \\ 
 &  & 397 &  0.503 &  0 &  106 &  0.503 &  1.87e-4 & 78.8 &  0.503 &  1.37e-6 \\[-0.3ex]
\\
\midrule
\parbox[t]{6mm}{\multirow{4}{*}{\rotatebox[origin=c]{90}{\parbox{1cm}{\centering Shifting\\ window}}}} & {\bf N=20} & 1.97 &  1.51e-2 &  0 &  2.85 &  2.08e-2 & 5.65e-3 & 2.19 &  1.48e-2 & 3.31e-4 \\[-0.3ex]
 & {\bf N=24} & 3.24 &  1.46e-2 &  0 &  2.81 &  2.21e-2 & 7.58e-3 & 2.32 & 1.55e-2 & 9.16e-4  \\[-0.3ex]
 & {\bf N=28} & 6.99 &  1.42e-2 &  0 &  2.91 &  2.33e-2 & 9.13e-3 & 2.83 & 1.35e-2 & 6.75e-4  \\[-0.3ex]
 & {\bf N=32} & 12.4 &  1.38e-2 &  0 &  2.8  &  2.66e-2 & 1.28e-2 & 3.19 & 1.33e-2 & 5.93e-4  \\[-0.3ex]
\midrule
\parbox[t]{6mm}{\multirow{5}{*}{\rotatebox[origin=c]{90}{\parbox{1.5cm}{\centering Probabilistic\\ termination}}}} \\[0.1cm]
 & {\bf N=5} & n/a &  n/a &  n/a &  4.14 &  0.424 &  n/a & 2.76 &  0.432 &  n/a \\[-0.3ex]
 & {\bf N=7} & n/a &  n/a &  n/a &  4.13 &  0.455 &  n/a & 3.08 &  0.454 &  n/a  \\[-0.3ex]
 & {\bf N=9} & n/a &  n/a &  n/a &  4.43 &  0.472 &  n/a & 3.71 &  0.473 &  n/a \\[-0.3ex]
\\[0cm]
\midrule
\multirow{5}{*}{\rotatebox[origin=c]{90}{Smart grid}} & {\bf S=1} & 57.5 &  8.49e-2 &  0 &  9.96 &  8.51e-2 & 2.31e-4 & 5.74 & 8.59e-2 & 9.43e-4  \\[-0.3ex]
 & {\bf S=2} & 82.9 &  0.181 &  0 &  16.8 &  0.188 & 6.82e-3 & 19.7 & 0.178 & 2.85e-3  \\[-0.3ex]
 & {\bf S=3} & 99   &  0.293 &  0 &  14.2 &  0.332 & 3.9e-2  & 7.23 & 0.296 & 3.16e-3  \\[-0.3ex]
 & {\bf S=4} & 106  &  0.425 &  0 &  18.2 &  0.585 & 0.16    & 8.16 & 0.455 & 3.06e-2  \\[-0.3ex]
 & {\bf S=5} & 136  &  0.587 &  0 &  20.9 &  1.21  & 0.623   & 15.5 & 0.882 & 0.295    \\[-0.3ex]
\bottomrule
\end{tabular}
\caption{Shannon leakage benchmark results using the three different methods (precise, full simulation and hybrid).
The results contain the time (in seconds) taken for the analysis, the value of the leakage (in bits), and the error (in bits) compared to the true result, when the true
result has been computed with the precise analysis. The result n/a means either that the experiment cannot be performed on this example, which is the case for the precise analysis of the probabilistic
terminating loop, or that the error cannot be computed because the precise analysis was not successful.
\label{tab:benchmark_results}}
\end{table*}

The results in Table~\ref{tab:benchmark_results} show the superiority of our hybrid approach compared to the state of the art.
The hybrid analysis scales better than the precise analysis, since it does not need to analyze every trace of the system.
Compared to fully statistical analysis, our hybrid analysis exploits precise analysis on components of the system where statistical estimation would be more expensive than precise analysis.
This allows the hybrid analysis to focus the statistical estimation on components of the system where it converges faster, thus obtaining a smaller confidence interval in a shorter time.

\section{Conclusions and Future Work}\label{sec:conclusion}
We have proposed a hybrid statistical estimation method for estimating mutual information by combining precise and statistical analysis, and for compositionally computing the bias and accuracy of the estimate. This naturally extends to the computation of Shannon entropy and conditional Shannon entropy, generalizing previous approaches on computing  mutual information. The method automatically decomposes a system into components and determines which type of analysis is better for each component according to the components' properties. 

We have also introduced an algorithm to adaptively find the optimal sample sizes for different components in the statistical analysis to minimize their variance and produce a more accurate estimate given a sample size. Moreover, we have presented how to reduce sample sizes by using prior knowledge about systems, including the abstraction-then-sampling technique with qualitative analysis. 
We have shown how to leverage this information on the system to reduce the computation time and error of the estimation.

We have provided an implementation in the freely available tool \toolname{}. 
We have shown both theoretical and experimental results to demonstrate that the proposed approach and implementation outperform the state of the art.

Future work includes developing theory and tools that extend our hybrid  method to the analysis of other properties and integrate further symbolic abstraction techniques into our estimation method.
In particular, we are planning to apply our approach to reasoning about uncertainty propagation in probabilistic programs (e.g.,~\cite{Bouissou:16:TACAS}) by replacing concentration inequalities with our hybrid statistical estimation techniques.
Another possible application of the hybrid analysis is to compute information leakage among adaptive agents in the game-theoretic framework~\cite{Alvim:17:GameSec,Alvim:18:Entropy}, in which each agent probabilistically chooses a strategy that is modeled as a component of a channel.

\section*{Acknowledgment}
This work was supported by JSPS KAKENHI Grant Number JP17K12667, by JSPS and Inria under the Japan-France AYAME Program, by the MSR-Inria Joint Research Center, by the Sensation European grant, and by r\'egion Bretagne.

\bibliographystyle{alpha}
\bibliography{references}

\appendix
\section{Proofs of Technical Results}\label{sec:app:proofs}
In this section we present the detailed proofs of our results.

Hereafter we denote by $Q$ the joint sub-distribution obtained by summing $Q_j$'s:
\begin{dmath*}
Q[x,y] \eqdef \sum_{j\in\J} Q_j[x,y]\;~.
\end{dmath*}
We write $\Qxy$ to denote $Q[x, y]$ for abbreviation.
Then $\Qxy$ is the probability that the execution of the system $\Sys$ yields one of $T_j$'s and has input $x$ and output $y$.

\subsection{Proofs for the Mean Estimation Using the Abstraction-Then-Sampling}
\label{subsec:proof:mean:ATS}

In this section we present the proof for Theorem~\ref{thm:MI:expectation:symbolic} in Section~\ref{subsec:known-subsystem}, i.e., the result on mean estimation using the abstraction-then-sampling.
To show the theorem we present and prove Propositions~\ref{prop:joint:expectation:symbolic}, \ref{prop:Y:expectation:symbolic}, and~\ref{prop:X:expectation:symbolic} below.

First, recall that $\D$ is defined as the set of pairs consisting of inputs and outputs that appear with non-zero probabilities in the execution of the whole system $\Sys$: 
\begin{align*}
\D = \left\{ (x, y) \in\X\times\Y : P_{XY}[x,y] > 0 \right\}
\texttt{.}
\end{align*}
Recall also that $\I = \set{1, 2, \ldots, m}$.
Let $\II = \set{1, 2, \ldots, m'}$ for $m' \le m$.

\begin{proposition}[Mean of joint entropy estimated using the abstraction-then-sampling]\label{prop:joint:expectation:symbolic}
~~
The expected value $\Expect{\hat{H}_{\II}(X; Y)}$ of the estimated joint entropy is given by:
\begin{align*}
\displaystyle&\hspace{-2ex}
\Expect{\hat{H}_{\II}(X, Y)} = 
H(X, Y)-
\!\sum_{i\in\I\setminus\II}\hspace{-0.2ex}
\frac{\theta_i^2}{2n_i}
\Bigl(
\sum_{\,(x,y)\in\D\hspace{-1.2ex}}
\varphi_{ixy}
{\Bigr)}
-
\displaystyle
\sum_{i\in\II}
\frac{\theta_i^2}{2n_i}
\Bigl(
\sum_{\,(x,y)\in\D\hspace{-1.2ex}}
\psi_{ixy}
\Bigr)
+ \O(n_i^{-2})
\end{align*}
where 
$\varphi_{ixy} \allowbreak= {\textstyle \frac{\Di[x,y] - \Di[x,y]^2}{P_{XY}[x, y]}}$
~and~
$\psi_{ixy} \eqdef \frac{\Di[x,y]\pi_i[x] - \Di[x,y]^2}{P_{XY}[x, y]}$.
\end{proposition}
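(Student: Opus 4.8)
The plan is to mimic the Taylor-expansion computation from the proof sketch of Theorem~\ref{lem:MI:expectation:general}, while carefully distinguishing two sampling schemes. Since this statement concerns only the joint entropy, no $H(X)+H(Y)-H(X,Y)$ decomposition is needed: we work directly with $\hat{H}_{\II}(X,Y) = -\sum_{(x,y)\in\D} \ph_{XY}[x,y]\log\ph_{XY}[x,y]$. The first step is to pin down the statistical model. For each $i\in\I\setminus\II$ the empirical sub-distribution is the usual $\Rh_i[x,y] = \theta_i \Kixy/n_i$ with $(\Kixy)_{(x,y)}$ multinomial with parameters $n_i$ and $(\Di[x,y])_{(x,y)}$, so $\Variance{\Kixy} = n_i\Di[x,y](1-\Di[x,y])$. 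For each $i\in\II$, the sub-channel of $S_i$ has all non-zero rows equal to one common conditional distribution $c_i$, i.e.\ $\Di[x,y] = \pi_i[x]\,c_i[y]$; abstraction-then-sampling draws $n_i$ traces on a single input value, records the output counts $K_{iy}$ (multinomial with parameters $n_i$ and $(c_i[y])_y$, so $\Variance{K_{iy}} = n_i c_i[y](1-c_i[y])$), and forms $\Rh_i[x,y] \eqdef \theta_i\,\pi_i[x]\,K_{iy}/n_i$. I would record that $\Expect{\Rh_i[x,y]} = \theta_i\pi_i[x]c_i[y] = R_i[x,y]$, so the ATS estimator is first-order unbiased and the mean of $\ph_{XY}[x,y]$ is $P_{XY}[x,y]$ under either scheme.

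Next, as in the sketch, write $\fxy$ for the map $t\mapsto t\log t$ composed with the affine expression for $\ph_{XY}[x,y]$ in terms of the counts, and note that for fixed $(x,y)$ this affine expression involves, per component, only the single count $\Kixy$ (if $i\in\I\setminus\II$) or $K_{iy}$ (if $i\in\II$). Taylor-expanding $\fxy$ to second order about the mean counts: first-order terms vanish in expectation; cross terms between distinct components vanish by independence of the components' samples; and within one component only the pure second derivative in that component's own count survives, since $\ph_{XY}[x,y]$ is affine in it. Applying the chain rule for the second derivative of $t\log t$ at $t = P_{XY}[x,y]$ and substituting the two variance formulas, the $(x,y)$-term of $\Expect{\fxy(\cdot)}$ exceeds $\fxy$ evaluated at the mean counts by $\frac{\theta_i^2}{2n_i}\,\frac{\Di[x,y]-\Di[x,y]^2}{P_{XY}[x,y]} + \O(n_i^{-2})$ per component $i\in\I\setminus\II$ (exactly the $\varphi_{ixy}$ of Theorem~\ref{lem:MI:expectation:general}) and by $\frac{\theta_i^2}{2n_i}\,\frac{\pi_i[x]^2 c_i[y](1-c_i[y])}{P_{XY}[x,y]} + \O(n_i^{-2})$ per component $i\in\II$.

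It remains to rewrite the $\II$-contribution: the identity $\pi_i[x]^2 c_i[y](1-c_i[y]) = \Di[x,y]\pi_i[x] - \Di[x,y]^2$, immediate from $\Di[x,y] = \pi_i[x]c_i[y]$, turns it into $\frac{\theta_i^2}{2n_i}\,\psi_{ixy}$. Summing over $(x,y)\in\D$, using $-\sum_{(x,y)\in\D} P_{XY}[x,y]\log P_{XY}[x,y] = H(X,Y)$ (valid because the mean of $\ph_{XY}[x,y]$ equals $P_{XY}[x,y]$ under both schemes), and combining the two families of components gives the stated formula. The step I expect to require the most care is the justification of the ATS model and the error estimate: that $\Rh_i[x,y] = \theta_i\pi_i[x]K_{iy}/n_i$ has the stated mean and variance even when some $\pi_i[x] = 0$, that restricting the outer sum to $\D$ correctly handles the pairs with $P_{XY}[x,y] = 0$ and the implicit division by $P_{XY}[x,y]$, and that the third- and higher-order Taylor remainders collect uniformly into $\O(n_i^{-2})$. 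The rest is bookkeeping parallel to the mutual-information case; the analogous statements for $\hat{H}(X)$ and $\hat{H}(Y)$, needed to derive Theorem~\ref{thm:MI:expectation:symbolic}, are obtained the same way.
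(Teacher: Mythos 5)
Your proposal is correct and follows essentially the same route as the paper's proof: Taylor-expand $t\log t$ applied to the affine count expression for $\ph_{XY}[x,y]$ around the mean counts, kill the first-order and cross-component terms by unbiasedness and independence, plug in the two multinomial variances ($n_i\Di[x,y](1-\Di[x,y])$ for standard sampling and $n_i c_i[y](1-c_i[y])$ with $c_i[y]=\Di[x,y]/\pi_i[x]$ for abstraction-then-sampling), and simplify the ATS contribution to $\psi_{ixy}$ via $\pi_i[x]^2c_i[y](1-c_i[y])=\Di[x,y]\pi_i[x]-\Di[x,y]^2$. The only difference is notational (you parametrize the ATS component by the common row $c_i$ rather than by $\Di[x,y]/\pi_i[x]$), and the points you flag as needing care are exactly those the paper handles implicitly.
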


\begin{proof}
We use notations that we have introduced in the previous proofs.
For each $i\in\I$, let $\X_i$ be the set of the elements of $\X$ that appear with non-zero probabilities in the component $S_i$.

As explained in Section~\ref{subsec:known-subsystem} we apply the standard sampling technique (shown in Section~\ref{sec:compositional}) to the components $S_i$ with $i\in\I\setminus\II$, and the abstraction-then-sampling technique to the components $S_i$ with $i\in\II$.
We briefly recall the two techniques below.

Using the standard sampling technique we compute the empirical sub-distribution $\Rh_i$ for $S_i$ with $i\in\I\setminus\II$ as follows.
The analyst first runs $S_i$ a certain number $n_i$ of times to obtain the set of execution traces.
Let $\Kixy$ be the number of traces that have input $x\in\X$ and output $y\in\Y$.
Then $n_i = \sum_{x\in\X,y\in\Y} \Kixy$.
From these numbers $\Kixy$ of traces we compute the empirical joint (full) distribution $\Dhi$ of $X$ and $Y$ by:
\begin{dmath*}
\Dhi[x,y] \eqdef \frac{\Kixy}{n_i}\;.
\end{dmath*}
Since $S_i$ is executed with probability $\theta_i$,
the sub-distribution $\Rh_i$ is given by
$\Rh_i[x,y] \eqdef \theta_i \Dhi[x,y] = \frac{\theta_i \Kixy}{n_i}$.

On the other hand, we use the abstraction-then-sampling sampling technique to compute the empirical sub-distribution $\Rh_i$ for $S_i$ with $i\in\II$ as follows.
Recall that for each $i\in\II$, $\pi_i[x]$ is the probability of having an input $x$ in the component $S_i$.
For each $i\in\II$ all the non-zero rows of $S_i$'s channel matrix are the same conditional distribution; i.e., for each $x,x'\in\X_i$ and $y\in\Y_i$,\, $\frac{P_{XY}[x,y]}{\pi_i[x]} = \frac{P_{XY}[x',y]}{\pi_i[x']}$ when $\pi_i[x] \neq 0$ and $\pi_i[x'] \neq 0$.
Therefore it is sufficient to estimate only one of the rows.
We execute the component $S_i$ with an identical input $x\in\X$ $n_i$ times to record the traces.
Let $\Kiy$ be the number of traces of the component $S_i$ that outputs $y$.
Then we define the empirical joint (full) distribution $\Dhi$ of $X$ and $Y$ as:
\begin{dmath*}
\Dhi[x,y] \eqdef \frac{\pi_i[x]\Kiy}{n_i}
{.}
\end{dmath*}
%
Since $S_i$ is executed with probability $\theta_i$, the sub-distribution $\Rh_i$ is given by:
${\displaystyle \Rh_i[x,y] \eqdef \theta_i \Dhi[x,y] = \frac{\theta_i \pi_i[x]\Kiy}{n_i}}$
{.}

Now the empirical joint probability distribution $\ph_{XY}$ is computed from the above empirical sub-distributions $\Rh_i$ (obtained either by standard sampling or by abstraction-then-sampling) and the exact sub-distributions $Q_j$ (obtained by precise analysis):
\begin{align} \label{eq:pxy}
\ph_{XY}[x,y] = \Qxy + \sum_{i\in\I\setminus\II}\!\frac{\theta_i \Kixy}{n_i} + \sum_{i\in\II} \frac{\theta_i \pi_i[x] \Kiy}{n_i}.
\end{align}

Let $\BKxy = (\Ky{1}, \Ky{2}, \ldots, \Ky{m'}, \Kxy{(m'+1)}, \Kxy{(m'+2)}, \ldots, \Kxy{m})$,
and $\fxy(\BKxy)$ be the $m$-ary function:
\begin{align*}
\fxy(\BKxy) =
\!\Bigl( \Qxy + \hspace{-0.5ex}\sum_{i\in\I\setminus\II}\hspace{-0.5ex} {\textstyle\frac{\theta_i \Kixy}{n_i}} + \sum_{i\in\II} {\textstyle\frac{\theta_i \pi_i[x]\Kiy}{n_i}} {\Bigr)} 
\log\Bigl( \Qxy + \hspace{-0.5ex}\sum_{i\in\I\setminus\II}\hspace{-0.5ex} {\textstyle\frac{\theta_i \Kixy}{n_i}} + \sum_{i\in\II} {\textstyle\frac{\theta_i \pi_i[x]\Kiy}{n_i}} \Bigr)
\end{align*}
which equals $\ph_{XY}[x,y]\log\ph_{XY}[x,y]$.
Then the empirical joint entropy is:
\begin{align*}
\hat{H}_\II(X, Y) =
-\hspace{-1.0ex}\sum_{(x, y)\in\D} \ph_{XY}[x,y] \log \ph_{XY}[x,y]
=
-\hspace{-1.0ex}\sum_{(x, y)\in\D} \fxy(\BKxy).
\end{align*}
Let $\Kioxy = \Expect{\Kixy}$ for each $i\in\I$ and $\BKoxy = \Expect{\BKxy}$.
Then $\Kioxy = n_i \Di[x,y] = \frac{n_i R_i[x,y]}{\theta_i}$, and
$\Kioy = \frac{n_i\Di[x,y]}{\pi_i[x]} = \frac{n_i R_i[x,y]}{\theta_i\pi_i[x]}$.
By the Taylor expansion of $\fxy(\BKxy)$ (w.r.t. the multiple dependent variables $\BKxy$) at $\BKoxy$, we have:
\begin{align*}
\hspace{-2.5ex}
\fxy(\BKxy) &=
\fxy(\BKoxy)
+ \hspace{-1ex}\sum_{i\in\I\setminus\II}\hspace{-1ex}
{\textstyle \frac{\partial\fxy(\BKoxy)}{\partial \Kixy}}
(\Kixy - \Kioxy)
+ \hspace{-0ex}\sum_{i\in\II}
{\textstyle \frac{\partial\fxy(\BKoxy)}{\partial \Kiy}}
\pi_i[x]
(\Kiy - \Kioy)
\\[0.5ex]&\displaystyle
+ \hspace{-1ex}\sum_{\substack{i,j \in\I\setminus\II}}\hspace{-0.5ex} 
\frac{1}{2}
{\textstyle \frac{\partial^2\fxy(\BKoxy)}{\partial \Kixy \partial \Kjxy}}
(\Kixy - \Kioxy)(\Kjxy - \Kjoxy)
+ \hspace{-1ex}\sum_{\substack{i\in\I\setminus\II \\ j\in\II}}\hspace{-0.8ex} 
\frac{1}{2}
{\textstyle \frac{\partial^2\fxy(\BKoxy)}{\partial \Kixy \partial \Kjy}}
\pi_j[x] (\Kixy - \Kioxy)(\Kjy - \Kjoy)
\\[0.5ex]&\displaystyle
+ \hspace{-1ex}\sum_{\substack{i,j \in\II}}\hspace{-0.5ex} 
\frac{1}{2}
{\textstyle \frac{\partial^2\fxy(\BKoxy)}{\partial \Kiy \partial \Kjy}}
\pi_i[x]\pi_j[x] (\Kiy - \Kioy)(\Kjy - \Kjoy)
+ \O(\BKxy^3)
\vspace{-2ex}
\end{align*}

To compute the expected value $\Expect{\hat{H}_\II(X, Y)}$ of the estimated joint entropy, it should be noted that:
\begin{itemize}
\item $\Expect{\Kixy - \Kioxy} = 0$, which is immediate from $\Kioxy = \Expect{\Kixy}$.
\item $\Expect{\Kiy - \Kioy} = 0$, which is immediate from $\Kioy = \Expect{\Kiy}$.
\item If $i \neq j$ then $\Expect{ (\Kixy - \Kioxy) (\Kjxy - \Kjoxy) } = 0$, because $\Kixy$ and $\Kjxy$ are independent.
\item If $i \neq j$ then $\Expect{ (\Kixy - \Kioxy) (\Kjy - \Kjoy) } = 0$, because $\Kixy$ and $\Kjy$ are independent.
\item If $i \neq j$ then $\Expect{ (\Kiy - \Kioy) (\Kjy - \Kjoy) } = 0$, because $\Kiy$ and $\Kjy$ are independent.
\item For each $i\in\I\setminus\II$, $(\Kixy \colon (x,y) \in \D)$ follows the multinomial distribution with the sample size $n_i$ and the probabilities $D_i[x,y]$ for $(x,y) \in \D$, therefore
\begin{align*}
\Expect{ (\Kixy - \Kioxy)^2 } =
\Variance{\Kixy} =
n_i D_i[x,y] \bigl( 1 - D_i[x,y] \bigr)
{.}
\end{align*}
\item For each $i\in\II$, $(\Kiy \colon y \in \Y^+)$ follows the multinomial distribution with the sample size $n_i$ and the probabilities 
$\frac{D_i[x,y]}{\pi_i[x]}$ 
for $(x,y) \in \D$, therefore
\begin{align*}
\Expect{ (\Kiy - \Kioy)^2 } =
\Variance{\Kiy} = 
n_i \frac{D_i[x,y]}{\pi_i[x]} \Bigl( 1 - \frac{D_i[x,y]}{\pi_i[x]} \Bigr).
\end{align*}
\end{itemize}

Hence the expected value of $\fxy(\BKxy)$ is given by:
\begin{align*}
\Expect{\fxy(\BKxy)} = \fxy(\BKoxy)
+ \hspace{-1.5ex}
\sum_{\substack{i \in\I\setminus\II}}
\hspace{-1.2ex}
{\textstyle \frac{1}{2}}
{\textstyle \frac{\partial^2\fxy(\BKoxy)}{\partial \Kixy^2}}
\Expect{ (\Kixy \!- \Kioxy)^2 }
+ \hspace{-0.5ex}
\sum_{\substack{i \in\II}}
\hspace{-0.5ex}
{\textstyle \frac{\pi_i[x]^2 }{2}}
{\textstyle \frac{\partial^2\fxy(\BKoxy)}{\partial \Kiy^2}}
\Expect{ (\Kiy \!- \Kioy)^2 }
\!+\!\O(\BKxy^3)
{.}
\end{align*}

Therefore the expected value of $\hat{H}_\II(X, Y)$ is given by:
\begin{align*}
\Expect{\hat{H}_\II(X, Y)}
=\,& H(X, Y) -
\displaystyle
\hspace{-1ex}
\sum_{(x, y)\in\D}
\!\biggl(
\sum_{i\in\I\setminus\II}\hspace{-0.5ex}
{\textstyle \frac{1}{2}}
{\textstyle \frac{\theta_i^2}{n_i^2 P_{XY}[x,y]}}
n_i {\textstyle D_i[x,y] \left( 1 - D_i[x,y] \right)}
\\
& \hspace{17ex} + 
\!\sum_{i\in\II}
{\textstyle \frac{\pi_i[x]^2}{2}}
{\textstyle \frac{\theta_i^2}{n_i^2 P_{XY}[x,y]}}
n_i {\textstyle \frac{D_i[x,y]}{\pi_i[x]} \left( 1 - \frac{D_i[x,y]}{\pi_i[x]} \right)}
+\O(n_i^{-2})
\biggr)
\\[0.5ex]
=\,& H(X, Y) -
\displaystyle
\hspace{-1.0ex}
\sum_{i\in\I\setminus\II}
\hspace{-0.8ex}
{\textstyle\frac{\theta_i^2}{2n_i}}
\hspace{-1.0ex}
\sum_{(x, y)\in\D}\hspace{-1.5ex}
{\textstyle\frac{\Di[x,y] \left( 1 - \Di[x,y] \right)}{P_{XY}[x,y]}}
- \sum_{i\in\II} 
\hspace{-0.2ex}
{\textstyle\frac{\theta_i^2}{2n_i}}
\hspace{-1.0ex}
\sum_{(x, y)\in\D}\hspace{-1.5ex}
{\textstyle\frac{\Di[x,y] \left( \pi_i[x] - \Di[x,y] \right)}{P_{XY}[x,y]}}
+\sum_{i\in\I}\O(n_i^{-2})
{.}
\end{align*}
\end{proof}

\begin{proposition}[Mean of marginal output entropy estimated using the abstraction-then-sampling]\label{prop:Y:expectation:symbolic}
The expected value $\Expect{\hat{H}_{\II}(Y)}$ of the empirical output entropy is given by:
\begin{align*}
\displaystyle&\hspace{-2ex}
\Expect{\hat{H}_{\II}(Y)} = 
H(Y)-
\sum_{i\in\I}
\frac{\theta_i^2}{2n_i}
\Bigl(
\sum_{\,y\in\Y^+\hspace{-1ex}}\hspace{-0.1ex}
\varphi_{iy}
{\Bigr)}
+ \O(n_i^{-2})
\texttt{.}
\end{align*}
\end{proposition}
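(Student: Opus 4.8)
The plan is to mirror the argument already used for Proposition~\ref{prop:joint:expectation:symbolic}, but applied to the marginal output distribution instead of the joint one. First I would write down the empirical marginal output distribution: summing \eqref{eq:pxy} over $x\in\X$ and using $\sum_{x}\pi_i[x]=1$ for each $i\in\II$, one obtains
\begin{align*}
\ph_Y[y] = Q_Y[y] + \sum_{i\in\I} \frac{\theta_i \Kiy}{n_i},
\end{align*}
where $Q_Y[y] = \sum_{x}\Qxy$, and $\Kiy$ denotes $\sum_{x}\Kixy$ for $i\in\I\setminus\II$ and the direct output count for $i\in\II$. The crucial observation is that in both cases $\Expect{\Kiy} = n_i\Dyi[y]$ and $\Variance{\Kiy} = n_i\Dyi[y](1-\Dyi[y])$: for $i\in\I\setminus\II$ this is just the marginal of a multinomial, and for $i\in\II$ it follows because all non-zero rows of $S_i$'s channel matrix coincide, so $\tfrac{D_i[x,y]}{\pi_i[x]} = \Dyi[y]$ and $\Kiy$ is multinomial with probabilities $\Dyi[y]$. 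This is precisely why the abstraction-then-sampling components contribute the same $\varphi_{iy}$ term as the ordinary ones, so that no $\I$ versus $\II$ split appears in the final formula.

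Next I would set $\fy(\BKy) \eqdef \ph_Y[y]\log\ph_Y[y]$ viewed as a function of $\BKy = (\Ky{1},\ldots,\Ky{m})$, so that $\hat{H}_{\II}(Y) = -\sum_{y\in\Y^+}\fy(\BKy)$, and Taylor-expand $\fy$ around $\BKoy = \Expect{\BKy}$ up to second order. Taking expectations and using (i) $\Expect{\Kiy - \Kioy}=0$, (ii) $\Expect{(\Kiy-\Kioy)(\Kjy-\Kjoy)}=0$ for $i\neq j$ by independence of the components, and (iii) $\Expect{(\Kiy-\Kioy)^2} = \Variance{\Kiy} = n_i\Dyi[y](1-\Dyi[y])$, only the diagonal second-order terms survive. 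Since $\tfrac{\partial\ph_Y[y]}{\partial\Kiy} = \tfrac{\theta_i}{n_i}$, the relevant second derivative $\tfrac{\partial^2\fy(\BKoy)}{\partial\Kiy^2}$ is computed exactly as in the proofs of Theorem~\ref{lem:MI:expectation:general} and Proposition~\ref{prop:joint:expectation:symbolic}, yielding the factor $\tfrac{\theta_i^2}{n_i^2 P_Y[y]}$, and hence
\begin{align*}
\Expect{\hat{H}_{\II}(Y)} = H(Y) - \sum_{y\in\Y^+}\frac{1}{2}\sum_{i\in\I}\frac{\theta_i^2}{n_i^2 P_Y[y]}\, n_i\Dyi[y]\bigl(1-\Dyi[y]\bigr) + \O(n_i^{-2}).
\end{align*}
Rearranging the sum and recalling $\varphi_{iy} = \tfrac{\Dyi[y]-\Dyi[y]^2}{P_Y[y]}$ gives exactly the claimed expression.

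I expect the only genuinely delicate step to be the verification of the variance identity $\Variance{\Kiy} = n_i\Dyi[y](1-\Dyi[y])$ for the abstraction-then-sampling components, i.e.\ checking that summing the empirical contribution $\tfrac{\pi_i[x]\Kiy}{n_i}$ over $x$ really reproduces a quantity distributed like the ordinary marginal output count; once this identification is in place, the rest is a routine repetition of the Taylor-expansion bookkeeping already carried out in the proof of Proposition~\ref{prop:joint:expectation:symbolic}. The remaining points — exchanging expectation with the finite sum over $y\in\Y^+$ and controlling the remainder by $\O(n_i^{-2})$ — are identical to the earlier arguments and require no new ideas.
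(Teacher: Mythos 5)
Your proposal is correct and follows essentially the same route as the paper's proof: express the empirical marginal $\ph_Y[y]$ as a sum of exact sub-distributions plus per-component output counts, observe that for both the standard components (where the count is the marginal $\sum_{x}\Kixy$, which the paper denotes $\Liy$) and the abstraction-then-sampling components the output counts are multinomial with probabilities $\Dyi[y]$ and hence have variance $n_i\Dyi[y](1-\Dyi[y])$, and then run the second-order Taylor expansion of $\ph_Y[y]\log\ph_Y[y]$. Your key observation that $\tfrac{\Di[x,y]}{\pi_i[x]}=\Dyi[y]$ for $i\in\II$ is exactly the identification the paper uses to make the $\II$ components contribute the same $\varphi_{iy}$ term as the others, so no gap remains.
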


\begin{proof}
Recall that 
$\D$ is the set of pairs of inputs and outputs with non-zero probabilities,
$\D_x = \{ y \colon (x, y) \in\D \}$
and
$\D_y = \{ x \colon (x, y) \in\D \}$.
For each $i\in\I\setminus\II$ and $y \in\Y$ let $\Liy = \sum_{x\in\D_y} \Kixy$.
Recall the empirical joint distribution $\ph_{XY}$ in Equation~(\ref{eq:pxy}) in the proof of Proposition~\ref{prop:joint:expectation:symbolic}.

Now the empirical marginal distribution $\ph_{Y}$ on outputs is given by:
\begin{align*}
\ph_{Y}[y] = 
\sum_{x\in\D_y}
\ph_{XY}[x,y]
&= 
\sum_{x\in\D_y}
\Bigl(
\Qxy + \hspace{-0.5ex}
\sum_{i\in\I\setminus\II}\!\frac{\theta_i \Kixy}{n_i} + 
\sum_{i\in\II} \frac{\theta_i \pi_i[x] \Kiy}{n_i}
\Bigr)
\\ &=
\sum_{x\in\D_y}\hspace{-0.5ex}\Qxy +
\sum_{i\in\I\setminus\II}\!\frac{\theta_i \Liy}{n_i} +
\sum_{i\in\II} \frac{\theta_i \Kiy}{n_i}
{.}
\end{align*}

Let $\BKy = (\Ly{1}, \Ly{2}, \ldots, \Ly{m'}, \Ky{m'+1}, \Ky{m'+2}, \ldots, \Ky{m})$,
and $\fy(\BKy)$ be the following $m$-ary function:
\begin{align*}
\fy(\BKy) =
{\Bigl(} \sum_{x\in\D_y}\Qxy + \hspace{-0.5ex}
\sum_{i\in\I\setminus\II}\hspace{-0.5ex} {\textstyle \frac{\theta_i \Liy}{n_i}} + \hspace{-0.2ex}
\sum_{i\in\II} {\textstyle \frac{\theta_i \Kiy}{n_i}} {\Bigr)}
\log{\Bigl(} \sum_{x\in\D_y}\Qxy + \hspace{-0.5ex}
\sum_{i\in\I\setminus\II}\hspace{-0.5ex} {\textstyle \frac{\theta_i \Liy}{n_i}} + \hspace{-0.2ex}
\sum_{i\in\II} {\textstyle \frac{\theta_i \Kiy}{n_i}} \Bigr),
\end{align*}
which equals $\ph_{Y}[y]\log\ph_{Y}[y]$.

Let $\Y^+$ be the set of outputs with non-zero probabilities.
Then the empirical marginal entropy is:
\begin{align*}
\hat{H}_\II(Y) =
-\hspace{-1.0ex}\sum_{y\in\Y^+} \ph_{Y}[y] \log \ph_{Y}[y]
=
-\hspace{-1.0ex}\sum_{y\in\Y^+} \fy(\BKy).
\end{align*}

Let $\Lioy = \Expect{\Liy}$ for each $i\in\I\setminus\II$, and $\BKoy = \Expect{\BKy}$.
Then $\Lioy = \sum_{x\in\D_y} \Kioxy$.
By the Taylor expansion of $\fy(\BKy)$ (w.r.t. the multiple dependent variables $\BKy$) at $\BKoy$, we have:

\begin{align*}
\hspace{-2.5ex}
\fy(\BKy) &=
\fy(\BKoy)
+ \hspace{-1ex}\sum_{i\in\I\setminus\II}\hspace{-1ex}
{\textstyle \frac{\partial\fy(\BKoy)}{\partial \Liy}}
(\Liy - \Lioy)
+ \hspace{-0ex}\sum_{i\in\II}
{\textstyle \frac{\partial\fy(\BKoy)}{\partial \Kiy}}
(\Kiy - \Kioy)
\\[0.5ex]&\displaystyle
+ \hspace{-1ex}\sum_{\substack{i,j \in\I\setminus\II}}\hspace{-0.5ex} 
\frac{1}{2}
{\textstyle \frac{\partial^2\fy(\BKoy)}{\partial \Liy \partial \Ljy}}
(\Liy - \Lioy)(\Ljy - \Ljoy)
+ \hspace{-1ex}\sum_{\substack{i\in\I\setminus\II \\ j\in\II}}\hspace{-0.8ex} 
\frac{1}{2}
{\textstyle \frac{\partial^2\fy(\BKoy)}{\partial \Liy \partial \Kjy}}
(\Liy - \Lioy)(\Kjy - \Kjoy)
\\[0.5ex]&\displaystyle
+ \hspace{-1ex}\sum_{\substack{i,j \in\II}}\hspace{-0.5ex} 
\frac{1}{2}
{\textstyle \frac{\partial^2\fy(\BKoy)}{\partial \Kiy \partial \Kjy}}
(\Kiy - \Kioy)(\Kjy - \Kjoy)
+ \O(\BKy^3)
\vspace{-2ex}
\end{align*}

Recall that $\Dyi[y] = \sum_{x\in\X} \Di[x,y]$.
To compute the expected value $\Expect{\hat{H}_\II(Y)}$ of the estimated marginal entropy, it should be noted that:
\begin{itemize}
\item
$\Expect{\Liy - \Lioy} = 0$, which is immediate from $\Lioy = \Expect{\Liy}$.
\item
$\Expect{\Kiy - \Kioy} = 0$, which is immediate from $\Kioy = \Expect{\Kiy}$.
\item
If $i \neq j$ then $\Expect{ (\Liy - \Lioy)(\Ljy - \Ljoy) } = 0$,
because $\Liy$ and $\Ljy$ are independent.
\item
$\Expect{ (\Liy - \Lioy)(\Kjy - \Kjoy) } = 0$,
because $\Liy$ and $\Kjy$ are independent.
\item 
If $i \neq j$ then $\Expect{ (\Kiy - \Kioy)(\Kjy - \Kjoy) } = 0$,
because $\Kiy$ and $\Kjy$ are independent.

\item
For $i\in\I\setminus\II$, $(\Liy \colon y \in \Y^+)$ follows the multinomial distribution with the sample size $n_i$ and the probabilities $\Dyi[y]$ for $y \in \Y^+$, therefore
\begin{align*}
\Expect{ (\Liy - \Lioy)^2 }
= \Variance{ \Liy }
= n_i \Dyi[y] \left( 1 - \Dyi[y] \right)
{.}
\end{align*}

\item
For $i\in\II$, $(\Kiy \colon y \in \Y^+)$ follows the multinomial distribution with the sample size $n_i$ and the probabilities $\Dyi[y]$ for $y \in \Y^+$, therefore
\begin{align*}
\Expect{ (\Kiy - \Kioy)^2 }
= \Variance{ \Kiy }
= n_i \Dyi[y] \left( 1 - \Dyi[y] \right)
{.}
\end{align*}
\end{itemize}

Hence the expected value of $\fy(\BKy)$ is given by:
\begin{align*}
\Expect{\fy(\BKy)} = \fy(\BKoy)
&+ \hspace{-0.5ex}
\sum_{\substack{i \in\I\setminus\II}}\hspace{-0.2ex}
\frac{1}{2}
{\textstyle \frac{\partial^2\fy(\BKoy)}{\partial \Liy^2}}\,
\Expect{ (\Liy - \Lioy)^2 }
+ 
\sum_{\substack{i \in\II}}
\frac{1}{2}
{\textstyle \frac{\partial^2\fy(\BKoy)}{\partial \Kiy^2}}\,
\Expect{ (\Kiy - \Kioy)^2 }
+ \O(\BKy^3)
{.}
\end{align*}
Therefore the expected value of $\hat{H}_\II(Y)$ is given by:
\begin{align*}
\Expect{\hat{H}_\II(Y)} &= 
\displaystyle
H(Y) -
\hspace{-0.5ex}
\sum_{y\in\Y^+} \Bigl(\hspace{-1.0ex}
\sum_{~~~i\in\I\setminus\II}\hspace{-1ex} 
{\textstyle \frac{\theta_i^2}{2n_i^2 P_Y[y]}}
\Expect{ (\Liy - \Lioy)^2 }
+ \hspace{-0.5ex}\sum_{i\in\II} 
{\textstyle \frac{\theta_i^2}{2n_i^2 P_Y[y]}}
\Expect{ (\Kiy - \Kioy)^2 }
+ \O(n_i^{-2})
\Bigr)\!
\\
&=
H(Y) -
\displaystyle
\sum_{i\in\I} \frac{\theta_i^2}{2n_i}
\sum_{y\in\Y^+}\hspace{-0.5ex}\frac{\Dyi[y] \left( 1 - \Dyi[y] \right)}{P_Y[y]}
+ \O(n_i^{-2})
{.}
\end{align*}
\end{proof}

\begin{proposition}[Mean of marginal input entropy estimated using the abstraction-then-sampling]\label{prop:X:expectation:symbolic}
The expected value $\Expect{\hat{H}_{\II}(X)}$ of the empirical input entropy is given by:
\begin{align*}
\displaystyle&\hspace{-2ex}
\Expect{\hat{H}_{\II}(X)} = 
H(X)-\hspace{-0.5ex}
\sum_{i\in\I\setminus\II}\hspace{-0.5ex}
\frac{\theta_i^2}{2n_i}
\Bigl(
\sum_{\,x\in\X^+\hspace{-1ex}}\hspace{-0.1ex}
\varphi_{ix}
{\Bigr)}
+ \O(n_i^{-2})
\texttt{.}
\end{align*}
\end{proposition}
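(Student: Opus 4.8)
The plan is to follow the same pattern as the proofs of Propositions~\ref{prop:joint:expectation:symbolic} and~\ref{prop:Y:expectation:symbolic}; in fact this statement is just the abstraction-then-sampling counterpart of Proposition~\ref{lem:entropy:expectation}. The one conceptual observation, which explains why the correction term ranges only over $\I\setminus\II$, is that the abstraction-then-sampling components feed the empirical \emph{input} marginal $\ph_X$ in a deterministic way, hence introduce no bias into $\hat{H}_{\II}(X)$.

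First I would marginalize the empirical joint distribution $\ph_{XY}$ of Equation~(\ref{eq:pxy}) over the outputs. For $x\in\X^+$ this gives
\begin{align*}
\ph_X[x] = \sum_{y\in\D_x}\ph_{XY}[x,y]
= \sum_{y\in\D_x}\Qxy
+ \sum_{i\in\I\setminus\II}\frac{\theta_i}{n_i}\sum_{y\in\D_x}\Kixy
+ \sum_{i\in\II}\frac{\theta_i\pi_i[x]}{n_i}\sum_{y\in\D_x}\Kiy .
\end{align*}
Since each $S_i$ with $i\in\II$ is executed exactly $n_i$ times, $\sum_{y\in\D_x}\Kiy = n_i$, so the $\II$-part collapses to the constant $\sum_{i\in\II}\theta_i\pi_i[x]$, which is precisely the true contribution of those components to $P_X[x]$ and carries no randomness. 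Setting $M_{ix}\eqdef\sum_{y\in\D_x}\Kixy$ for $i\in\I\setminus\II$, I would then record the moments needed: since $(\Kixy)_{(x,y)\in\D}$ is multinomial with sample size $n_i$ and cell probabilities $\Di[x,y]$, the aggregate $M_{ix}$ is binomial with $\Expect{M_{ix}} = n_i\Dxi[x]$ and $\Variance{M_{ix}} = n_i\Dxi[x]\bigl(1-\Dxi[x]\bigr)$, and $M_{ix}$ and $M_{jx}$ are independent when $i\neq j$ because the components are sampled independently.

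The remaining step is a second-order Taylor expansion of $g_x \eqdef \ph_X[x]\log\ph_X[x]$, regarded as a function of the tuple $(M_{ix})_{i\in\I\setminus\II}$, about its mean --- exactly as $\fy(\BKy)$ is expanded in the proof of Proposition~\ref{prop:Y:expectation:symbolic}. The first-order terms vanish in expectation, the mixed second-order terms vanish by independence, and only the diagonal terms survive. Using $\partial\ph_X[x]/\partial M_{ix} = \theta_i/n_i$ to evaluate the diagonal second derivative of $t\log t$ at the mean $\ph_X[x] = P_X[x]$, substituting $\Variance{M_{ix}}$, summing $-g_x$ over $x\in\X^+$, and pulling out the $\theta_i^2/(2n_i)$ factor yields
\begin{align*}
\Expect{\hat{H}_{\II}(X)} = H(X) - \sum_{i\in\I\setminus\II}\frac{\theta_i^2}{2n_i}\sum_{x\in\X^+}\frac{\Dxi[x]-\Dxi[x]^2}{P_X[x]} + \O(n_i^{-2}),
\end{align*}
which is the claimed formula with $\varphi_{ix} = (\Dxi[x]-\Dxi[x]^2)/P_X[x]$. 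I do not expect a genuine obstacle: the only point that must be stated carefully is that the $\II$-components drop out of $\ph_X$ deterministically, so no bias term arises for them; after that it is the same bookkeeping of multinomial moments, the second derivative of $t\log t$, and control of the remainder as in Appendix~\ref{subsec:proof:mean:ATS}.
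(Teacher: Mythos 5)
Your proposal is correct and follows essentially the same route as the paper's (much terser) proof: the key observation in both is that the $\II$-components contribute to $\ph_X$ only through the known priors $\pi_i[x]$ and the deterministic total count $n_i$, hence introduce no bias, while the $\I\setminus\II$-components are handled by the same Taylor-expansion bookkeeping as in Proposition~\ref{prop:Y:expectation:symbolic}. Your write-up actually spells out the marginalization and the binomial moments of $M_{ix}$ more explicitly than the paper does.
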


\begin{proof}
For the components $i\in\II$, the prior $\pi_i[x]$ is known to the analyst and used in the abstraction-then-sampling technique.
Hence these components produce no bias in estimating $H(X)$.
For the components $i\in\I\setminus\II$, we derive the bias in a similar way to the proof of Proposition~\ref{prop:Y:expectation:symbolic}.
Hence the theorem follows.
\end{proof}

\resMeanMISymbolic*

\begin{proof}

By Propositions~\ref{prop:joint:expectation:symbolic}, \ref{prop:Y:expectation:symbolic}, and \ref{prop:X:expectation:symbolic}, we obtain the expected value of the estimated mutual information:
\begin{align*}
\Expect{\hat{I}_\II(X; Y)}
&=
\Expect{\hat{H}_\II(X)} + \Expect{\hat{H}_\II(Y)} - \Expect{\hat{H}_\II(X, Y)} \\
&=
I(X; Y) +
\displaystyle
\sum_{i\in\I\setminus\II}
\frac{\theta_i^2}{2n_i}\!%
\Bigl(\!
\sum_{\,(x,y)\in\D\hspace{-2ex}}\hspace{-1ex}
{\textstyle \frac{\Di[x,y] - \Di[x,y]^2}{P_{XY}[x, y]}}
-\hspace{-0.5ex}
\sum_{\,x\in\X^+\hspace{-1.5ex}}\hspace{-1ex}
{\textstyle \frac{\Dxi[x] - \Dxi[x]^2}{P_X[x]}}
-\hspace{-0.5ex}
\sum_{\,y\in\Y^+\hspace{-1.5ex}}\hspace{-1ex}
{\textstyle \frac{\Dyi[y] - \Dyi[y]^2}{P_Y[y]}}
\!\Bigr) \\
&\hspace{10.5ex}
+
\displaystyle
\sum_{i\in\II}
\frac{\theta_i^2}{2n_i}\!%
\Bigl(\!
\sum_{\,x\in\D_x\hspace{-0.5ex}}\hspace{-0.5ex}
{\textstyle \frac{\Di[x,y]\pi_i[x] - \Di[x,y]^2}{P_{XY}[x, y]}}
-\hspace{-0.5ex}
\sum_{\,y\in\Y^+\hspace{-1.5ex}}\hspace{-1ex}
{\textstyle \frac{\Dyi[y] - \Dyi[y]^2}{P_Y[y]}}
\!\Bigr)\!
+ \O(n_i^{-2}).
\end{align*}
Therefore we obtain the theorem.
\end{proof}

\subsection{Proofs for the Mean Estimation Using Only the Standard Sampling}
\label{subsec:proof:mean:standard}

In this section we present the proofs for Theorem~\ref{lem:MI:expectation:general} in Section~\ref{subsec:estimate-MI-bias} and Proposition~\ref{lem:entropy:expectation} in Section~\ref{sec:other-measures}.

\resMeanMI*
\begin{proof}
If $\II = \emptyset$, then $\Expect{\hat{I}(X; Y)} = \Expect{\hat{I}_\II(X; Y)}$.
Hence the claim follows from Theorem~\ref{thm:MI:expectation:symbolic}.
\end{proof}

\resMeanShannonEntropy*
\begin{proof}
If $\II = \emptyset$, then $\Expect{\hat{H}(X)} = \Expect{\hat{H}_\II(X)}$.
Hence the claim follows from Proposition~\ref{prop:X:expectation:symbolic}.
\end{proof}

\subsection{Proof for the Variance Estimation Using the Abstraction-Then-Sampling}
\label{subsec:proof:var:ATS}

In this section we present the proof for Theorem~\ref{thm:MI:variance:symbolic} in Section~\ref{subsec:known-subsystem}, i.e., the result on variance estimation using the abstraction-then-sampling.

To show the proofs we first calculate the covariances between random variables in Lemmas~\ref{lem:covariance:Kxy:Kxy}, \ref{lem:covariance:Kxy:Ly}, \ref{lem:covariance:Ly:Ly}, \ref{lem:covariance:Ly:Ky}, and~\ref{lem:covariance:Ky:Ky} as follows.
Recall that the covariance $\Cov{A,B}$ between two random variables $A$ and $B$ is defined by:
\begin{align*}
\Cov{A,B} \eqdef
\Expect{\bigl( A - \Expect{A} \bigr) \bigl( B - \Expect{B} \bigr)}
{.}
\end{align*}

\begin{lemma}[Covariance between $\Kixy$ and $K_{\!i'x'y'}$]\label{lem:covariance:Kxy:Kxy}
For any $i,i'\in\I\setminus\II$,\,
$\Cov{\Kixy, K_{\!i'x'y'}}$ is given by:
\begin{align*}
\Cov{\Kixy, K_{\!i'x'y'}}
&=
\begin{cases}
0					&\mbox{ if $i \neq i'$} \\
  n_i \Di[x,y] (1 - \Di[x,y])	&\mbox{ if $i = i'$, $x = x'$ and $y = y'$} \\
- n_i \Di[x,y] \Di[x',y']  	&\mbox{ otherwise.}
\end{cases}
\end{align*}
\end{lemma}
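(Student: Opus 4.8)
The plan is to reduce the claim to the elementary covariance structure of a multinomial distribution. First I would dispose of the case $i \neq i'$: the traces recorded for the component $S_i$ come from $n_i$ executions that are carried out independently of the $n_{i'}$ executions used for $S_{i'}$, so $\Kixy$ and $K_{\!i'x'y'}$ are independent random variables and $\Cov{\Kixy, K_{\!i'x'y'}} = 0$. (One could equally well just cite the standard moment formulas for the multinomial, but I prefer to derive them so the argument is self-contained.)

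For the remaining case $i = i'$ I would expand each count as a sum of per-execution indicators. Writing $U_t^{xy}$ for the indicator that the $t$-th execution of $S_i$ records input $x$ and output $y$, we have $\Kixy = \sum_{t=1}^{n_i} U_t^{xy}$, where the $U_t^{xy}$ are independent across $t$ and each has mean $\Di[x,y]$. By bilinearity of covariance,
\begin{align*}
\Cov{\Kixy, K_{\!ix'y'}} = \sum_{t=1}^{n_i} \sum_{s=1}^{n_i} \Cov{U_t^{xy}, U_s^{x'y'}},
\end{align*}
and the terms with $t \neq s$ vanish because distinct executions are independent. Thus only the $n_i$ diagonal terms $\Cov{U_t^{xy}, U_t^{x'y'}}$ remain. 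Here I would use the key structural fact that a single trace records exactly one input–output pair: hence $U_t^{xy} U_t^{x'y'} = 0$ whenever $(x,y) \neq (x',y')$, which gives $\Cov{U_t^{xy}, U_t^{x'y'}} = -\Di[x,y]\Di[x',y']$ in that case; while for $(x,y) = (x',y')$ we get $\Expect{(U_t^{xy})^2} - \Di[x,y]^2 = \Di[x,y] - \Di[x,y]^2 = \Di[x,y](1 - \Di[x,y])$. Summing $n_i$ identical copies of the appropriate diagonal term then yields exactly the three-way case split in the statement.

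The computation is entirely routine, so there is no real obstacle; the one point worth stating carefully is precisely the observation just used — that each execution yields a unique $(x,y)$ pair, so the product of indicators for two distinct pairs is identically zero. This is what collapses the off-diagonal diagonal terms to the simple $-n_i\Di[x,y]\Di[x',y']$ form rather than something involving second-order interaction terms.
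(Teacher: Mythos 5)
Your proof is correct and follows essentially the same route as the paper: the paper disposes of $i\neq i'$ by independence and then simply cites the covariance structure of the multinomial distribution of $(\Kixy : (x,y)\in\D)$ with sample size $n_i$ and cell probabilities $\Di[x,y]$, which is exactly the fact you derive via the indicator decomposition. Your version is just a self-contained unpacking of that standard fact, and the key observation you flag (each execution records exactly one $(x,y)$ pair, so products of indicators for distinct pairs vanish) is precisely what underlies the multinomial covariance formula the paper invokes.
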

\begin{proof}
Let $i,i'\in\I\setminus\II$.
If $i \neq i'$ then $\Kixy$ and $K_{\!i'x'y'}$ are independent, hence their covariance is $0$.
Otherwise, the theorem follows from that fact that for each $i\in\I\setminus\II$, $(\Kixy \colon (x,y) \in \D)$ follows the multinomial distribution with the sample size $n_i$ and the probabilities $D_i[x,y]$ for $(x,y) \in \D$.
\end{proof}

\begin{lemma}[Covariance between $\Kixy$ and $L_{i'\cdot y'}$]\label{lem:covariance:Kxy:Ly}
For any $i,i'\in\I\setminus\II$,\,
$\Cov{\Kixy, L_{i'\cdot y'}}$ is given by:
\begin{align*}
\Cov{\Kixy, L_{i'\cdot y'}}
&=
\begin{cases}
0					&\mbox{ if $i \neq i'$} \\
  n_i \Di[x,y] (1 - \Dyi[y])	&\mbox{ if $i = i'$, $x = x'$ and $y = y'$} \\
- n_i \Di[x,y] \Dyi[y']  	&\mbox{ otherwise.}
\end{cases}
\end{align*}
\end{lemma}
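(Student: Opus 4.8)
Here is how I would approach Lemma~\ref{lem:covariance:Kxy:Ly}. The plan is to reduce everything to the already-established Lemma~\ref{lem:covariance:Kxy:Kxy} by expanding $L_{i'\cdot y'}$ into a sum of per-$(x',y')$ trace counts and exploiting bilinearity of covariance. First I would dispose of the case $i\neq i'$: here $\Kixy$ is a function of the $n_i$ traces obtained by executing component $S_i$, while $L_{i'\cdot y'}=\sum_{x'\in\D_{y'}}K_{i'x'y'}$ is a function of the $n_{i'}$ traces obtained by executing $S_{i'}$; since for $i,i'\in\I\setminus\II$ these executions are performed independently, the two random variables are independent and the covariance is $0$.

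The substantive case is $i=i'$. By bilinearity of covariance,
\begin{align*}
\Cov{\Kixy, L_{i\cdot y'}} = \sum_{x'\in\D_{y'}} \Cov{\Kixy, K_{ix'y'}},
\end{align*}
and each summand is supplied by Lemma~\ref{lem:covariance:Kxy:Kxy}. I would then split on whether $y=y'$. If $y\neq y'$, then for every $x'\in\D_{y'}$ the pairs $(x,y)$ and $(x',y')$ differ, so each term equals $-n_i\Di[x,y]\Di[x',y']$; summing and using $\sum_{x'\in\D_{y'}}\Di[x',y']=\Dyi[y']$ collapses the sum to $-n_i\Di[x,y]\Dyi[y']$. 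If $y=y'$ and $(x,y)\in\D$ (so that $x\in\D_y$), I would isolate the diagonal term $x'=x$, which contributes $n_i\Di[x,y](1-\Di[x,y])$, while the remaining terms contribute $-n_i\Di[x,y]\sum_{x'\in\D_y\setminus\{x\}}\Di[x',y]$; adding these and again recognizing the marginal $\sum_{x'\in\D_y}\Di[x',y]=\Dyi[y]$ gives $n_i\Di[x,y](1-\Dyi[y])$. (Here the hypothesis ``$x=x'$'' in the statement is immaterial, since $L_{i'\cdot y'}$ contains no free $x'$; the relevant dichotomy is $y=y'$ versus $y\neq y'$.)

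Finally I would handle the degenerate case $(x,y)\notin\D$ separately: there $\Kixy$ is almost surely $0$, so the covariance vanishes, which is consistent with the stated formulas because $\Di[x,y]=0$ in that case (and likewise $y'\notin\Y^+$ forces $\D_{y'}=\emptyset$ and $\Dyi[y']=0$). The only genuinely error-prone step is the separation of the diagonal term $x'=x$ from the off-diagonal terms in the subcase $y=y'$ and keeping the signs straight; apart from that, the proof is an immediate consequence of Lemma~\ref{lem:covariance:Kxy:Kxy} together with the definition $\Dyi[y]=\sum_{x}\Di[x,y]$.
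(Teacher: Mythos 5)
Your proposal is correct and follows essentially the same route as the paper: dispose of $i\neq i'$ by independence, expand $L_{i'\cdot y'}=\sum_{x'\in\D_{y'}}K_{i'x'y'}$, apply bilinearity together with Lemma~\ref{lem:covariance:Kxy:Kxy}, and split on $y=y'$ versus $y\neq y'$, collapsing the sum via $\sum_{x'\in\D_{y'}}\Di[x',y']=\Dyi[y']$. Your added remarks that the hypothesis $x=x'$ is immaterial and that the case $(x,y)\notin\D$ is degenerate are correct observations not made explicit in the paper, but they do not change the argument.
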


\begin{proof}
Let $i,i'\in\I\setminus\II$.
If $i \neq i'$ then $\Kixy$ and $L_{\!i'\cdot y'}$ are independent, hence their covariance is $0$.
Otherwise, the covariance $\Cov{\Kixy, L_{i\cdot y'}}$ is calculated as:
\begin{align*}
\Cov{\Kixy, L_{i\cdot y'}}
=
Cov\Bigl[ \Kixy,\, \sum_{x'\in\D_{y}} K_{\!ix'y} \Bigr]
=
\sum_{x'\in\D_y} \Cov{ \Kixy, K_{\!ix'y} }
{.}
\end{align*}
Hence, when $y = y'$:
\begin{align*}
\Cov{\Kixy, L_{i\cdot y'}}
&=
\displaystyle
n_i \Di[x,y] (1 - \Di[x,y])
-\hspace{-1ex}
\sum_{x'\in\D_y\setminus\{x\}}\hspace{-1.5ex} n_i \Di[x,y] \Di[x',y] \\
&=
\displaystyle
n_i \Di[x,y]
-\hspace{-1ex}
\sum_{x'\in\D_y} n_i \Di[x,y] \Di[x',y] \\
&=
\displaystyle
n_i \Di[x,y] (1 - \Dyi[y])
{.}
\end{align*}
When $y \neq y'$:
\begin{align*}
\Cov{\Kixy, L_{i\cdot y'}}
&=
\displaystyle
-\sum_{x'\in\D_y}\hspace{-0.5ex} n_i \Di[x,y] \Di[x',y] =
- n_i \Di[x,y] \Dyi[y]
{.}
\end{align*}
\end{proof}

\begin{lemma}[Covariance between $\Liy$ and $L_{i'\cdot y'}$]\label{lem:covariance:Ly:Ly}
For any $i\in\I\setminus\II$, 
$\Cov{\Liy, L_{i\cdot y'}}$ is given by:
\begin{align*}
\Cov{\Liy, L_{i'\cdot y'}}
&=
\begin{cases}
0					&\mbox{ if $i \neq i'$} \\
  n_i \Dyi[y] (1 - \Dyi[y])	&\mbox{ if $i = i'$, $x = x'$ and $y = y'$} \\
- n_i \Dyi[y] \Dyi[y']  	&\mbox{ otherwise.}
\end{cases}
\end{align*}
\end{lemma}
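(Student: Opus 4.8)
The plan is to establish Lemma~\ref{lem:covariance:Ly:Ly} as a direct consequence of the multinomial structure already exploited in the proof of Lemma~\ref{lem:covariance:Kxy:Kxy}, together with the computation pattern of Lemma~\ref{lem:covariance:Kxy:Ly}. First I would dispose of the case $i \neq i'$: since $\Liy = \sum_{x\in\D_y}\Kixy$ depends only on the traces of component $S_i$, and $L_{i'\cdot y'}$ only on those of $S_{i'}$, and the components are sampled independently, the two random variables are independent and hence have covariance $0$. This reuses verbatim the independence argument from the earlier lemmas.

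For the case $i = i'$, the key observation is that for $i\in\I\setminus\II$ the vector $(\Liy \colon y\in\Y^+)$ is itself a multinomial count vector: each of the $n_i$ traces of $S_i$ produces exactly one output value $y$ with probability $\Dyi[y] = \sum_{x\in\X}\Di[x,y]$, obtained by aggregating the joint-outcome multinomial over the input coordinate. (This aggregation fact is the same one used implicitly in Lemma~\ref{lem:covariance:Kxy:Ly} and in Proposition~\ref{prop:Y:expectation:symbolic}, where $\Liy$ was shown to follow the multinomial distribution with sample size $n_i$ and probabilities $\Dyi[y]$.) Then I would simply invoke the standard covariance formula for a multinomial count vector: $\Variance{\Liy} = n_i\Dyi[y](1-\Dyi[y])$ when $y=y'$, and $\Cov{\Liy,\Ljy} = -n_i\Dyi[y]\Dyi[y']$ when $y\neq y'$ (here with $i=i'$). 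Alternatively, if one prefers a self-contained derivation, one can expand $\Cov{\Liy,L_{i\cdot y'}} = \sum_{x\in\D_y}\sum_{x'\in\D_{y'}}\Cov{\Kixy,K_{ix'y'}}$ and substitute the three-case formula from Lemma~\ref{lem:covariance:Kxy:Kxy}, exactly as was done in Lemma~\ref{lem:covariance:Kxy:Ly}; the telescoping $n_i\Dyi[y] - \sum_{x'\in\D_{y'}} n_i\Dyi[y]\Di[x',y']$ collapses to $n_i\Dyi[y](1-\Dyi[y])$ in the diagonal case and to $-n_i\Dyi[y]\Dyi[y']$ off-diagonal.

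There is no real obstacle here — the lemma is a routine consequence of multinomial moment identities, and the only mild care needed is making explicit that summing a multinomial over a subset of coordinates yields again a multinomial (so that $\Liy$ genuinely has the claimed marginal law), which is standard. I would phrase the final proof in two sentences: one for the independence case and one reducing the $i=i'$ case to the known multinomial variance/covariance formulas already cited in the proof of Proposition~\ref{prop:Y:expectation:symbolic}. A minor editorial point: the statement's middle case is written ``if $i=i'$, $x=x'$ and $y=y'$'' but $\Liy$ carries no $x$ index, so in the write-up I would silently read this as ``if $i=i'$ and $y=y'$'' to match the other lemmas' formatting.
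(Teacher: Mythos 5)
Your proposal is correct and matches the paper's proof: the paper handles $i\neq i'$ by independence and the $i=i'$ case by expanding $\Cov{\Liy,L_{i\cdot y'}}=\sum_{x\in\D_y}\sum_{x'\in\D_{y'}}\Cov{\Kixy,K_{ix'y'}}$ and substituting Lemma~\ref{lem:covariance:Kxy:Kxy}, which is exactly your ``alternative'' derivation, while your primary route (noting that $(\Liy\colon y\in\Y^+)$ is itself multinomial with probabilities $\Dyi[y]$) is the same aggregation fact the paper already invokes in Proposition~\ref{prop:Y:expectation:symbolic}. Your editorial remark is also right: the condition ``$x=x'$'' in the middle case is a leftover from the format of Lemma~\ref{lem:covariance:Kxy:Kxy} and should read ``$i=i'$ and $y=y'$''.
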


\begin{proof}
Let $i,i'\in\I\setminus\II$.
If $i \neq i'$ then $\Liy$ and $L_{\!i'\cdot y'}$ are independent, hence their covariance is $0$.
Otherwise, the covariance is calculated as:
\begin{align*}
\Cov{\Liy, \Liy} =
Cov\Bigl[ \sum_{x\in\D_y} \Kixy,\, \sum_{x'\in\D_{y}} K_{\!ix'y} \Bigr] =
\sum_{x\in\D_y} \sum_{x'\in\D_y} \Cov{ \Kixy, K_{\!ix'y} }
{.}
\end{align*}
Hence, when $y = y'$:
\begin{align*}
\Cov{\Liy, \Liy}
=
\displaystyle
\sum_{x\in\D_y} \hspace{-0.5ex} \Bigl( n_i \Di[x,y] (1 - \Di[x,y])
-\hspace{-1ex}
\sum_{x'\in\D_y\setminus\{x\}}\hspace{-1.5ex} n_i \Di[x,y] \Di[x',y] \Bigr)
=
\displaystyle
n_i \Dyi[y] (1 - \Dyi[y])
{.}
\end{align*}
When $y \neq y'$:
\begin{align*}
\Cov{ \Liy, L_{i\cdot y'} }
=
\displaystyle
-\hspace{-0.5ex}\sum_{x\in\D_y} \sum_{x'\in\D_{y'}}\hspace{-1ex} n_i \Di[x,y] \Di[x',y']
=
\displaystyle
-n_i \Dyi[y] \Dyi[y']
{.}
\end{align*}
\end{proof}

\begin{lemma}[Covariance between $\Liy$ and $K_{i'\cdot y'}$]\label{lem:covariance:Ly:Ky}
For any $i\in\I\setminus\II$ and any $i'\in\II$,\, 
$\Cov{\Liy, K_{\!i'\cdot y'}} = 0$.
\end{lemma}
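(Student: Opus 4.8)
The plan is to observe that, unlike Lemmas~\ref{lem:covariance:Kxy:Kxy}--\ref{lem:covariance:Ly:Ly}, this statement is a pure independence fact and requires no multinomial computation whatsoever. Recall that $\Liy = \sum_{x\in\D_y} \Kixy$ is a deterministic function of the trace counts collected by running the component $S_i$, and that here $i \in \I\setminus\II$, so $S_i$ is analyzed by the standard sampling technique; whereas $K_{\!i'\cdot y'}$ with $i' \in \II$ counts the traces with output $y'$ obtained by running the component $S_{i'}$ (on a single fixed input value) under the abstraction-then-sampling technique.

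First I would note that $\I\setminus\II$ and $\II$ are disjoint, so $i \neq i'$. Since the sampling of execution traces is carried out independently for each component (the standing compositionality assumption of Section~\ref{sec:overview:hybrid}, already used in the proof of Proposition~\ref{prop:joint:expectation:symbolic} to kill the cross terms $\Expect{(\Kixy - \Kioxy)(\Kjy - \Kjoy)}$ for $i\neq j$), the family $(\Kixy)_{(x,y)\in\D}$ coming from $S_i$ is independent of the family $(K_{\!i'\cdot y'})_{y'\in\Y^+}$ coming from $S_{i'}$. Hence $\Liy$, being a function of the former, is independent of $K_{\!i'\cdot y'}$, and therefore
\[
\Cov{\Liy, K_{\!i'\cdot y'}}
= \Expect{\bigl(\Liy - \Expect{\Liy}\bigr)}\cdot\Expect{\bigl(K_{\!i'\cdot y'} - \Expect{K_{\!i'\cdot y'}}\bigr)}
= 0 .
\]

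There is essentially no obstacle: the only point to make explicit is the independence of sampling across distinct components, which is part of the framework's hypotheses. I would simply record it here so that the ``mixed'' cross terms (one index in $\I\setminus\II$, one in $\II$) arising in the Taylor expansion inside the proof of Theorem~\ref{thm:MI:variance:symbolic} can be dropped by citing this lemma, exactly as the analogous cross terms are dropped in the mean computation.
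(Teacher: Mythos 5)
Your proof is correct and matches the paper's, which likewise disposes of this lemma in one line by noting that $\Liy$ and $K_{\!i'\cdot y'}$ are independent because they arise from samples of distinct components ($i\in\I\setminus\II$ and $i'\in\II$ forces $i\neq i'$). Your additional remarks merely make explicit the cross-component independence assumption that the paper leaves implicit, so there is nothing to add.
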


\begin{proof}
The claim is immediate from the fact that $\Liy$ and $K_{\!i'\cdot y'}$ are independent.
\end{proof}

\begin{lemma}[Covariance between $\Kiy$ and $K_{i'\cdot y'}$]\label{lem:covariance:Ky:Ky}
For any $i,i'\in\II$, 
$\Cov{\Kiy, K_{i'\cdot y'}}$ is given by:
\begin{align*}
\Cov{\Kiy, K_{i'\cdot y'}}
&=
\begin{cases}
0					&\mbox{ if $i \neq i'$} \\
  n_i \Dyi[y] (1 - \Dyi[y])	&\mbox{ if $i = i'$ and $y = y'$} \\
- n_i \Dyi[y] \Dyi[y']  	&\mbox{ otherwise.}
\end{cases}
\end{align*}
\end{lemma}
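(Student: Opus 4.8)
The plan is to treat this exactly as in Lemmas~\ref{lem:covariance:Kxy:Kxy} and~\ref{lem:covariance:Ly:Ly}, reducing everything to the multinomial structure of the abstraction-then-sampling experiment. First I would recall, as already established in the proof of Proposition~\ref{prop:Y:expectation:symbolic}, that for each $i\in\II$ the analyst runs $S_i$ a total of $n_i$ times on a single fixed input value $x$, and $\Kiy$ counts those traces whose output is $y$. The one point that genuinely uses the hypothesis $i\in\II$ — and which I would verify up front — is that the per-execution output distribution does not depend on the input chosen for the sampling: since $i\in\II$ all non-zero rows of $S_i$'s sub-channel matrix are the same conditional distribution, so $\Di[y\mid x]=\sum_{x'\in\X}\pi_i[x']\,\Di[y\mid x']=\sum_{x'\in\X}\Di[x',y]=\Dyi[y]$. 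Hence $(\Kiy \colon y\in\Y^+)$ is a multinomial vector with sample size $n_i$ and probabilities $(\Dyi[y])_{y\in\Y^+}$.

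Given that, the two cases are routine. For $i\neq i'$ the sampling experiments on $S_i$ and $S_{i'}$ are carried out independently, so $\Kiy$ and $K_{i'\cdot y'}$ are independent and their covariance is $0$. For $i=i'$ I would invoke the standard covariance identity for the coordinates of a multinomial vector, namely $\Cov{\Kiy,\,K_{\!i\cdot y'}}=n_i\Dyi[y]\bigl(\mathbf{1}_{\{y=y'\}}-\Dyi[y']\bigr)$, which yields $n_i\Dyi[y](1-\Dyi[y])$ when $y=y'$ and $-n_i\Dyi[y]\Dyi[y']$ when $y\neq y'$; this is precisely the statement of the lemma (the ``$x=x'$'' appearing in the case split is vacuous, an artefact inherited from the shape of Lemma~\ref{lem:covariance:Ly:Ly}). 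The multinomial covariance identity itself can be obtained in one line by writing $\Kiy=\sum_{t=1}^{n_i}\mathbf{1}_{\{o_t=y\}}$ for i.i.d.\ output draws $o_1,\dots,o_{n_i}\sim\Dyi$ and expanding, exactly as in the proof of Lemma~\ref{lem:covariance:Kxy:Kxy}.

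There is essentially no obstacle here: the whole claim is the textbook multinomial covariance formula combined with independence across components, and the argument mirrors the two preceding lemmas almost verbatim. The only place that requires a moment's care is the reduction in the first paragraph — that sampling on a single fixed input produces output frequencies governed by the marginal $\Dyi$ rather than by an input-conditioned distribution — and once that is recorded the rest is immediate.
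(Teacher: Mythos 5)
Your proof is correct and follows essentially the same route as the paper's: independence across components for $i\neq i'$, and the standard multinomial covariance identity for $i=i'$, with the (worthwhile) extra check that sampling on a single fixed input yields output probabilities $\Dyi[y]$ because all non-zero rows of $S_i$'s matrix coincide. (One trivial slip: the case split of this lemma reads ``$i=i'$ and $y=y'$'' with no ``$x=x'$'' clause, so your parenthetical about that artefact applies to Lemma~\ref{lem:covariance:Ly:Ly}, not here.)
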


\begin{proof}
Let $i,i'\in\II$.
If $i \neq i'$ then $\Kiy$ and $K_{i'\cdot y'}$ are independent, hence their covariance is $0$.
Otherwise, the claim follows from that fact that for each $i\in\II$, $(\Kiy \colon y \in \Y_i^+)$ follows the multinomial distribution with the sample size $n_i$ and the probabilities $\Dyi[y]$ for $y \in \Y_i^+$.
\end{proof}

\resVarianceMISymbolic*

\begin{proof}
We first define $\Bixy$ and $\Biy$ by the following:
\begin{itemize}
\item 
$\Bixy \eqdef
\frac{\partial\fxy(\BKoxy)}{\partial \Kixy} =
\frac{\theta_i}{n_i} \large(1 + \log P_{XY}[x,y] \large)
$.
\item
For each $i\in\I\setminus\II$,
$\Biy \eqdef
\frac{\partial\fy(\BKoy)}{\partial \Liy} =
\frac{\theta_i}{n_i} \large(1 + \log P_{Y}[y] \large)
$.
\item
For each $i\in\II$,
$\Biy \eqdef
\frac{\partial\fy(\BKoy)}{\partial \Kiy} =
\frac{\theta_i}{n_i} \large(1 + \log P_{Y}[y] \large)
$.
\end{itemize}

Then the variance of $\hat{H}_\II(X, Y)$ is obtained from Lemmas~\ref{lem:covariance:Kxy:Kxy} and~\ref{lem:covariance:Ky:Ky} as follows:
\begin{align*}
\Variance{\hat{H}_\II(X, Y)}
&=
\Expect{\hat{H}_\II(X, Y)^2} - \left(\Expect{\hat{H}_\II(X, Y)}\right)^2 \\
&= \displaystyle \hspace{-1ex}
\sum_{i,i'\in\I\setminus\II}
\sum_{(x, y)\in\D} \sum_{(x', y')\in\D} \hspace{-1.5ex}
\Bixy B_{i'x'y'} \Cov{\Kixy, K_{\!i'x'y'}}
\\
&~~+ \displaystyle \hspace{-1ex}
\sum_{i,i'\in\II}
\sum_{(x, y)\in\D} \sum_{(x', y')\in\D}\hspace{-1.5ex}
\pi_i[x] \Bixy \pi_{i'}[x'] B_{i'x'y'} \Cov{\Kiy, K_{\!i'\cdot y'}}
+ \O(n_i^{-2}) \\
&= \displaystyle \hspace{-0.5ex}
\sum_{i\in\I\setminus\II}
\sum_{(x, y)\in\D}\hspace{-1.5ex}
n_i \Bigl(
\Bixy^2 \Di[x,y] (1 - \Di[x,y]) - \hspace{-4ex}\sum_{~~~~~(x',y')\in\D\setminus\set{(x,y)}\hspace{-4.5ex}}\hspace{-4.5ex} \Bixy B_{ix'y'} \Di[x,y] \Di[x',y']
\Bigr)
\\
&~~+
\sum_{i\in\II}\!
\sum_{(x, y)\in\D}\hspace{-1.5ex}
n_i \Bigl(
\pi_i[x]^2 \Bixy^2 \Dyi[y] (1 - \Dyi[y]) - \hspace{-4ex}
\sum_{~~~~~(x',y')\in\D\setminus\set{(x,y)}\hspace{-4.5ex}}\hspace{-4.5ex} \pi_i[x] \pi_i[x'] \Bixy B_{ix'y'} \Dyi[y] \Dyi[y']
\Bigr)
\\
&~~+ \O(n_i^{-2}) \\[1ex]
&= \displaystyle
\sum_{i\in\I\setminus\II}
n_i \hspace{-1ex}\sum_{(x, y)\in\D}\hspace{-1ex} \Di[x,y] \Bixy \Bigl(
\Bixy - \hspace{-1.0ex}\sum_{(x',y')\in\D}\hspace{-2ex}  B_{ix'y'} \Di[x',y']
\Bigr)
\\[0.2ex]
&~~+ \displaystyle
\sum_{i\in\II}
n_i \biggl(\, 
\sum_{y\in\Y^+} \Dyi[y] \Bigl(\sum_{x\in\D_y} \pi_i[x]^2 \Bixy^2 \Bigr) -
\Bigl(\sum_{(x, y)\in\D} \Dyi[y] \pi_i[x] \Bixy \Bigr)^{\!2}
\biggr)
+ \O(n_i^{-2})
{.}
\end{align*}

The variances of $\hat{H}_\II(Y)$ is obtained from Lemmas~\ref{lem:covariance:Ly:Ly} and~\ref{lem:covariance:Ky:Ky} as follows:
\begin{align*}
\Variance{\hat{H}_\II(Y)}
&=
\Expect{ \hat{H}_\II(Y)^2 } - \left(\Expect{\hat{H}_\II(Y)}\right)^2 \\
&= \displaystyle \hspace{-1.5ex}
\sum_{y,y'\in\Y^+}\hspace{-1ex}
\Bigl(
\sum_{i,i'\in\I\setminus\II}\hspace{-1.5ex}
\Biy B_{i'\cdot y'}
\Cov{\Liy, L_{\!i'\cdot y'}}
+\hspace{-1ex}
\sum_{i,i'\in\II}\hspace{-1ex}
\Biy B_{i'\cdot y'}
\Cov{\Kiy, K_{\!i'\cdot y'}}
\Bigr)
+ \O(n_i^{-2})
\\[0.2ex]
&= \displaystyle \hspace{-0.2ex}
\sum_{i\in\I}
n_i \hspace{-0.5ex}\sum_{y\in\Y^+}\hspace{-0.5ex}
\Dyi[y] \Biy
\Bigl(
 \Biy - \hspace{-0.5ex} \sum_{y'\in\Y^+}\hspace{-1.0ex} B_{i\cdot y'}\Dyi[y']
\Bigr)
+ \O(n_i^{-2})
\\[0.2ex]
&= \displaystyle \hspace{-0.2ex}
\sum_{i\in\I\setminus\II}n_i
\hspace{-2.0ex}\sum_{~~~~(x, y)\in\D}\hspace{-2ex} \Di[x,y]
\Biy \Bigl(\Biy - \hspace{-4ex}\sum_{~~~~~(x', y')\in\D}\hspace{-3.5ex} B_{iy'} \Di[x',y'] \Bigr)
\\
&\hspace{1ex}
+\sum_{i\in\II}
n_i \hspace{-0.5ex}\sum_{y\in\Y^+}\hspace{-0.5ex}
\Dyi[y] \Biy
\Bigl(
 \Biy - \hspace{-0.5ex} \sum_{y'\in\Y^+}\hspace{-1.0ex} B_{i\cdot y'}\Dyi[y']
\Bigr)
+ \O(n_i^{-2})
{.}
\end{align*}

Similarly, for $\B_{ix\cdot} = \frac{\theta_i}{n_i} \Bigl(1 + \log P_{X}[x] \Bigr)$,
the variance of $\hat{H}_\II(X)$ is given by:
\begin{align*}
\Variance{\hat{H}_\II(X)}
\displaystyle
&=
\sum_{i\in\I\setminus\II}\hspace{-0.5ex}
n_i \hspace{-0.5ex}\sum_{(x, y)\in\D}\hspace{-0.5ex}
\Di[x,y] B_{ix\cdot}
\Bigl(
 B_{ix\cdot} - \hspace{-0.5ex} \sum_{(x', y')\in\D}\hspace{-1.0ex} B_{i x'\cdot}\Di[x',y']
\Bigr)  + \O(n_i^{-2})
{,}
\end{align*}
which is symmetric to $\Variance{\hat{H}_\II(Y)}$ only w.r.t. $\I\setminus\II$
\footnote{Note that the abstraction-then-sampling relies on partial knowledge on the prior, i.e., 
the analyst knows $\pi_i[x]$ for all $i\in\II$, hence $\Variance{\hat{H}_\II(X)}$ has no term for $\II$.
On the other hand, the standard sampling here does not use knowledge on the prior.}.

The covariance between $\hat{H}_\II(X, Y)$ and $\hat{H}_\II(Y)$ is obtained from Lemmas~\ref{lem:covariance:Kxy:Ly}
 and~\ref{lem:covariance:Ky:Ky} as follows:
\begin{align*}
\Cov{\hat{H}_\II(Y), \hat{H}_\II(X, Y)}
&= \displaystyle
\sum_{i\in\I\setminus\II}
\sum_{(x, y)\in\D} \sum_{y'\in\Y^+}
\Bixy B_{i\cdot y'} \Cov{\Kixy, L_{i\cdot y'}} \\
&~~+ \displaystyle
\sum_{i\in\II}
\sum_{(x, y)\in\D} \sum_{y'\in\Y^+}
\pi_i[x] \Bixy B_{i\cdot y'} \Cov{\Kiy, K_{i\cdot y'}} 
+ \O(n_i^{-2}) \\
&= \displaystyle \hspace{-0.5ex}
\sum_{i\in\I\setminus\II}
n_i
\sum_{(x, y)\in\D} \Di[x,y] \Bixy
\Bigl( B_{i\cdot y} - \hspace{-1.5ex}\sum_{(x',y')\in\D}\hspace{-1.5ex}B_{i\cdot y'} \Di[x',y'] \Bigr) \\
&~~+ \displaystyle
\sum_{i\in\II}
n_i \hspace{-0.5ex}\sum_{(x, y)\in\D}\hspace{-0.5ex}
\Dyi[y] \pi_i[x]\Bixy
\Bigl(
 \Biy - \hspace{-0.5ex} \sum_{y'\in\Y^+}\hspace{-1.0ex} B_{i\cdot y'}\Dyi[y']
\Bigr)
+ \O(n_i^{-2})
\end{align*}

Similarly, the covariance between $\hat{H}_\II(X, Y)$ and $\hat{H}_\II(X)$ is given by:
\begin{align*}
\Cov{\hat{H}_\II(X), \hat{H}_\II(X, Y)}
= \displaystyle \hspace{-0.5ex}
\sum_{i\in\I\setminus\II}
n_i
\sum_{(x, y)\in\D} \Di[x,y] \Bixy
\Bigl( B_{i x \cdot} - \hspace{-1.5ex}\sum_{(x',y')\in\D}\hspace{-1.5ex}B_{i x'\cdot} \Di[x',y'] \Bigr)
+ \O(n_i^{-2})
\end{align*}

The covariance between $\hat{H}_\II(X)$ and $\hat{H}_\II(Y)$ is given by:
\begin{align*}
\Cov{\hat{H}_\II(X), \hat{H}_\II(Y)}
&= \displaystyle \hspace{-0.5ex}
\sum_{i\in\I\setminus\II}\hspace{-1ex}
n_i \hspace{-1ex}
\sum_{(x, y)\in\D}\hspace{-1ex} \Di[x,y] B_{ix\cdot}
\!\Bigl( B_{i\cdot y} - \hspace{-1.5ex}\sum_{(x',y')\in\D}\hspace{-1.5ex}B_{i\cdot y'} \Di[x',y'] \Bigr)
\!+\!\O(n_i^{-2})
\end{align*}

Therefore the variance of the mutual information is as follows:
\begin{align*}
\hspace{-2ex}
\Variance{\hat{I}_\II(X; Y)}
&= \displaystyle
\Variance{ \hat{H}_\II(X) + \hat{H}_\II(Y) - \hat{H}_\II(X, Y) }
\\[0.0ex]
&= \displaystyle
\Variance{\hat{H}_\II(X)} + \Variance{\hat{H}_\II(Y)} + \Variance{\hat{H}_\II(X, Y)}
+ 2 \Cov{\hat{H}_\II(X), \hat{H}_\II(Y)}
\\
&~~~~
- 2 \Cov{\hat{H}_\II(X), \hat{H}_\II(X, Y)}
- 2 \Cov{\hat{H}_\II(Y), \hat{H}_\II(X, Y)}
\\[0.2ex]
&= \displaystyle \hspace{-1.0ex}
\sum_{~~i\in\I\setminus\II}\hspace{-1ex} n_i
\hspace{-2.0ex}\sum_{~~~~(x, y)\in\D}\hspace{-3ex} \Di[x,y]
\\[-0.5ex]
&\hspace{15ex}
\biggl(\!
B_{ix\cdot} \Bigl(\!B_{ix\cdot} - \hspace{-4ex}\sum_{~~~~~(x', y')\in\D}\hspace{-3.5ex} B_{ix'\cdot} \Di[x',y'] \Bigr)
+ \Biy \Bigl(\!\Biy - \hspace{-4ex}\sum_{~~~~~(x', y')\in\D}\hspace{-3.5ex} B_{i\cdot y'} \Di[x',y'] \Bigr)
\\[-0.5ex]
&\hspace{15ex}
+ B_{ixy} \Bigl(\!B_{ixy} - \hspace{-4ex}\sum_{~~~~~(x', y')\in\D}\hspace{-3.5ex} B_{ix'y'} \Di[x',y'] \Bigr)
+ 2B_{ix\cdot} \Bigl(\!\Biy - \hspace{-4ex}\sum_{~~~~~(x', y')\in\D}\hspace{-3.5ex} B_{i\cdot y'} \Di[x',y'] \Bigr)
\\[-0.5ex]
&\hspace{15ex}
- 2\Bixy \Bigl(\!B_{ix\cdot} - \hspace{-4ex}\sum_{~~~~~(x', y')\in\D}\hspace{-3.5ex} B_{ix'\cdot} \Di[x',y'] \Bigr)
- 2\Bixy \Bigl(\!\Biy - \hspace{-4ex}\sum_{~~~~~(x', y')\in\D}\hspace{-3.5ex} B_{i\cdot y'} \Di[x',y'] \Bigr)
 \biggr)
\\[0.0ex]
&
\hspace{1.5ex}
+
\sum_{i\in\II} n_i\hspace{-0.5ex}
\sum_{y\in\Y^+}\hspace{-0.5ex}
\Dyi[y]
\biggl(
\Biy \Bigl(\Biy - \hspace{-1.0ex}\sum_{y'\in\Y^+} B_{i\cdot y'} \Dyi[y'] \Bigr)
\\[-0.5ex]
&\hspace{21ex}
+ \hspace{-0.5ex}\sum_{x\in\D_y}\hspace{-0.5ex}
\pi_i[x]B_{ixy} \sum_{x'\in\X^+}\hspace{-1ex}\pi_i[x']\Bigl(B_{ix'y} - \hspace{-1.5ex}\sum_{y'\in\D_{x'}} B_{ix'y'} \Dyi[y'] \Bigr)
\\[-0.5ex]
&\hspace{21ex}
-2 \hspace{-0.5ex}\sum_{x\in\D_y}\hspace{-0.5ex}\pi_i[x]\Bixy
\Bigl(
 \Biy - \hspace{-0.5ex} \sum_{y'\in\Y^+}\hspace{-1.0ex} B_{i\cdot y'}\Dyi[y']
\Bigr)
\biggr)
+\, \O(n_i^{-2})
\\[0.0ex]
&= \displaystyle \hspace{-0.5ex}
\sum_{i\in\I\setminus\II} \frac{\theta_i^2}{n_i}
\Biggl(
\sum_{(x, y)\in\D}\hspace{-1.5ex} \Di[x,y]
\Bigl(1 + {\textstyle \log\!\frac{P_X[x]P_Y[y]}{P_{XY}[x,y]}} \Bigr)^{\!2}\hspace{-1ex} -
\biggl(\sum_{(x, y)\in\D}\hspace{-1.5ex} \Di[x,y]
\Bigl(1 + {\textstyle \log\!\frac{P_X[x]P_Y[y]}{P_{XY}[x,y]}} \Bigr) \biggr)^{\!2}
\Biggr)
\\[-0.5ex]
&\hspace{2ex}
+\hspace{-0.5ex}
\sum_{i\in\II} \frac{\theta_i^2}{n_i}
\Biggl(\!
\sum_{~~y\in\Y^+}\hspace{-1ex}
\Dyi[y]
\Bigl(
\log P_Y[y] - 
\sum_{x\in\X} \pi_i[x] \log P_{XY}[x,y]
\Bigr)^{\!2}
\\[-0.5ex]
&\hspace{12ex}
-
\biggl(
\sum_{~~y\in\Y^+}\hspace{-1ex}
\Dyi[y]
\Bigl(
\log P_Y[y] - 
\sum_{x\in\X} \pi_i[x] \log P_{XY}[x,y]
\Bigr)
\biggr)^{\!2}
\Biggr)\!
+\, \O(n_i^{-2})
\end{align*}
\end{proof}

\subsection{Proof for the Variance Estimation Using Only the Standard Sampling}
\label{subsec:proof:var:standard}

In this section we present the proofs for Theorem~\ref{lem:MI:variance:general} in Section~\ref{subsec:eval-accuracy} and Proposition~\ref{lem:entropy:variance} in Section~\ref{sec:other-measures}.

\resVarianceMI*

\begin{proof}
If $\II = \emptyset$, then $\Expect{\hat{I}(X; Y)} = \Expect{\hat{I}_\II(X; Y)}$.
Hence the claim follows from Theorem~\ref{thm:MI:variance:symbolic}.
\end{proof}

\resVarianceShannonEntropy*

\begin{proof}
Let $\II = \emptyset$ and $\B_{ix\cdot} = \frac{\theta_i}{n_i} \Bigl(1 + \log P_{X}[x] \Bigr)$.
Then by the proof of Theorem~\ref{thm:MI:variance:symbolic} in Appendix~\ref{subsec:proof:var:ATS}, we have:
\begin{align*}
\Variance{\hat{H}_\II(X)}
\displaystyle
&=
\sum_{i\in\I}
n_i \hspace{-0.5ex}\sum_{(x, y)\in\D}\hspace{-0.5ex}
\Di[x,y] B_{ix\cdot}
\Bigl(
 B_{ix\cdot} - \hspace{-0.5ex} \sum_{(x', y')\in\D}\hspace{-1.0ex} B_{i x'\cdot}\Di[x',y']
\Bigr)  + \O(n_i^{-2})
\\
&=
\sum_{i\in\I}
n_i \biggl(\,
\sum_{(x, y)\in\D}\hspace{-0.5ex}
\Di[x,y] B_{ix\cdot}^2
- \hspace{-0.5ex}
\Bigl( \sum_{(x, y)\in\D}\hspace{-0.5ex} \Di[x,y] B_{i x\cdot} \Bigr)^{\!2}
\biggr)
+ \O(n_i^{-2})
\\
&=
\sum_{i\in\I} \frac{\theta_i^2}{n_i}
\Biggl(
\hspace{-1.2ex}\sum_{\hspace{1.5ex}x\in\X^+}\hspace{-1.8ex} \Dxi[x]
\Bigl(1 + \log P_X[x] \Bigr)^{\!2}
\!-\!
\biggl(\hspace{-1.2ex}\sum_{\hspace{1.5ex}x\in\X^+}\hspace{-1.8ex} \Dxi[x]
\Bigl(1 + \log P_X[x] \Bigr) \biggr)^{\!2}
\Biggr)\!
+ \O(n_i^{-2})
{.}
\end{align*}

\end{proof}

\subsection{Proofs for Adaptive Analysis}\label{subsec:proof:adaptive}

In this section we present the proofs for the results in Section~\ref{sec:adaptive}.
To prove these it suffices to show the following proposition:

\begin{proposition}\label{prop:adaptive} \rm
Let $v_1, v_2, \ldots , v_m$ be $m$ positive real numbers.
Let $n, n_1, n_2, \ldots, n_m$ be $(m+1)$ positive real numbers such that $\sum_{i = 1}^{m} n_i = n$.
Then
\begin{align*}
\sum_{i = 1}^{m} \frac{v_i}{n_i} \ge
\frac{1}{n}\left( \sum_{i = 1}^{m} \sqrt{v_i} \right)^{\!2}.
\end{align*}
The equality holds when 
$\displaystyle n_i = \frac{\sqrt{v_i}n}{\sum_{j = 1}^{m} \sqrt{v_j}}$
for all $i = 1, 2, \ldots, m$.
\end{proposition}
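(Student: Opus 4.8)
The plan is to recognize Proposition~\ref{prop:adaptive} as a direct instance of the Cauchy--Schwarz inequality. Since all the $v_i$ and $n_i$ are positive, I would first rewrite each $\sqrt{v_i}$ as the product $\sqrt{v_i/n_i}\cdot\sqrt{n_i}$. Applying Cauchy--Schwarz to the vectors $\bigl(\sqrt{v_i/n_i}\bigr)_{i=1}^m$ and $\bigl(\sqrt{n_i}\bigr)_{i=1}^m$ then gives
\[
\Bigl(\sum_{i=1}^m \sqrt{v_i}\Bigr)^{\!2}
= \Bigl(\sum_{i=1}^m \sqrt{\tfrac{v_i}{n_i}}\,\sqrt{n_i}\Bigr)^{\!2}
\le \Bigl(\sum_{i=1}^m \tfrac{v_i}{n_i}\Bigr)\Bigl(\sum_{i=1}^m n_i\Bigr)
= n\sum_{i=1}^m \tfrac{v_i}{n_i},
\]
where the constraint $\sum_{i=1}^m n_i = n$ is used in the last equality. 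Dividing through by $n$ yields the claimed lower bound on $\sum_{i} v_i/n_i$.

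For the equality case I would invoke the equality condition of Cauchy--Schwarz: equality holds precisely when the two vectors are linearly dependent, i.e.\ there is a constant $c$ with $\sqrt{v_i/n_i} = c\sqrt{n_i}$ for every $i$, equivalently $n_i^2 = v_i/c^2$, so $n_i$ is proportional to $\sqrt{v_i}$. Writing $n_i = \kappa\sqrt{v_i}$ and summing over $i$, the constraint $\sum_i n_i = n$ forces $\kappa = n\big/\sum_{j=1}^m \sqrt{v_j}$, which is exactly the asserted optimal assignment $n_i = \sqrt{v_i}\,n\big/\sum_{j=1}^m \sqrt{v_j}$. As a consistency check, substituting this choice back into $\sum_i v_i/n_i$ gives $\bigl(\sum_j\sqrt{v_j}\bigr)\bigl(\sum_i\sqrt{v_i}\bigr)/n = \tfrac1n\bigl(\sum_i\sqrt{v_i}\bigr)^2$, matching the bound.

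There is no genuine obstacle here; the argument is a one-line application of a classical inequality, and the only points requiring a word of care are that the rewriting $\sqrt{v_i}=\sqrt{v_i/n_i}\,\sqrt{n_i}$ is licit (guaranteed by positivity of the $v_i$ and $n_i$) and that the proportionality together with the affine constraint $\sum_i n_i = n$ determines the $n_i$ uniquely. An alternative route would minimize $\sum_i v_i/n_i$ over $\{n_i>0 : \sum_i n_i = n\}$ via Lagrange multipliers, whose stationarity condition $v_i/n_i^2=\lambda$ yields the same proportionality, but that approach would additionally need convexity of $n_i\mapsto v_i/n_i$ to certify a global minimum, so I prefer the Cauchy--Schwarz proof. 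Finally, Theorem~\ref{thm:adaptive-sampling}, Proposition~\ref{prop:adaptive-sampling:shannon}, Proposition~\ref{prop:adaptive-sampling-known-prior}, and Theorem~\ref{thm:adaptive-sampling-ATS} each follow by applying Proposition~\ref{prop:adaptive} with the corresponding intermediate variances ($v_i$, $v_i'$, $v_{ix}$, and $v_i^{\star}$) in place of the $v_i$ above.
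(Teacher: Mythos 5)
Your proof is correct, but it takes a genuinely different route from the paper's. The paper proceeds by induction on $m$: it first establishes the two-term case $\frac{v_1}{n_1}+\frac{v_2}{n_2}\ge\frac{(\sqrt{v_1}+\sqrt{v_2})^2}{n_1+n_2}$ by clearing denominators and completing a square (the difference reduces to $(n_2\sqrt{v_1}-n_1\sqrt{v_2})^2\ge 0$), and then folds the first $m-1$ terms into a single term via the induction hypothesis before applying the two-term inequality once more; the equality case is then checked by direct substitution. Your argument instead recognizes the statement as the Engel/Cauchy--Schwarz form of the same inequality and dispatches it in one line, with the equality condition falling out of the proportionality criterion for Cauchy--Schwarz. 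The two proofs are essentially the same inequality in different packaging --- the paper's $m=2$ base case is exactly the two-term Cauchy--Schwarz step, and its induction rebuilds the general case from scratch --- but yours is shorter and yields slightly more: it characterizes equality as holding \emph{if and only if} $n_i\propto\sqrt{v_i}$, whereas the paper only verifies that the stated choice of $n_i$ attains equality (which is all the proposition claims). The paper's version buys self-containedness at the cost of length; yours buys brevity and the full equality characterization at the cost of invoking a classical result. Both are complete and correct.
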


\begin{proof}
The proof is by induction on $m$.
When $m = 1$ the equality holds trivially.
When $m = 2$ it is sufficient to prove 
\begin{align}\label{eq:adaptive:m2}
\frac{v_1}{n_1} + \frac{v_2}{n_2} \ge
\frac{\left( \sqrt{v_1} + \sqrt{v_2} \right)^2}{n_1 + n_2}.
\end{align}
By $n_1, n_2 > 0$, this is equivalent to 
$
(n_1 + n_2) (n_2 v_1 + n_1 v_2) \ge
n_1 n_2 \left( \sqrt{v_1} + \sqrt{v_2} \right)^2.
$
We obtain this by:
\begin{align*}
(n_1 + n_2) (n_2 v_1 + n_1 v_2) - n_1 n_2 \left( \sqrt{v_1} + \sqrt{v_2} \right)^2
=\, &
(n_1 + n_2) n_2 v_1 + (n_1 + n_2) n_1 v_2 - n_1 n_2 \left( v_1 + 2\sqrt{v_1v_2} + v_2 \right)
\\ =\, &
n_2^2 v_1 + n_1^2 v_2 - 2n_1 n_2 \sqrt{v_1v_2}
\\ =\, &
(n_2 \sqrt{v_1} - n_1 \sqrt{v_2})^2
\\ \ge\, &
0.
\end{align*}

Next we prove the inductive step as follows.
\begin{align*}
\sum_{i = 1}^{m} \frac{v_i}{n_i}
&=
\biggl(\, \sum_{i = 1}^{m-1} \frac{v_i}{n_i} \biggr) + \frac{v_m}{n_m}
\\[-2ex] & \ge
\frac{1}{n_1+\ldots+n_{m-1}}\left( {\textstyle \sum_{i = 1}^{m-1} \sqrt{v_i} } \right)^2 + \frac{\sqrt{v_m}^2}{n_m}
& \mbox{ (by induction hypothesis) }
\\[-0.0ex] & \ge
\frac{1}{(n_1+\ldots+n_{m-1})+n_{m}}\biggl( \sqrt{\Bigl( {\textstyle \sum_{i = 1}^{m-1} \sqrt{v_i} } \Bigr)^2} + \sqrt{v_m} \biggr)^{\!2}
& \mbox{ (by Equation (\ref{eq:adaptive:m2})) }
\\[-0.5ex] & =
\frac{1}{n_1+\ldots+n_{m}}\Bigl( {\textstyle \sum_{i = 1}^{m-1} \sqrt{v_i} + \sqrt{v_m} } \Bigr)^{\!2}
\\[-0.5ex] & =
\frac{1}{n}\Bigl( {\textstyle \sum_{i = 1}^{m} \sqrt{v_i} } \Bigr)^{\!2}.
\end{align*}

Finally, when $n_i = \frac{\sqrt{v_i}n}{\sum_{j = 1}^{m} \sqrt{v_j}}$
for all $i = 1, 2, \ldots, m$, 
then
$
\sum_{i = 1}^{m} \frac{v_i}{n_i}
=
\sum_{i = 1}^{m} \frac{v_i (\sum_{j = 1}^{m} \sqrt{v_j})}{\sqrt{v_i}n}
=
\frac{1}{n}\biggl( \sum_{i = 1}^{m} \sqrt{v_i} \biggr)^{\!2}.
$
\end{proof}

\resAdaptiveMISymbolic*

\begin{proof}
By Proposition~\ref{prop:adaptive} the variance $v = \sum_{i\in\I} \frac{v_i^{\star}}{n_i}$ of estimated mutual information is minimised when $n_i = \frac{\sqrt{v_i^{\star}}n}{\sum_{j = 1}^{m} \sqrt{v_j^{\star}}}$.
Hence the theorem follows.
\end{proof}

\resOptimalSS*

\begin{proof}
Let $\II = \emptyset$.
Then this theorem immediately follows from Theorem~\ref{thm:adaptive-sampling-ATS}.
\end{proof}

\resOptimalSSforShannon*

\begin{proof}
By Proposition~\ref{prop:adaptive} the variance $v = \sum_{i\in\I} \frac{v'_i}{n_i}$ of estimated Shannon entropy is minimised when $n_i = \frac{\sqrt{v'_i}n}{\sum_{j = 1}^{m} \sqrt{v'_j}}$.
Hence the proposition follows.
\end{proof}

\resSampleSizesKnownPrior*

\begin{proof}
By Proposition~\ref{prop:adaptive} the variance $v = \sum_{i\in\I} \sum_{x\in\X^+}\frac{v_{ix}}{n_i}$ of estimated mutual information is minimised when $n_i\lambda_i[x] = \frac{\sqrt{v_{ix}}n}{\sum_{j = 1}^{m} \sqrt{v_{jx}}}$.
Hence the theorem follows.
\end{proof}

\subsection{Proof for the Estimation Using the Knowledge of Priors}
\label{subsec:proof:est:prior}

In this section we present the proof for Propositions~\ref{lem:MI:expectation:KnownPrior} and~\ref{lem:MI:variance:KnownPrior} in Section~\ref{subsec:better-known-prior}, i.e., the result on variance estimation using the knowledge of the prior.

\resMeanMIKnownPrior*

\begin{proof}
Since the precise prior is provided to the analyst, we have
\begin{align*}
\Expect{\hat{I}(X; Y)} = \Expect{H(X)} + \Expect{\hat{H}(Y)} - \Expect{\hat{H}(X, Y)}.
\end{align*}
By using the results on $\Expect{\hat{H}(Y)}$ and $\Expect{\hat{H}(X, Y)}$ in the proof of Theorem~\ref{lem:MI:expectation:general}, we obtain the proposition.
\end{proof}

\resVarianceMIKnownPrior*

\begin{proof}
Since the precise prior is provided to the analyst, we have
\begin{align*}
\Variance{\hat{I}(X; Y)}
&= \displaystyle
\Variance{ H(X) + \hat{H}(Y) - \hat{H}(X, Y) } \\[0ex]
&= \displaystyle
\Variance{\hat{H}(Y)} + \Variance{\hat{H}(X, Y)}
- 2 \Cov{\hat{H}(Y), \hat{H}(X, Y)}.
\end{align*}
By using the results on $\Variance{\hat{H}(Y)}$, $\Variance{\hat{H}(X, Y)}$ $\Cov{\hat{H}(Y), \hat{H}(X, Y)}$ in the proof of Theorem~\ref{lem:MI:variance:general}, we obtain the proposition.
\end{proof}

\label{lastpage}

\end{document}